\documentclass[11pt,twoside]{article} 
\usepackage[utf8]{inputenc} \usepackage[T1]{fontenc} 
\usepackage[a4paper, left=25mm, right=25mm, top=25mm, bottom=25mm]{geometry} 
\usepackage{graphicx} \usepackage{setspace} \usepackage{xcolor} \usepackage{colortbl} 

\usepackage[switch,mathlines]{lineno}

\usepackage{cite} 
\usepackage{amsmath,amssymb,amsthm} \usepackage{mathtools} \mathtoolsset{showonlyrefs} \usepackage{mathrsfs} \usepackage{bm} 
\usepackage{circuitikz} 
\usepackage{tikz-cd} \usetikzlibrary{cd} 
\usepackage{enumerate}
\usepackage{marginnote}

\usepackage{imakeidx} \makeindex[intoc]

\usepackage[nottoc]{tocbibind} 

\usepackage[colorlinks=true]{hyperref} \hypersetup{urlcolor=blue, citecolor=red, linkcolor=blue} 

\theoremstyle{plain}
\newtheorem{theorem}{Theorem}[section]
\newtheorem{proposition}[theorem]{Proposition} 
\newtheorem{lemma}[theorem]{Lemma}

\theoremstyle{definition}
\newtheorem{definition}[theorem]{Definition}

\theoremstyle{remark}
\newtheorem{remark}[theorem]{Remark}
\newtheorem{example}[theorem]{Example} 

\newcommand{\R}{\mathbb{R}}
\newcommand{\dd}{\mathrm{d}}
\renewcommand{\d}{\mathrm{d}}
\newcommand{\Cinfty}{\mathscr{C}^\infty}
\newcommand{\T}{\mathrm{T}}
\newcommand{\cT}{\mathrm{T}^\ast}

\newcommand*{\inn}[1]{\iota_{#1}}
\newcommand{\Lie}{\mathscr{L}}

\DeclareMathOperator{\Ima}{Im}

\DeclareMathOperator{\rk}{rank}
\DeclareMathOperator{\cork}{corank}

\DeclareMathOperator{\pr}{pr}
\newcommand\restr[2]{\left.#1\right|_{#2}}     \newcommand{\parder}[2]{\frac{\partial #1}{\partial #2}}

\usepackage{fancyhdr}

\begin{document} \parskip=3pt 

\vspace{5em} {\huge\sffamily\raggedright \begin{spacing}{1.1} Hamilton--Jacobi theory for non-conservative field theories in the $k$-contact framework \end{spacing} } \vspace{2em} 
{\Large\raggedright\sffamily Javier de Lucas }\vspace{1mm}\newline {\raggedright Centre de Recherches Math\'ematiques, Universit\'e de Montr\'eal,\\ Succ. Centre-Ville, CP6128,  Montréal (Québec), Canada H3C 3J7.\\ Department of Mathematical Methods in Physics, University of Warsaw,\\ ul. Pasteura 5, 02-093 Warszawa, Poland.\\ e-mail: \href{mailto:javier.de.lucas@fuw.edu.pl}{javier.de.lucas@fuw.edu.pl} --- orcid: \href{https://orcid.org/0000-0001-8643-144X}{0000-0001-8643-144X} } \bigskip \\{\Large\raggedright\sffamily Julia Lange }\vspace{1mm}\newline {\raggedright Department of Mathematical Methods in Physics, University of Warsaw,\\ ul. Pasteura 5, 02-093 Warszawa, Poland.\\ e-mail: \href{mailto:j.lange2@uw.edu.pl}{j.lange2@uw.edu.pl} --- orcid: \href{https://orcid.org/0000-0001-6516-0839}{0000-0001-6516-0839} } \bigskip \\{\Large\raggedright\sffamily Xavier Rivas }\vspace{1mm}\newline {\raggedright Department of Computer Engineering and Mathematics, Universitat Rovira i Virgili,\\ Avinguda Països Catalans 26, 43007 Tarragona, Spain.\\ e-mail: \href{mailto:xavier.rivas@urv.cat}{xavier.rivas@urv.cat} --- orcid: \href{https://orcid.org/0000-0002-4175-5157}{0000-0002-4175-5157} } \bigskip \\{\Large\raggedright\sffamily Cristina Sardón }\vspace{1mm}\newline {\raggedright Departamento de Matemática Aplicada, Universidad Politécnica de Madrid,\\ Av. Juan de Herrera 6, 28040 Madrid, Spain.\\ e-mail: \href{mailto:mariacristina.sardon@upm.es}{mariacristina.sardon@upm.es} --- orcid: \href{https://orcid.org/0000-0001-9237-4373}{0000-0001-9237-4373} } \vspace{2em}

{\large\bfseries\raggedright Abstract}\vspace{1mm}\newline
{\raggedright
This article develops a Hamilton--Jacobi theory for non-conservative classical field theories, with particular emphasis on dissipative systems, in the framework of co-oriented \(k\)-contact geometry. We introduce evolution \(k\)-contact \(k\)-vector fields, extending the contact evolution formalism to field theories, and analyse the corresponding Hamilton--De Donder--Weyl equations. Moreover, we develop two distinct families of Hamilton--Jacobi theories: a \(z\)-independent approach, based on the reconstruction of the dynamics from an integrable \(k\)-vector field defined on the base manifold of \(\bigoplus^k\T^*Q\times\mathbb{R}^k\to Q\), and a \(z\)-dependent approach, where the integrable \(k\)-vector field is defined on the base manifold of \(\bigoplus^k\T^*Q\times\mathbb{R}^k\to Q\times\mathbb{R}^k\). We develop in detail the important case of Hamiltonian functions with affine dependence on the dissipative variables, show how quadratic dependence on these variables can be used structurally to enlarge the range of applications, and recover the ordinary contact Hamilton--Jacobi theory as the particular case \(k=1\), while removing some technical assumptions appearing in previous formulations. Our theory is illustrated through several representative examples, including the telegrapher/Klein--Gordon equation, a dissipative Hunter--Saxton equation, a simple dissipative non-regular first-order field model, and a relativistic thermodynamic model.
}

\bigskip

{\large\bfseries\raggedright Keywords:}
Hamilton--Jacobi equation, $k$-contact manifold, evolution $k$-vector field, relativistic thermodynamics, telegrapher equation, Hunter--Saxton equation, dissipative system.

\medskip

{\large\bfseries\raggedright MSC2020 codes:}
34A26, 35F21, (primary), 53D10, 35A09 (secondary)
\newpage


{\setcounter{tocdepth}{2}
\def\baselinestretch{1}
\small
\def\addvspace#1{\vskip 1pt}
\parskip 0pt plus 0.1mm
\tableofcontents
}

\pagestyle{fancy}

\fancyhead[LE]{Hamilton--Jacobi theory for non-conservative field theories in the $k$-contact framework} 
\fancyhead[RE]{}
\fancyhead[RO]{J. de Lucas, J. Lange, X. Rivas and C. Sardón} 
\fancyhead[LO]{}    

\fancyfoot[L]{}     
\fancyfoot[C]{\thepage}                  
\fancyfoot[R]{}            


\renewcommand{\headrulewidth}{0.1pt}  
\renewcommand{\footrulewidth}{0pt}    

\renewcommand{\headrule}{%
    \vspace{3pt}                
    \hrule width\headwidth height 0.4pt 
    \vspace{0pt}                
}

\setlength{\headsep}{30pt}  

 
\section{Introduction}

Classical field theories have a long history in physics, since they describe the evolution of physical fields such as electromagnetic or gravitational ones. From a geometric viewpoint, one may regard them as a generalisation of the Hamiltonian formalism of classical mechanics obtained by allowing for several independent variables \cite{Pri_14,GRR_20,LMM_03,KT_79,BC_23,CRS_00,LSV_15}. Field theory was later adapted to the quantum realm \cite{BrunettiFredenhagenVerch2003,Fewster2015,Rejzner2016}, but in this work we focus exclusively on classical field theories. Nonetheless, establishing rigorous geometric foundations for classical field theory is far from trivial, and modern differential-geometric formulations of variational calculus and field theory were developed gradually, with gauge theories and related geometric methods providing a major stimulus \cite{AA_80,LR_85,LSV_15,KT_79}.

Among geometric formalisms for classical field theories, the $k$-symplectic (also called $k$-polysymplectic) framework ($k\geq 1$) plays a distinguished role \cite{Awa_92,Gun_87,LSV_15}. It generalises the standard symplectic formalism of autonomous mechanics to field theories \cite{Gun_87}, and is particularly well suited to models whose Hamiltonian or Lagrangian functions do not depend explicitly on the space-time coordinates \cite{Awa_92,Awa_94,AG_00}. In this framework, the analogue of Hamilton equations is given by systems of partial differential equations known as the Hamilton--De Donder--Weyl (HdDW) equations. Moreover, $k$-symplectic geometry is also useful beyond classical field theory; for instance, it has been employed in the study of ordinary differential equations \cite{LV_15,LS_20}.

A characteristic feature of field theory is that its geometric dynamics is naturally described by means of $k$-vector fields. However, although integral sections of $k$-vector fields solving the geometric HdDW equations provide solutions of the corresponding HdDW equations, the converse need not hold due to the lack of integrability of certain $k$-vector fields. This phenomenon already appears in other geometric descriptions of field theories, such as multisymplectic geometry \cite{EMR_99,ELSZ_21}, and it also arises in the framework considered here, namely that of co-oriented $k$-contact manifolds \cite{LRS_25,GGMRR_20,GGMRR_21,GRR_22,Riv_22}.

This work aims to study Hamilton--Jacobi theories for field theories in the $k$-contact geometric setting. 
A co-oriented $k$-contact manifold is a manifold $M$ endowed with a $k$-contact form $\bm\eta=\sum_{\alpha=1}^k\eta^\alpha\otimes e_\alpha$, namely an $\mathbb{R}^k$-valued one-form on $M$ with $\{e_1,\ldots,e_k\}$ being the canonical basis of $\mathbb{R}^k$ such that
\(
\ker \bm\eta\oplus \ker \d\bm\eta=\T M,
\)
where $\ker\bm\eta=\cap_{\alpha=1}^k\ker \eta^\alpha$ is assumed to be a nonzero distribution of corank $k$ and $ \ker\dd\bm\eta=\cap_{\alpha=1}^k\ker \dd \eta^\alpha$. The case $k=1$ corresponds to co-oriented contact manifolds, which have been extensively studied in the literature \cite{LGLRR_20,LL_20,LL_19,ELLM_21,GGMRR_20a,SMLL_21}. In this sense, $k$-contact geometry may be understood as a natural extension of contact geometry.

As contact geometry can be used to study dynamical systems \cite{BCT_17,KT_10,LL_19}, $k$-contact geometry can be viewed as a modern generalisation of contact geometry to analyse field theories \cite{LL_19,GGMRR_20,LLM_21}. In this context, $k$-contact geometry provides the appropriate framework to incorporate dissipative effects in a genuinely field-theoretic manner. Its first systematic appearance can be traced to the $k$-contact Hamiltonian formalism for dissipative fields developed in \cite{GGMRR_20}, which showed that contact-type dissipation admits a natural multi-time geometric description. Moreover, \cite{Riv_22}  and related works provide a solid geometric background linking contact mechanics and field theory with the original motivations of the subject. 

More recently, the geometric foundations of $k$-contact geometry have been clarified in \cite{LRS_25}, where the basic structure of co-oriented $k$-contact manifolds, their Reeb distributions, $k$-contact distributions, characteristics for Lie symmetries, Darboux coordinates, and the role of polarisations are studied in detail. 

Current developments show that $k$-contact geometry is already relevant for concrete applications: it underlies geometric models for dissipative field equations, pseudo-gauge transformations and constitutive structures in relativistic hydrodynamics, and thermodynamics \cite{GGMRR_20,GGMRR_21,HLM_26}, and it may also be applied to develop methods for standard ordinary differential equations and control systems \cite{CAR_25, deLucasRivasSobczak2026,LRS_25,SF_25}. Altogether, these works indicate that $k$-contact geometry is becoming a natural language for certain classes of classical field theories, especially dissipative ones.

To illustrate the applications of $k$-contact geometry, one can describe geometrically several non-conservative field theories. A relevant guiding example is the damped vibrating membrane \cite{Rao2007,Graff1975,MorseIngard1968},
\begin{equation*} 
u_{tt}-\nu^2(u_{xx}+u_{yy})+\lambda u_t+\kappa u=0,
\end{equation*}
where $\nu$ is the wave speed in the medium, $\lambda$ and $\kappa$ are scalar real parameters, and $u=u(t,x,y)$ represents the displacement of the membrane. This equation serves as an introductory example and motivates the role of Hamiltonians that are affine in the variables that will hereafter be called dissipative variables and that naturally appear in $k$-contact geometry.

The concept of contact Hamiltonian vector field can be extended to $k$-contact geometry in different manners \cite{LRS_25,Riv_22}. In particular, we first study classical $k$-contact Hamiltonian $k$-vector fields in co-oriented $k$-contact manifolds \cite{Riv_22}. 
A relevant point, which becomes crucial for Hamilton--Jacobi theory, is that different $k$-contact Hamiltonian $k$-vector fields may be associated with the same Hamiltonian function. This non-uniqueness \cite{HLM_26} is described here by the kernel of a natural vector bundle morphism
\[
\textstyle\chi\colon (v_1,\dotsc,v_k)\in 
\bigoplus^k \T M\longmapsto \left(\sum_{\alpha=1}^k\iota_{v_\alpha}\d\eta^\alpha,\sum_{\alpha=1}^k\iota_{v_\alpha}\eta^\alpha\right)\in \T^*M\times\mathbb R\,,
\]
which is injective in the standard co-oriented contact case, where $k=1$. This motivates us to call $k$-vector fields taking values in $\ker\chi$ {\it gauge} $k$-vector fields or ${\bm\eta}$-gauge $k$-vector fields to emphasise the co-oriented $k$-contact manifold we are considering. 

This work also extends to co-oriented $k$-contact manifolds the notion of evolution Hamiltonian vector field, which appears in the contact-geometric approach to thermodynamic systems \cite{LL_19}. More precisely, we introduce the notion of evolution $k$-contact $k$-vector field and study its associated Hamilton--De Donder--Weyl equations. This yields a new type of Hamilton--De Donder--Weyl equations.

It is worth noting that the development of our $k$-contact Hamilton--Jacobi theories in this work requires the introduction of new $k$-contact geometric tools, such as maximally coisotropic subspaces and suitable regularity conditions for Hamiltonians on $k$-contact manifolds extending the ideas in \cite{GGMRR_21,LLM_21a}. In particular, this work shows that the formulation of Hamilton--Jacobi theory in the $k$-contact setting must take into account not only the HdDW equations themselves, but also integrability, projectability, and the role of gauge $k$-contact $k$-vector fields. These issues will become especially transparent for Hamiltonians affine in the dissipative variables. 

Roughly speaking, in the classical geometric Hamilton--Jacobi setting, the full dynamics is described by integral curves of a Hamiltonian vector field on a phase space \cite{AM_78}. Since integrating this vector field directly is often difficult, one tries to reconstruct the dynamics from a vector field defined on the base manifold of the phase bundle. To do so, one looks for a section of that bundle such that the projected dynamics and the full dynamics at points of the section are suitably related. This section, which is required to satisfy certain appropriate conditions, is interpreted as a solution of the Hamilton--Jacobi equation and, in the classical symplectic case, it is locally given by the differential of a generating function.

This classical picture can be represented by the commutative diagram
\begin{equation}\label{diagramintro}
\begin{tikzcd}[row sep=huge, column sep=huge]
E  \arrow[r, "X_h"] & \T E \arrow[d, swap, "\T \rho"] \\
M \arrow[r, "X_h^\gamma"] \arrow[u,  "\gamma"] & \T M \arrow[u, bend right=40, swap, "\T \gamma"]
\end{tikzcd}
\end{equation}
where $\gamma\colon M\to E$ is a section of the bundle $\rho\colon E\to M$, and $X^\gamma_h$ is constructed via $X_h$ as ${\rm T\rho}\circ X_h\circ \gamma$. Then, the diagram is commutative if and only if the dynamics on $\gamma(M)$ can be reconstructed from the projected vector field $X_h^\gamma$, namely when
\(
\T\gamma\bigl(X^\gamma_h\bigr)=X_h\circ \gamma.
\) In particular, the Hamilton--Jacobi problem relies on determining conditions on $\gamma$ to ensure that the above diagram is commutative.
In the classical symplectic case, $X_h$ is a Hamiltonian vector field on $E=\T^*Q$, while $M=Q$, and $\gamma = \d S$ locally, where $S$ is a generating function, is such that $h \circ \gamma$ is constant.

The above diagrammatic viewpoint is flexible enough to encompass several geometric settings. In the $k$-symplectic setting \cite{LSV_15} one has $E=\bigoplus^k \T^*Q$, which is endowed with an exact and non-degenerate two-form $\bm \omega$ taking values in $\mathbb{R}^k$, we set $M=Q$, and $\gamma$ is a family of $k$ closed one-forms. Meanwhile, ${\bm X}_h\colon E\to \bigoplus^k\T E$ is a $k$-symplectic Hamiltonian $k$-vector field and ${\bm X}^\gamma_h\colon M\to \bigoplus^k\T M$ is a $k$-vector field obtained from  ${\bm X}_h$ by projecting along ${\rm Im}\,\gamma$ onto $M$. The fact that $h$ does not uniquely determine ${\bm X}_h$ is not a very problematic issue. If the projection of ${\bm X}_h$ over ${\rm Im}\,\gamma$ onto $Q$ gives rise to an integrable $k$-vector field ${\bm X}^\gamma_h$, there is a Hamilton--Jacobi theorem ensuring that the lift of integral sections of ${\bm X}^\gamma_h$ solve the HdDW equations for ${\bm X}_h$. Finally, the Hamilton--Jacobi equations reduce to ensuring that $\dd(h\circ \gamma)=0$.

For $k$-contact manifolds, the relevant total space is
\(
E=\mathcal{J}_{Q,k}=\bigoplus^k \T^*Q\times\mathbb R^k\,,
\)
where the variables $z^1,\dots,z^k$, obtained by extending to $E$ from canonical coordinates in $\mathbb{R}^k$, are used to encode a dissipative behaviour in field theories. That is why they are hereafter called {\it dissipative variables}. Moreover $M=Q$ while the manifold $\mathcal{J}_{Q,k}$ has a natural $k$-contact form ${\bm \eta}_{Q,k}$, which is due to the fact that $\mathcal{J}_{Q,k}$ can be considered as a one-jet manifold \cite{LRS_25} and the kernel of ${\bm \eta}_{Q,k}$ locally behaves as the Cartan distribution of a first-order jet manifold of order $k$.  Moreover, it makes sense to define holonomic sections as those of the form $Q\ni q\rightarrow (q,d_qz^\alpha,z^\alpha(q))\in \mathcal{J}_{Q,k}$. Here, again, $h\in\Cinfty(E)$ does not uniquely determine an $\bm\eta_{Q,k}$-Hamiltonian $k$-vector field ${\bm X}_h$. This non-uniqueness is due to the non-trivial kernel of the morphism $\chi$ for $k>1$, while the dependence on the dissipative variables makes this gauge freedom especially visible in the $z$-dependent approach. This implies that the Hamilton--Jacobi theories must be formulated in terms of classes of ${\bm \eta}_{Q,k}$-Hamiltonian $k$-vector fields related to the same $h$. 

As in the Hamilton--Jacobi theory for contact Hamiltonian systems \cite{LLM_21a,LLGR_23}, we offer two different types of Hamilton--Jacobi theories concerning the choice of the base manifold in the bundle $\mathcal{J}_{Q,k}\rightarrow M$. In the first one, the base is $M=Q$, which yields the $z$-independent theory. In the second one, the base is $M=Q\times\mathbb R^k$, which yields the $z$-dependent theory. Both approaches are developed in this article, both for standard ${\bm \eta}_{Q,k}$-Hamiltonian $k$-vector fields and for ${\bm \eta}_{Q,k}$-evolution   $k$-vector fields. Due to its relation to physical theories and variational principles, the $z$-independent and $z$-dependent formulations could be called action-independent and action-dependent Hamilton--Jacobi formulations \cite{LPAF2018,LPF2019,LPAF2017}.

A further relevant point is that, in the $z$-independent approach, the natural sections are holonomic and their images are Legendrian submanifolds relative to $(\mathcal{J}_{Q,k},\bm\eta_{Q,k})$. Nevertheless, it is not in general possible to capture the full dynamics of the initial $\bm\eta_{Q,k}$-Hamiltonian/evolution $k$-vector field due to its dependence on the dissipative variables. At most, we can recover another $\bm\eta_{Q,k}$-Hamiltonian/evolution $k$-vector field related to the  HdDW equations for the same $h$. In any case, projected $k$-vector fields are independent of the representative $\bm\eta_{Q,k}$-Hamiltonian $k$-vector field related to $h$. As our real aim is to solve HdDW equations for the initial $h$, this is not a significant issue. 

In the $z$-dependent approach, an additional problem appears. The projection of the initial $k$-vector field onto $Q\times \mathbb{R}^k$ depends on the representative $\bm\eta_{Q,k}$-Hamiltonian/evolution $k$-vector field. Hence, the Hamilton--Jacobi equation is described for classes of $k$-vector fields on $Q\times \mathbb{R}^k$ coming from classes of ${\bm \eta}_{Q,k}$-Hamiltonian/evolution $k$-vector fields on $\mathcal J_{Q,k}$ associated with the same function. In the case $k=1$, we retrieve the standard contact Hamilton--Jacobi theories in \cite{LLM_21a} and weaken some technical assumptions appearing in this work. Under adequate assumptions on $h$, we can also retrieve the Hamilton--Jacobi theories for $k$-symplectic manifolds as a particular case of our results \cite{DELEON2010HAMILTONJACOBITHEORIES}. Indeed, it is worth noting that $k$-symplectic theories also deal with the fact that a function $h$ does not determine uniquely an associated $k$-symplectic  Hamiltonian $k$-vector field \cite{DELEON2010HAMILTONJACOBITHEORIES} and this can be seen in the corresponding Hamilton--Jacobi equations. 

Most applications of $k$-contact geometry deal with Hamiltonian models that are affine in the dissipative variables, which is one of the reasons why the theory has been closely connected with dissipation and why its relevance in physical applications has already been discussed \cite{Riv_22,RSS_24}.
 Only a few models, such as the parachute model in \cite{GGMRR_20a}, involve other types of dependence in the dissipative variables. In this work, we show that Hamiltonians with a quadratic term on the dissipative variables also arise naturally and broaden the scope of the theory, in particular towards non-autonomous models.  Moreover, the damped Klein--Gordon equation and several other physical field-theory models are analysed. The approach is general enough to be extended to broader settings.

The applications developed in this work show that the proposed $k$-contact Hamilton--Jacobi framework is flexible enough to cover several qualitatively different classes of non-conservative field theories. To the best of our knowledge, most examples in this work have not been previously analysed in the literature. First, the damped telegrapher/Klein--Gordon equation provides a natural test for the formalism. Physically, it describes damped propagation phenomena and includes, as a particular case, the classical transmission-line equation with clear physical parameters coming from resistance, inductance, capacitance, and leakage effects \cite{RWD_94}. Second, the dissipative Hunter--Saxton equation shows that the theory also applies to nonlinear dissipative wave equations arising in the geometry and dynamics of director fields and in related integrable and weakly dissipative models \cite{HS_91,HZ_94,Len_07,WY_11,Wei_12,LZ_11}. Third, the simple dissipative first-order field model  clarifies how the standard and evolution Hamilton--De Donder--Weyl formulations may produce the same projected field dynamics while encoding dissipation differently through the contact variables. It also shows how the lack of regularity affects the way they describe second-order systems of PDEs in a natural manner. Finally, the relativistic thermodynamic example indicates that the formalism is not restricted to scalar PDEs, but also accommodates balance laws and constitutive relations of extended irreversible thermodynamics, where entropy is naturally described by fluxes and the geometric structure becomes genuinely $k$-contact \cite{BP_93,Bra_18,HLM_26}.

The paper is organised as follows. Section~\ref{sec:2} reviews the geometric background needed in the sequel, including $k$-vector fields, $k$-contact manifolds, $k$-contact Hamiltonian systems, the associated Hamilton--De Donder--Weyl equations, and the new notion of evolution $k$-contact  $k$-vector field. It also contains several new structural results on HdDW equations, regular Hamiltonians, Hamiltonians affine in the dissipative variables, and maximally coisotropic subspaces. Sections~\ref{sec:HJzindependent} and~\ref{sec:HJzdependent} contain the main theoretical contributions of the paper. More precisely, Section~\ref{sec:HJzindependent} develops the Hamilton--Jacobi theory in the $z$-independent approach, both in the standard setting of Section \ref{Sec::zindclass} and in the evolution setting  of Section \ref{sec:zindevolution}, while Section~\ref{sec:HJzdependent} presents the corresponding $z$-dependent theory related to the standard and evolution perspectives in Sections \ref{sec:zdepclassical} and \ref{sec:HJzdepEvolution}, respectively. Section \ref{sec:Integrable} discusses integrable $k$-contact Hamiltonian systems and  Section~\ref{sec:Applications} is devoted to applications, namely the telegrapher/Klein--Gordon equation, the dissipative Hunter--Saxton equation, a simple dissipative first-order field model, and a relativistic thermodynamic model. Finally, Section~\ref{sec:ConclusionsAndOutlook} contains the conclusions and outlook.

\section{{\it k}-contact manifolds and {\it k}-contact Hamiltonian systems}
\label{sec:2}

Let us describe the fundamental notions and results used hereafter. Recall that $\mathbb{R}^k$ admits a canonical basis $\{e_1,\dotsc,e_k\}$, where $k$ denotes hereafter a natural positive number. Direct sums of vector bundles are considered to be Whitney sums unless otherwise stated.  We also assume $M$ to be an $m$-dimensional manifold and $\T M$ stands for its tangent bundle. Unless otherwise stated, structures are considered to be smooth.

The structure of this section is as follows. After clarifying some very standard concepts on distributions, codistributions, and $k$-vector fields, we recall the definition of $k$-contact manifolds and $k$-contact Hamiltonian systems. Relevantly, we introduce the notion of evolution $k$-contact $k$-vector field, which is a natural extension of the evolution Hamiltonian vector field in contact geometry \cite{LL_19} to the realm of field theories. Maximally coisotropic subspaces are also defined for the first time, as this notion is important for defining $k$-contact Hamilton--Jacobi theories in the next sections. We also study how our formalism applies to regular Hamiltonian functions with an affine dependence on the dissipative variables and how they canonically yield second-order partial differential equations, which is relevant for applications.

\subsection{Differential forms taking values in vector spaces, distributions, and \texorpdfstring{$k$}{}-vector fields}

 Let us recall some basic definitions regarding $k$-contact vector fields, $k$-contact geometry, and other related notions to be used hereafter.

A \textit{distribution} on $M$ is a subset of $\T M$ such that $D_x = D\cap\T_xM$ is a vector subspace of $\T_xM$ at every $x\in M$. We call $\dim D_x$ the {\it rank} of $D$ at $x\in M$. A distribution $D$ is \textit{smooth} if there exists, for every $x\in M$, a family of vector fields $X_1, \dotsc, X_r$ defined in a neighbourhood $U$ of $x$ such that $D_{x'}=\langle X_1(x'), \dotsc, X_r(x')\rangle$ for every $x'\in U$. Note that $r$ does not need to match the dimension of $D_x$.  A distribution $D$ is \textit{regular} if it is smooth and of constant rank. A \textit{codistribution} on $M$ is a subset $C\subset\cT M$ such that $C_x = C\cap\cT_xM$ is a vector subspace of $\cT_x M$ for every $x\in M$. The {\it annihilator} of a distribution $D$ is the codistribution $D^\circ=\bigsqcup_{x\in M} D^\circ_x$, where $D_x^\circ = \{\alpha \in \T_x^*M \mid \alpha(v)=0 \text{ for all } v \in D_x\}$. The rank of the codistribution $D^\circ$ at $x\in M$ is called the {\it corank} of $D$ at $x$.

If $D$ has constant rank $k$, we write $\rk D = k$. If $D$ has constant corank $p$, we denote it by writing $\cork D = p$. If $D$ is not regular, $D^\circ$ may not be smooth. Using the usual identification $E^{\ast\ast} = E$ for a finite-dimensional vector space $E$, it follows that $(D^\circ)^\circ = D$.  The space of vector fields on a manifold $M$ taking values in a distribution $D$ on $M$ is denoted by $\Gamma(D)$. Note that we use the term `distribution' instead of `generalised distribution', as in part of the standard literature, to simplify the notation and because it does not lead to any misunderstanding.

Given a fibre bundle $\pi:E\rightarrow M$, we write $V(\pi)=\ker \T\pi$ for its vertical bundle. 
Consider the Whitney sum of $k$ copies of the tangent bundle of $M$, namely $\bigoplus^k\T M = \T M\oplus_M\overset{(k)}{\dotsb}\oplus_M \T M$, and the natural projections
\begin{equation*}
    \textstyle\pr^\alpha\colon\bigoplus^k\T M\to\T M\,, \qquad \pr_M\colon\bigoplus^k\T M\to M\,, \qquad \alpha=1,\dotsc, k\, ,
\end{equation*}
where $\pr^\alpha$ denotes the projection onto the $\alpha$-th component of the Whitney sum.

 Note that $k$-vector fields are of great use in the geometric study of systems of partial differential equations via different types of geometric structures related to field equations \cite{LSV_15,MRSV_15}. A {\it $k$-vector field} on $M$ is a section ${\bf Z}\colon M\to\bigoplus^k\T M$ of $\pr_M$, i.e. a map making the diagram
\begin{equation}
\label{diag:projection}
    \begin{tikzcd}[row sep=huge, column sep=huge]
        & \bigoplus^k\T M \arrow[d, "\pr^\alpha"]\\
        M \arrow[r, "Z_\alpha"] \arrow[ur, "{\bf Z}"] & \T M
    \end{tikzcd}
\end{equation}
commutative. Throughout the paper, $\mathfrak{X}^k(M)$ denotes the space of $k$-vector fields on $M$, while we simply write $\mathfrak{X}(M)=\mathfrak{X}^1(M)$.

Taking into account \eqref{diag:projection}, a $k$-vector field ${\bf Z}\in\mathfrak{X}^k(M)$ amounts to a family of vector fields $Z_1,\dotsc,Z_k\in\mathfrak{X}(M)$ given by $Z_\alpha = \pr^\alpha\circ{\bf Z}$ with $\alpha=1,\dotsc,k$. With this in mind, one can denote ${\bf Z} = (Z_1,\dotsc, Z_k)$. A $k$-vector field ${\bf Z}$ induces a decomposable contravariant skew-symmetric tensor field, $Z_1\wedge\dotsb\wedge Z_k$, which is a section of the bundle $\bigwedge^k\T M\to M$. Moreover, $\mathbf{Z}$ also induces a distribution $D^\mathbf{Z}$ on $M$ spanned by $Z_1, \dotsc, Z_k$.

    Given a map $\phi\colon U\subset\R^k\to M$, its {\it first prolongation} to $\bigoplus^k\T M$ is the map $\phi'\colon U\subset\R^k\to\bigoplus^k\T M$ 
    defined by
    $$ \phi'(t) = \left( \phi(t); \T_t\phi\left( \parder{}{t^1}\bigg\vert_t \right),\dotsc,\T_t\phi\left( \parder{}{t^k}\bigg\vert_t \right) \right)\,, \qquad t\in\mathbb{R}^k, $$
    where $t^1,\dotsc,t^k$ are the canonical Cartesian coordinates on $\R^k$.
 
An {\it integral section} of a $k$-vector field ${\bf Z} = (Z_1,\dotsc,Z_k)\in\mathfrak{X}^k(M)$ is a map $\phi\colon U\subset\R^k\to M$ such that
    $ \phi' = {\bf Z}\circ\phi\,, $
    that is, $\T\phi \left(\parder{}{t^\alpha}\right) = Z_\alpha\circ\phi$ for $\alpha=1,\dotsc,k$. A $k$-vector field ${\bf Z}\in\mathfrak{X}^k(M)$ is {\it integrable} if and only if every point of $M$ is in the image of an integral section of ${\bf Z}$.

Consider a $k$-vector field ${\bf Z} = (Z_1,\dotsc, Z_k)$ on $M$ with local expression $ Z_\alpha = \sum_{i=1}^nZ_\alpha^i\parder{}{x^i}\,$ for $\alpha=1,\dotsc,k$. 
Then, $\phi\colon U\subset\R^k\to M$ is an integral section of ${\bf Z}$ if, and only if, its coordinates form a solution of the system of partial differential equations
\begin{equation}
\label{eq:PDEs}
\parder{\phi^i}{t^\alpha} = Z_\alpha^i\circ\phi\,,\qquad i=1,\dotsc,n\,,\qquad \alpha=1,\dotsc,k\,.
\end{equation}

Then, ${\bf Z}$ is integrable if, and only if, $[Z_\alpha,Z_\beta] = 0$ for $\alpha<\beta=1,\dotsc,k$. These are precisely the necessary and sufficient conditions for the integrability of \eqref{eq:PDEs} (see \cite{Lee_13} for details). 
The notion of integral section of a $k$-vector field ${\bf Z}$ is stronger than the notion of integral section of the distribution $D^{\bf Z}$. Indeed,  $D^{\bf Z}$ is integrable if and only if $[Z_\alpha,Z_\beta] = \sum_{\gamma=1}^kf_{\alpha\beta}^\gamma Z_\gamma$ for certain functions $f_{\alpha\beta}^\gamma$ and $\alpha,\beta=1,\ldots,k$. On the other hand, a $k$-vector field is integrable if, and only if, $f_{\alpha\beta}^\gamma = 0$ for every $\alpha,\beta,\gamma=1,\ldots,k$ in the previous condition. It is worth noting that ${\bm Z}$ will be used to denote general $k$-vector fields. Meanwhile, other bold letters are to refer to particular types of $k$-vector fields, such as Hamiltonian or evolution $k$-contact $k$-vector fields, to be defined in forthcoming sections.

Each differential $\ell$-form taking values in $\mathbb{R}^k$, let us say $\bm\theta\in \Omega^\ell(M,\R^k)$, can be represented uniquely as $\bm\theta=\sum_{\alpha=1}^k\theta^\alpha\otimes e_\alpha$ for some differential $\ell$-forms $\theta^1,\dotsc,\theta^k \in \Omega^\ell(M)$. 
Then, $\bm\theta\in \Omega^\ell(M,\R^k)$ is {\it nondegenerate} if $\ker\bm\theta:=\bigcap^k_{\alpha=1}\ker\theta^\alpha=0$.
The contractions of a $k$-vector field ${\bm Z}$ and a vector field $Z\in \mathfrak{X}(M)$ on $M$ with $\bm \theta\in \Omega^\ell(M,\mathbb{R}^k)$ are given, respectively, by
$$
\iota_{\bm Z}\bm \theta=\sum_{\alpha=1}^k\iota_{Z_\alpha}\theta^\alpha \in \Omega^{\ell-1}(M)\,,\qquad
\iota_{Z}\bm \theta=\sum_{\alpha=1}^k\iota_{Z}\theta^\alpha\otimes e_\alpha\in \Omega^{\ell-1}(M,\mathbb{R}^k)\,.
$$
It is worth noting that we will use the above intrinsic contractions when it helps keeping notation and results clear and more concise. Meanwhile,  the summation over contractions is to be employed when it helps to understand particular calculations.
\subsection{\texorpdfstring{$k$}--contact geometry}

Let us present the basic notions of $k$-contact geometry \cite{LRS_25,GGMRR_20,GGMRR_21}. 

\begin{definition}\label{def:k-form}
A \textit{$k$-contact form on an open subset $U \subset M$} is a differential one-form on $U$ taking values in $\mathbb{R}^k$, that is $\bm{\eta} \in \Omega^1(U,\mathbb{R}^k)$, such that
 $\ker \bm{\eta} \subset \T U$ is a regular non-zero distribution of corank $k$ and $\ker \bm{\eta} \oplus \ker \dd \bm{\eta} = \T U$.

If $\bm{\eta}$ is a $k$-contact form defined on $M$, the pair $(M,\bm{\eta})$ is called a \textit{co-oriented $k$-contact manifold}, and $\ker \dd \bm{\eta}$ is called the \textit{Reeb distribution} of $(M,\bm{\eta})$. Moreover, if $\dim M = n + nk + k$ for some $n,k \in \mathbb{N}$ and there exists an integrable distribution $\mathcal{V} \subset \ker \bm{\eta}$ with $\rk \mathcal{V} = nk$, then $(M,\bm{\eta},\mathcal{V})$ is called a \textit{polarised co-oriented $k$-contact manifold}. The distribution $\mathcal{V}$ is referred to as a \textit{polarisation} of $(M,\bm{\eta})$.
\end{definition} 

The previous formalism recovers contact forms for $k=1$. Nevertheless, Definition~\ref{def:k-form} requires $\ker \bm{\eta}\neq 0$ to avoid studying the case $k=1$ on a one-dimensional manifold. In this instance, $\ker \eta=0$ is sometimes said to be maximally non-integrable \cite{GG_22}, although it is an integrable distribution. Moreover, accepting $\ker\bm\eta=0$ involves dealing with several technical nuances in $k$-contact geometry, which is undesirable. Despite this, one should note that $\ker \bm{\eta}\neq 0$ is a mere technical condition and excludes an instance of no relevance in contact or $k$-contact geometry. 

\begin{theorem}
\label{thm:k-contact-Reeb}
Let $\big(M,\bm\eta = \sum_{\alpha=1}^k \eta^\alpha \otimes e_\alpha\big)$ be a co-oriented $k$-contact manifold. There exists a unique family of  vector fields $R_1,\dotsc,R_k \in \mathfrak{X}(M)$ such that
\begin{equation}\label{eq:k-contact-Reeb} 
\inn{R_\alpha} \eta^\beta = \delta_\alpha^\beta\,,\qquad 
\inn{R_\alpha} \dd \eta^\beta = 0\,, \qquad \alpha,\beta = 1,\dotsc,k\,.
\end{equation}
The vector fields $R_1,\dotsc,R_k$ commute with each other, that is,
\(
[R_\alpha, R_\beta] = 0
\)
for $\alpha,\beta = 1,\dotsc,k$. Moreover, these vector fields form a basis of the Reeb distribution, namely
\(
\ker \dd \bm\eta = \langle R_1,\dotsc,R_k \rangle .
\)
\end{theorem}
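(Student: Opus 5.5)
The plan is pointwise linear algebra first, then smoothness, then commutativity. Since $\ker\bm\eta\oplus\ker\dd\bm\eta=\T M$ and $\cork\ker\bm\eta=k$, the Reeb distribution $\ker\dd\bm\eta$ has constant rank $k$. The restriction of $(\eta^1,\dotsc,\eta^k)$ to $(\ker\dd\bm\eta)_x$ is a linear map $(\ker\dd\bm\eta)_x\to\mathbb{R}^k$ whose kernel is $(\ker\bm\eta)_x\cap(\ker\dd\bm\eta)_x=0$, so it is injective. Both spaces have dimension $k$, so this map is an isomorphism.

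Next I would define $R_\alpha(x)$ as the preimage of $e_\alpha$ under this isomorphism. Then $R_\alpha$ satisfies \eqref{eq:k-contact-Reeb}, and the $R_\alpha(x)$ form a basis of $(\ker\dd\bm\eta)_x$. For uniqueness, any family satisfying \eqref{eq:k-contact-Reeb} has the property $\iota_{R_\alpha}\dd\eta^\beta=0$ for all $\beta$, so it lies in $\ker\dd\bm\eta$, where the isomorphism forces $R_\alpha$ to be the vector just constructed.

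Smoothness follows from the fact that \eqref{eq:k-contact-Reeb} is a linear system $A(x)v=b$ with smooth coefficients. Here $A$ is the bundle map $v\mapsto(\iota_v\dd\eta^\beta,\iota_v\eta^\beta)_\beta$ from $\T M$ to $\bigoplus^k\T^*M\oplus\mathbb{R}^k$. By the direct sum decomposition, $\ker A=0$ at every point, so $A$ is an injective vector bundle morphism of constant rank $m$. Its image is therefore a smooth subbundle, and the left inverse defined on that image is smooth. The right-hand side $(0,e_\alpha)$ lies in the image because a solution exists, so $R_\alpha$ is the smooth left inverse applied to a smooth section, hence smooth. Locally one can say the same thing concretely by choosing a smooth complement and inverting a full-rank Gram-type matrix.

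The step that needs genuine work is $[R_\alpha,R_\beta]=0$. I would first show that $[R_\alpha,R_\beta]$ annihilates each $\eta^\gamma$. By Cartan's formula,
\[
\iota_{[R_\alpha,R_\beta]}\eta^\gamma=\Lie_{R_\alpha}\iota_{R_\beta}\eta^\gamma-\iota_{R_\beta}\Lie_{R_\alpha}\eta^\gamma .
\]
The first term is $\Lie_{R_\alpha}\delta^\gamma_\beta=0$. For the second, $\Lie_{R_\alpha}\eta^\gamma=\dd\iota_{R_\alpha}\eta^\gamma+\iota_{R_\alpha}\dd\eta^\gamma=0$, so it vanishes too. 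Then I would show that $[R_\alpha,R_\beta]$ lies in $\ker\dd\bm\eta$:
\[
\iota_{[R_\alpha,R_\beta]}\dd\eta^\gamma=\Lie_{R_\alpha}\iota_{R_\beta}\dd\eta^\gamma-\iota_{R_\beta}\Lie_{R_\alpha}\dd\eta^\gamma .
\]
Here $\iota_{R_\beta}\dd\eta^\gamma=0$, and $\Lie_{R_\alpha}\dd\eta^\gamma=\dd\Lie_{R_\alpha}\eta^\gamma=0$. Hence $[R_\alpha,R_\beta]$ lies in $\ker\bm\eta\cap\ker\dd\bm\eta=0$, which completes the proof.
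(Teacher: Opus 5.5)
Your proof is correct and complete; the paper does not prove this theorem itself but recalls it from the $k$-contact literature. The only trace of an argument in the paper is the remark after the statement: $\ker\dd\bm\eta$ is integrable because it is an intersection of kernels of closed two-forms and has constant rank $k$.

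Your route is the standard one. The restriction of $\bm\eta$ to $(\ker\dd\bm\eta)_x$ is an isomorphism onto $\mathbb{R}^k$, since $\ker\bm\eta\cap\ker\dd\bm\eta=0$ and the ranks match, and the $R_\alpha$ are the preimages of the canonical basis. Smoothness then follows from the injective, constant-rank bundle map $v\mapsto(\iota_v\dd\eta^\beta,\iota_v\eta^\beta)_\beta$; sketches of this result usually leave this step implicit, so it is worth having it written out.

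For commutativity, your identity $\iota_{[R_\alpha,R_\beta]}\dd\eta^\gamma=-\iota_{R_\beta}\dd\Lie_{R_\alpha}\eta^\gamma=0$ is exactly the involutivity of kernels of closed forms that the paper invokes, specialised to the Reeb fields. Involutivity alone only gives $[R_\alpha,R_\beta]\in\langle R_1,\dots,R_k\rangle$. Your second computation, $\iota_{[R_\alpha,R_\beta]}\eta^\gamma=0$, uses that $\eta^\gamma(R_\beta)=\delta^\gamma_\beta$ is constant, and this is what upgrades involutivity to actual commutation.
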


Since the distribution $\ker \dd \bm\eta$ is the intersection of the kernels of closed two-forms and has constant rank $k$ by assumption, it is integrable. 

\begin{definition}
Given a $k$-contact manifold $(M,\bm\eta)$, the \textit{Reeb $k$-vector field} of $(M,\bm\eta)$ is the integrable $k$-vector field
\(
\mathbf{R} = (R_1,\dotsc,R_k)
\) 
on $M$ whose components are described in Theorem~\ref{thm:k-contact-Reeb}. The vector fields $R_1,\dotsc,R_k$ are called the \textit{Reeb vector fields} of the co-oriented $k$-contact manifold $(M,\bm\eta)$.
\end{definition}

The following example shows one of the main $k$-contact manifolds to be used hereafter. It is also the canonical local model for polarised $k$-contact manifolds as they are all locally of this type \cite{LRS_25}. Moreover, it is related to first-order jet bundles and many examples in physics \cite{LRS_25}. 

\begin{example} 
\label{ex:canonical-k-contact-structure}
\rm
The manifold $\mathcal{J}_{Q,k} = \bigoplus^k \cT Q \times \mathbb{R}^k$ carries a canonical $k$-contact manifold structure defined by the $k$-contact form
\(
\bm\eta_{Q,k} = \sum_{\alpha=1}^k (\dd z^\alpha - \theta^\alpha) \otimes e_\alpha.
\)
Here, $\{z^1,\dotsc,z^k\}$ are the pull-back of linear coordinates on $\mathbb{R}^k$ to $\mathcal{J}_{Q,k}$ via its canonical projection onto $\mathbb{R}^k$.  These coordinates are hereafter called {\it dissipative variables} due to their relation to dissipation in field theories \cite{Riv_22}. Note that $\mathcal{J}_{Q,k}$ also admits a natural projection onto $\bigoplus^k\cT Q$. Moreover, each $\theta^\alpha$ is the pull-back to $\mathcal{J}_{Q,k}$ of the Liouville one-form $\theta$ on the $\alpha$-th copy of the cotangent bundle $\cT Q$ via the natural projection $\mathcal{J}_{Q,k}\to \bigoplus^k\cT Q\stackrel{{\rm pr}_\alpha}{\to} \cT Q$. Additionally, $\mathcal{J}_{Q,k}$ admits a vertical distribution $\mathcal{V}$ relative to the natural projection onto $Q \times \mathbb{R}^k$   of rank $k \cdot \dim Q$ contained in $\ker \bm\eta_{Q,k}$. Therefore, $\left(\mathcal{J}_{Q,k}, \bm\eta_{Q,k}, \mathcal{V}\right)$ is a polarised co-oriented $k$-contact manifold. From now on, we assume $Q$ to be $n$-dimensional.

Any choice of local coordinates $\{q^1,\dotsc,q^n\}$ on $Q$, together with linear coordinates on $\mathbb{R}^k$, induces a natural coordinate system $\{q^i, p_i^\alpha, z^\alpha\}$ on $\mathcal{J}_{Q,k}$, with $\alpha = 1,\dotsc,k$. In these coordinates,
\[
{\bm \eta}_{Q,k}= \sum_{\alpha=1}^k\left(\dd z^\alpha - \sum_{i=1}^n p_i^\alpha \dd q^i\right)\otimes e_\alpha\,,\qquad
\dd {\bm \eta}_{Q,k}= \sum_{\alpha=1}^k\left(\sum_{i=1}^n\dd q^i \wedge \dd p_i^\alpha\right)\otimes e_\alpha\,,\qquad
R_\beta = \parder{}{z^\beta}\,,
\]
for \(\beta = 1,\dotsc,k\), and
\[
\ker \bm\eta_{Q,k} =
\left\langle
\parder{}{p_i^\alpha},
\parder{}{q^i} + \sum_{\beta=1}^k p_i^\beta \parder{}{z^\beta}
\right\rangle_{\stackrel{\alpha=1,\ldots,k}{i = 1,\dotsc,n}}\!\!\!\!\!\!\!\!\!\!\!\!,\quad
\ker \dd \bm\eta_{Q,k} =
\left\langle
R_1,\dotsc,R_k
\right\rangle,
\quad
\mathcal{V}_{Q,k}=\left\langle\parder{}{p^\alpha_i}\right\rangle_{\substack{i=1,\dotsc,n\\\alpha=1,\dotsc,k}} .
\]
\end{example}

\begin{example}\label{ex:contactification-k-symplectic-manifold}
Let us study the $k$-contactification of an exact $k$-symplectic manifold, namely the $k$-contact manifold associated with a $k$-symplectic manifold $(P,\bm \omega)$, where $\bm \omega\in \Omega^2(P,\mathbb{R}^k)$ is a closed two-form taking values in $\mathbb{R}^k$ with $\ker \bm\omega=0$. More specifically, consider an exact $k$-symplectic manifold $(P,\bm\omega = \sum_{\alpha=1}^k \omega^\alpha \otimes e_\alpha)$ with $\bm \omega  = -\dd\bm \theta$, and the product manifold $M = P \times \mathbb{R}^k$. Let $\{z^1,\dotsc,z^k\}$ denote the pull-back to $M$ of linear coordinates on $\mathbb{R}^k$, and let $\theta_M^\alpha$ be the pull-back of $\theta^\alpha$ from $P$ to $M$ in the canonical manner.

Define the $\mathbb{R}^k$-valued one-form
\[
\bm\eta_{\bm \theta} = \sum_{\alpha=1}^k ( \dd z^\alpha - \theta_M^\alpha)\otimes e_\alpha \in \Omega^1(M,\mathbb{R}^k).
\]
Then, $(M,\bm\eta_{\bm \theta})$ becomes a co-oriented $k$-contact manifold. Indeed, $\ker \bm\eta_{\bm \theta}$ has corank $k$ and is non-zero, while $\dd \bm\eta_{\bm \theta} = -\dd \bm\theta_M$ for $\bm \theta_M=\sum_{\alpha=1}^k\theta^\alpha_M\otimes e_\alpha$. Then,
\[
\ker \dd \bm\eta_{\bm \theta} = \left\langle \parder{}{z^1},\dotsc,\parder{}{z^k} \right\rangle
\]
has rank $k$, since $(P,\bm\omega)$ is $k$-symplectic. Hence, $\bm\eta_{\bm \theta}$ becomes a $k$-contact form on $M$.

The canonical $k$-contact form $\bm\eta_{Q,k}$ described in Example~\ref{ex:canonical-k-contact-structure} is precisely the $k$-contactification of the exact $k$-symplectic manifold $(P = \bigoplus^k \cT Q, \bm\omega_{Q,k}=\sum_{\alpha=1}^k \omega_{Q}^\alpha \otimes e_\alpha)$, where $\omega_{Q}^\alpha$ denotes the pull-back to $P$ of the canonical symplectic form from the $\alpha$-th copy of $\cT Q$ via the natural projection from $\bigoplus^k\cT Q$ onto its $\alpha$-th copy of $\cT Q$.  Note that one can construct a canonical one-form $\bm \theta_Q=\sum_{\alpha=1}^k\theta_Q^\alpha\otimes e_\alpha\in \Omega^1(P,\mathbb{R}^k)$ from the pull-back $\theta^\alpha_Q$ to $P$ of the canonical Liouville form $\theta_Q$ on the $\alpha$-th copy of $\T^*Q$ in $P$. Moreover, $\bm\omega_{Q,k}=-\dd\bm\theta_Q$.

Let us analyse the above construction for $k=1$. Let $P$ be an exact symplectic manifold and consider $M = P \times \mathbb{R}$ endowed with the one-form $\eta = \dd z - \theta$. In this case, $\ker \dd \eta = \ker (-\dd \theta)$ has rank one since $\dd \theta$ is the pull-back to $M$ of a symplectic form on $P$. Under these assumptions, $(M,\eta)$ is a (co-oriented) contact manifold.
\end{example}

\begin{theorem}[$k$-contact Darboux Theorem \cite{Riv_22}]
\label{thm:kcontDarboux}
Let $(M,\bm\eta,\mathcal{V})$ be a polarised co-oriented $k$-contact manifold. Around every point of $M$, there exist local coordinates $\{q^i,p_i^\alpha,z^\alpha\}$, with $1 \leq i \leq n$ and $1 \leq \alpha \leq k$, such that
\[
\bm \eta = \sum_{\alpha=1}^k \left(\dd z^\alpha - \sum_{i=1}^np_i^\alpha \dd q^i\right)\otimes e_\alpha\,,
\quad\!\!
\ker \dd \bm\eta = \left\langle R_1 = \parder{}{z^1} ,\dotsc,R_k = \parder{}{z^k} \right\rangle,
\quad\!\!
\mathcal{V} = \left\langle \parder{}{p_i^\alpha} \right\rangle_{\substack{i=1,\dotsc,n\\\alpha=1,\dotsc,k}}\!.
\]
These coordinates are called \textit{Darboux coordinates} of $(M,\bm\eta,\mathcal{V})$.
\end{theorem}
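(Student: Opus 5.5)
The plan is to build the coordinates in three stages: first the Reeb directions (giving the $z^\alpha$), then the polarisation (giving the $q^i$), and finally reading off the $p_i^\alpha$ from $\bm\eta$ itself.

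\emph{Stage 1: the $z^\alpha$.} By Theorem~\ref{thm:k-contact-Reeb}, the Reeb vector fields $R_1,\dotsc,R_k$ commute and are pointwise linearly independent, since $\inn{R_\alpha}\eta^\beta=\delta^\beta_\alpha$. The distribution $\ker\bm\eta\oplus\mathcal{V}$ is not needed directly here. Instead, consider $\mathcal{D}=\mathcal{V}\oplus\ker\dd\bm\eta$. I would first show it is involutive of rank $nk+k$.

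Take $V\in\Gamma(\mathcal{V})$. From $\inn{V}\eta^\alpha=0$ and $\inn{R_\beta}\dd\eta^\alpha=0$ we get
\[
\eta^\alpha([R_\beta,V])=R_\beta(\eta^\alpha(V))-V(\eta^\alpha(R_\beta))-\dd\eta^\alpha(R_\beta,V)=0,
\]
so $[R_\beta,\mathcal{V}]\subset\ker\bm\eta$.

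\emph{Main obstacle.} One must also show $[R_\beta,\mathcal{V}]\subset\mathcal{V}$. This needs the standard fact that a polarisation is Lagrangian for each $\dd\eta^\alpha|_{\ker\bm\eta}$: dimension counting gives $\dd\eta^\alpha(\mathcal{V},\mathcal{V})=0$ and $\mathcal{V}$ maximal with this property, by the analogue of the $k$-symplectic result for $\ker\bm\eta$, which carries the $k$-symplectic structure $\dd\bm\eta|_{\ker\bm\eta}$. Then, for $V,W\in\Gamma(\mathcal{V})$, expand $\Lie_{R_\beta}\dd\eta^\alpha=0$ against $V,W$. This gives $\dd\eta^\alpha([R_\beta,V],W)=0$ for all $\alpha$ and all $W$, so $[R_\beta,V]$ lies in the $\dd\bm\eta$-orthogonal of $\mathcal{V}$ within $\ker\bm\eta$, which is $\mathcal{V}$ itself. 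If proving this directly is too delicate, I would fall back on invoking the $k$-symplectic Darboux theorem on the local quotient by the Reeb flow.

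\emph{Stage 2: the $q^i$.} Since $\mathcal{D}$ is involutive, Frobenius gives functions $q^1,\dotsc,q^n$ with $\dd q^i$ independent and annihilating $\mathcal{D}$. Simultaneously straightening the commuting $R_\alpha$ yields functions $z^\alpha$ with $R_\beta(z^\alpha)=\delta^\alpha_\beta$. These can be chosen constant along a transversal on which $\eta^\alpha$ vanishes on some complement, after adjusting $z^\alpha$ by functions of $q$.

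\emph{Stage 3: the $p_i^\alpha$.} Write $\eta^\alpha=\dd z^\alpha-\sum_i p_i^\alpha\dd q^i+\text{rest}$. Evaluating on $R_\beta$ gives no $\dd z$ correction, and evaluating on $\mathcal{V}$ shows $\eta^\alpha$ has no component along the $\mathcal{V}$-dual directions. Hence $\eta^\alpha-\dd z^\alpha$ lies in the span of the $\dd q^i$, which defines the $p^\alpha_i$. Next, $\Lie_{R_\beta}\eta^\alpha=0$ gives $R_\beta(p_i^\alpha)=0$. Non-degeneracy of $\dd\bm\eta$ on $\ker\bm\eta$ shows that $\{q^i,p_i^\alpha,z^\alpha\}$ are independent, since $\dd\eta^\alpha=\sum_i\dd q^i\wedge\dd p_i^\alpha$ must restrict non-degenerately. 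Finally, $\mathcal{V}=\langle\partial/\partial p_i^\alpha\rangle$ follows by dimension count, because $\mathcal{V}$ annihilates $\dd q^i$ and $\dd z^\alpha$ (the latter after restricting to $\ker\bm\eta$).
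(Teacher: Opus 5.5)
The paper gives no proof of this statement; it only quotes it from the literature. So your argument has to stand on its own, and it breaks exactly at the step you flagged.

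\textbf{The gap.} Expand $\Lie_{R_\beta}\dd\eta^\alpha=0$ on $V,W\in\Gamma(\mathcal V)$ and use $\dd\eta^\alpha(V,W)\equiv0$. Incidentally, this vanishing comes from integrability, since $\dd\eta^\alpha(V,W)=-\eta^\alpha([V,W])$, not from dimension counting. The expansion only yields
\[
\dd\eta^\alpha([R_\beta,V],W)=\dd\eta^\alpha([R_\beta,W],V)\,.
\]
This says that $(V,W)\mapsto\dd\eta^\alpha([R_\beta,V],W)$ is a symmetric tensor on $\mathcal V$; for $V=W$ the identity is vacuous. Nothing forces the tensor to vanish, so you never obtain $[R_\beta,\mathcal V]\subset\mathcal V$.

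Everything afterwards depends on that inclusion. Since $[R_\beta,\mathcal V]\subset\ker\bm\eta$ and $\mathcal D\cap\ker\bm\eta=\mathcal V$, involutivity of $\mathcal D=\mathcal V\oplus\ker\dd\bm\eta$ is equivalent to $[R_\beta,\mathcal V]\subset\mathcal V$. Your fallback through the $k$-symplectic Darboux theorem on the Reeb quotient needs the same property for $\mathcal V$ to descend.

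\textbf{The case $k=1$ cannot be repaired.} The definitions allow $k=1$, and there the inclusion is simply false. Take $\R^3$ with $\eta=\dd z-p\,\dd q$ and $R=\partial/\partial z$. The line field $\mathcal V=\langle Y\rangle$, with $Y=\partial/\partial p+z(\partial/\partial q+p\,\partial/\partial z)$, lies in $\ker\eta$ and is integrable. Yet $[R,Y]=\partial/\partial q+p\,\partial/\partial z\notin\mathcal V$ at every point, and $\dd\eta([R,Y],Y)=1$. Darboux coordinates force $[\partial/\partial z^\alpha,\partial/\partial p^\beta_i]=0$, so no chart as in the statement exists anywhere. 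For $k=1$ one must add Reeb-invariance of $\mathcal V$ as a hypothesis; in the contact literature this symmetric form is Pang's invariant, and the condition is flatness of the Legendre foliation.

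\textbf{The case $k\geq2$.} Here invariance does hold, but for a different reason.
\begin{itemize}
\item At each $x$, consider the map $\mathcal V_x\to\bigoplus_\alpha(\ker\bm\eta_x/\mathcal V_x)^*$ given by $v\mapsto(\iota_v\dd\eta^\alpha|_{\ker\bm\eta_x})_\alpha$. It is well defined by isotropy.
\item It is injective because $\ker\bm\eta\cap\ker\dd\bm\eta=0$, hence bijective, since both sides have dimension $nk$.
\item The preimage $\mathcal V^\alpha_x$ of the $\alpha$-th summand lies in $\ker\dd\eta^\beta_x$ for every $\beta\neq\alpha$.
\item Conversely, $\ker\bm\eta_x\cap\ker\dd\eta^\beta_x\subset\mathcal V_x$, because $\dd\eta^\beta$ pairs a complement of $\mathcal V_x$ in $\ker\bm\eta_x$ nondegenerately with $\mathcal V^\beta_x$.
\item Since $k\geq2$, every $\mathcal V^\alpha_x$ sits in some $\ker\dd\eta^\beta_x$ with $\beta\neq\alpha$.
\end{itemize}
Hence $\mathcal V=\sum_\beta(\ker\bm\eta\cap\ker\dd\eta^\beta)$ is determined by $\bm\eta$ alone. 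It is therefore preserved by the Reeb flows, because $\Lie_{R_\beta}\eta^\alpha=0$.

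\textbf{Completing Stages 2--3 once invariance holds.} Your Stages 2--3 then work, and they even avoid Awane's theorem, but one correction is needed. Adjusting $z^\alpha$ by functions of $q$ cannot achieve $\mathcal V(z^\alpha)=0$, because $\mathcal V$ already annihilates functions of $q$. Stage 3 needs $\mathcal V(z^\alpha)=0$ so that $\eta^\alpha-\dd z^\alpha$ lies in $\langle \dd q^i\rangle$. Instead, straighten $\mathcal V$ and the commuting $R_\alpha$ simultaneously and take $z^\alpha$ to be the Reeb flow parameters. Then define $p^\alpha_i$ by $\eta^\alpha-\dd z^\alpha=-\sum_i p^\alpha_i\,\dd q^i$. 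For $v\in\mathcal V$ one has $\iota_v\dd\eta^\alpha=-\sum_i\dd p^\alpha_i(v)\,\dd q^i$, so the $\dd p^\alpha_i$ are independent on $\mathcal V$.
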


Theorem~\ref{thm:kcontDarboux} allows us to regard the manifold $\mathcal{J}_{Q,k}$ introduced in Example~\ref{ex:canonical-k-contact-structure} as the canonical local model of polarised co-oriented $k$-contact manifolds (see \cite{LRS_25} for details). Moreover, every $k$-contact manifold obtained as the $k$-contactification of $\bigoplus^k \cT Q$ endowed with its canonical exact $k$-symplectic structure (see Example~\ref{ex:contactification-k-symplectic-manifold} for $P=\bigoplus^k \cT Q$) admits Darboux coordinates.

Let us define a morphism that is very useful to study co-oriented $k$-contact manifolds and their associated Hamiltonian systems, as shown in following parts of this work. Given a co-oriented $k$-contact manifold $(M,\bm\eta)$, we can define a vector bundle morphism over $M$ of the form $\chi\colon \bigoplus^k\T M \to \cT M\times \mathbb{R}$ by
\[
\chi({\bm Z}) = (\iota_{\bm Z} \dd \bm\eta,\iota_{\bm Z}\bm\eta).
\]  
This morphism is injective if, and only if, $k=1$. In this case, a co-oriented one-contact manifold $(M,\bm\eta)$ is a contact manifold. Indeed, this fact will play a fundamental role in the study of $k$-contact Hamiltonian systems, since it implies that the $k$-contact Hamilton--De Donder--Weyl equations for $k$-vector fields admit solutions for every Hamiltonian function, but these solutions are not unique when $k>1$ (see Proposition~\ref{prop:k-contact-HdDW-have-solutions} below and \cite{HLM_26}).

\begin{proposition}\label{prop:kernel-flat-k-contact-Darboux}
Let \(\mathcal{J}_{Q,k}=\bigoplus^k \T^*Q\times \mathbb{R}^k\) be endowed with Darboux coordinates \((q^i,p_i^\alpha,z^\alpha)\). If a $k$-tangent vector \(\mathbf Z=(Z_1,\dots,Z_k)\) at a point of \( \mathcal{J}_{Q,k}\) has the components
\(
Z_\alpha=\sum_{i=1}^n Z_\alpha^i\frac{\partial}{\partial q^i}+\sum_{i=1}^n\sum_{\beta=1}^k (Z_{\alpha })_i^{\beta}\frac{\partial}{\partial p_i^\beta}+\sum_{\beta=1}^k Z_{\alpha}^{\beta}\frac{\partial}{\partial z^\beta},
\) for $\alpha=1,\ldots,k$, 
then 
\[
\ker\chi=\left\{\mathbf Z=(Z_1,\dots,Z_k)\in {\textstyle\bigoplus^k} V(\pi_Q) \ \Big\vert\  \sum_{\alpha=1}^k(Z_{\alpha })_i^{\alpha}=0\,,\ \ \sum_{\alpha=1}^k Z_\alpha^\alpha=0\,,\,\ i=1,\ldots,n\right\},
\]
where \(\pi_Q\colon\bigoplus^k\T^*Q\times\mathbb R^k\to Q\) is the canonical projection and \(V(\pi_Q)=\ker \T\pi_Q\). In particular,
\(
\ker\chi\subset \bigoplus^k V(\pi_Q),
\)
and \(\dim \ker\chi=(n+1)(k^2-1)\).
\end{proposition}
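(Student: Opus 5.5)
The plan is a direct coordinate computation of $\chi(\mathbf Z)$ in the Darboux coordinates of Example~\ref{ex:canonical-k-contact-structure}. First compute the scalar part. Since $\eta^\alpha=\dd z^\alpha-\sum_i p^\alpha_i\dd q^i$,
\[
\iota_{\mathbf Z}\bm\eta=\sum_{\alpha=1}^k\Big(Z^\alpha_\alpha-\sum_{i=1}^n p^\alpha_i Z^i_\alpha\Big).
\]
Then compute the one-form part. Using $\dd\eta^\alpha=\sum_i \dd q^i\wedge\dd p^\alpha_i$,
\[
\iota_{\mathbf Z}\dd\bm\eta=\sum_{\alpha=1}^k\sum_{i=1}^n\Big(Z^i_\alpha\,\dd p^\alpha_i-(Z_\alpha)^\alpha_i\,\dd q^i\Big).
\]
Setting $\chi(\mathbf Z)=0$ and reading off coefficients gives the kernel conditions.

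The coefficient of $\dd p^\alpha_i$ is $Z^i_\alpha$, with no summation, because each pair $(i,\alpha)$ labels a distinct coordinate. Hence $Z^i_\alpha=0$ for all $i$ and $\alpha$, which is exactly $Z_\alpha\in V(\pi_Q)$, so $\mathbf Z\in\bigoplus^kV(\pi_Q)$. The coefficient of $\dd q^i$ gives $\sum_\alpha (Z_\alpha)^\alpha_i=0$ for $i=1,\ldots,n$. Once $Z^i_\alpha=0$ is known, the scalar equation reduces to $\sum_\alpha Z^\alpha_\alpha=0$. Conversely, any $\mathbf Z$ satisfying these conditions has both parts of $\chi(\mathbf Z)$ vanishing, by the same formulas. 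This proves the set equality and the inclusion $\ker\chi\subset\bigoplus^kV(\pi_Q)$.

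For the dimension, observe that $\bigoplus^kV(\pi_Q)$ at a point has dimension $k(nk+k)=k^2(n+1)$. It is parametrised by the $(Z_\alpha)^\beta_i$ and the $Z^\beta_\alpha$. The remaining $n+1$ conditions are independent linear equations, since each involves a disjoint set of variables: the diagonal entries $(Z_\alpha)^\alpha_i$ for a fixed $i$, and the $Z^\alpha_\alpha$ respectively. Each equation is nontrivial. Therefore
\[
\dim\ker\chi=k^2(n+1)-(n+1)=(n+1)(k^2-1).
\]

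The only step needing care is the sign and index bookkeeping in $\iota_{Z_\alpha}(\dd q^i\wedge\dd p^\alpha_i)$. This contraction equals $Z^i_\alpha\dd p^\alpha_i-(Z_\alpha)^\alpha_i\dd q^i$, so only the matching upper index $\alpha$ of the $p$-component enters. The rest is routine linear algebra; for $k=1$ the formula gives dimension $0$, consistent with the injectivity remark.
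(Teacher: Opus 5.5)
Your proposal is correct and follows essentially the same route as the paper. Both compute $\iota_{\mathbf Z}\dd\bm\eta_{Q,k}$ and $\iota_{\mathbf Z}\bm\eta_{Q,k}$ in Darboux coordinates, read off the coefficients of $\dd p^\alpha_i$, $\dd q^i$ and the scalar part, and count $k^2(n+1)$ parameters minus $n+1$ independent conditions. Your explicit argument that these $n+1$ conditions are independent, because they involve disjoint sets of variables, adds a justification that the paper only asserts.
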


\begin{proof}
For the canonical $k$-contact form $\bm\eta_{Q,k}$, the components of the morphism $\chi$ read 
\[
\iota_{\mathbf Z}\dd\bm\eta_{Q,k}=\sum_{i=1}^n\sum_{\alpha=1}^k(Z_\alpha^i\,\dd p_i^\alpha-(Z_{\alpha})_i^{\alpha}\,\dd q^i)\,,\qquad
\iota_{\mathbf Z}\bm\eta_{Q,k}=\sum_{\alpha=1}^k(Z_\alpha^\alpha-\sum_{i=1}^np_i^\alpha Z_\alpha^i)\,.
\]
By linear independence of the forms \(\dd p_i^\alpha\) and \(\dd q^i\), the $k$-vector field \(\mathbf Z\) takes values in \(\ker\chi\) if, and only if,
\(
Z_\beta^i=0,\ \sum_{\alpha=1}^kZ_{\alpha i}^{\alpha}=0,\ \sum_{\alpha=1}^k Z_{\alpha}^{\alpha}=0,
\) for $\beta=1,\ldots,k, i=1,\ldots,n$. 
The conditions \(Z_\beta^i=0\) mean exactly that each \(Z_\alpha\) is vertical with respect to \(\pi_Q\), and \(\mathbf Z\) belongs to \(\bigoplus^kV(\pi_Q)\). The dimension count is immediate: the coefficients \(Z_{\alpha i}^{\beta}\) and \(Z_\alpha^\beta\) give \(nk^2+k^2\) parameters, and the above relations impose \(n+1\) independent conditions.
\end{proof}

\subsection{On \texorpdfstring{$k$}{}-contact Hamiltonian systems}
\label{sub:k-contact-Hamiltonian-systems}

After introducing the geometric framework of $k$-contact geometry, we now address its Hamiltonian formulation of field theories. 

\begin{definition}
The {\it $k$-contact Hamilton--De Donder--Weyl equations} for a $k$-vector field
${\bf X} = (X_\alpha) \in \mathfrak{X}^k(M)$ associated with $h\in \Cinfty(M)$ and $(M,\bm \eta)$ are
\begin{equation}\label{eq:k-contact-HdDW-fields1}
\iota_{\bm X}\dd\bm\eta=\sum_{\alpha=1}^k \inn{X_\alpha} \dd \eta^\alpha
= \dd h - \sum_{\alpha=1}^k (\Lie_{R_\alpha} h)\, \eta^\alpha\,,\qquad
\iota_{\bm X}\bm\eta=\sum_{\alpha=1}^k \inn{X_\alpha} \eta^\alpha
= -h \,.
\end{equation}
A $k$-vector field ${\bm X}$ satisfying these equations for some $h$ is called an
\textit{${\bm \eta}$-Hamiltonian $k$-vector field}. Meanwhile, a triple $(M,\bm \eta,h)$ is called an {\it $\bm\eta$-Hamiltonian system} and $h \in \Cinfty(M)$ is its \textit{Hamiltonian function}.
\end{definition}

The space of all $\bm \eta$-Hamiltonian $k$-vector fields is a vector space. The $\bm \eta$-Hamiltonian $k$-vector fields related to the zero function are called {\it $\bm\eta$-gauge $k$-vector fields}. They also span a vector space. From now on, $(M,\bm\eta)$ always stands for a $k$-contact manifold, and $(M,\bm\eta,h)$ is an ${\bm \eta}$-Hamiltonian system. 

It is worth stressing that every $(M,\bm \eta,h)$ induces an ${\bm \eta}$-Hamiltonian $k$-vector field \cite{HLM_26,Riv_22}. Indeed, one has the following result (see \cite[Theorem 3.6]{HLM_26} for more details).

\begin{proposition}\label{prop:k-contact-HdDW-have-solutions}
The $k$-contact Hamilton--De Donder--Weyl equations
\eqref{eq:k-contact-HdDW-fields1} admit solutions for every $h\in \Cinfty(M)$. These solutions are not unique when $k>1$.
\end{proposition}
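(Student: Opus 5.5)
The plan is to treat the equations \eqref{eq:k-contact-HdDW-fields1} as a pointwise linear problem for the morphism $\chi$. Solving them means finding ${\bm X}$ with
\[
\chi({\bm X})=\Big(\dd h-\sum_{\alpha=1}^k(\Lie_{R_\alpha}h)\eta^\alpha,\,-h\Big).
\]
Existence follows if this right-hand side lies in the image of $\chi$ at every point and if a smooth preimage can be chosen. Non-uniqueness follows from $\ker\chi\neq 0$ when $k>1$.

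\textbf{Step 1: a particular solution.} Use the splitting $\T M=\ker\bm\eta\oplus\ker\dd\bm\eta$ and look for ${\bm X}$ of the form $X_1=Y+fR_1$ and $X_\alpha=0$ for $\alpha\geq 2$. Then $\iota_{\bm X}\bm\eta=\eta^1(Y)+f$. Take $Y\in\ker\bm\eta$ and $f=-h$, which gives the second equation. For the first equation we need $\iota_Y\dd\eta^1=\dd h-\sum_\alpha (R_\alpha h)\eta^\alpha$. This cannot work with only $\eta^1$ in general, because $\dd\eta^1$ alone may be very degenerate on $\ker\bm\eta$.

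So use all components instead. Consider the map
\[
\Phi\colon(Y_1,\dots,Y_k)\in\textstyle\bigoplus^k\ker\bm\eta\longmapsto\sum_\alpha\iota_{Y_\alpha}\dd\eta^\alpha\in\cT M .
\]
The right-hand side $\beta=\dd h-\sum_\alpha(R_\alpha h)\eta^\alpha$ annihilates every $R_\beta$, so $\beta$ lies in $(\ker\dd\bm\eta)^\circ$. The image of $\Phi$ also lies in $(\ker\dd\bm\eta)^\circ$, since $\iota_{R_\beta}\dd\eta^\alpha=0$.

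The key claim is that $\Phi$ is onto $(\ker\dd\bm\eta)^\circ$. Argue by annihilators: a vector $v$ killed by every $\iota_{Y}\dd\eta^\alpha$ for all $Y\in\ker\bm\eta$ and all $\alpha$ must satisfy $\iota_v\dd\eta^\alpha|_{\ker\bm\eta}=0$ for every $\alpha$. Decompose $v=v_0+\sum c^\beta R_\beta$ with $v_0\in\ker\bm\eta$. Then $v_0$ lies in $\ker\bm\eta\cap\ker\dd\bm\eta=0$, because $\dd\eta^\alpha$ already vanishes on pairs involving Reeb fields, so $\iota_{v_0}\dd\eta^\alpha$ vanishes on all of $\T M$. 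Hence the annihilator of $\Ima\Phi$ is exactly $\ker\dd\bm\eta$, which gives surjectivity.

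Having $(Y_\alpha)$, set $X_\alpha=Y_\alpha-hR_1\delta_\alpha^1$. Both equations then hold, since the Reeb term does not affect the $\dd\bm\eta$ part. Alternatively, in Darboux coordinates one may write explicitly $X_\alpha=\frac{\partial h}{\partial p^\alpha_i}\partial_{q^i}+\cdots$. For smoothness, note that $\Phi$ is a constant-rank vector bundle morphism between smooth bundles, so a smooth right inverse exists on a complement of $\ker\Phi$ (via a fibre metric). Gluing is unnecessary because the construction is global.

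\textbf{Step 2: non-uniqueness.} The solution set is an affine space modelled on $\ker\chi$, which consists of the gauge $k$-vector fields. It remains to show $\ker\chi\neq0$ for $k>1$. By a rank count, $\dim\bigoplus^k\T M=k\dim M$, while $\dim(\cT M\times\R)=\dim M+1$. Since $\dim M\geq k+1$ because $\ker\bm\eta\neq0$, we get $k\dim M>\dim M+1$ whenever $k\geq2$. The kernel therefore has positive rank, and it is a smooth subbundle because Step 1 shows $\chi$ has constant rank. Any nonzero smooth local section, extended by a bump function, gives a second solution. In the polarised case, Proposition~\ref{prop:kernel-flat-k-contact-Darboux} gives the dimension explicitly.

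\textbf{Main obstacle.} I expect the surjectivity claim in Step 1 to be the main difficulty. The argument must use only $\ker\bm\eta\oplus\ker\dd\bm\eta=\T M$ and no Darboux coordinates, since polarisation is not assumed. The annihilator argument above is my intended route.
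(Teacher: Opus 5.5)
Your proof is correct and follows essentially the paper's route. The paper defers this proposition to \cite{HLM_26}, but its proof of the evolution analogue, Proposition~\ref{prop:k-contact-evolution-HdDW-have-solutions}, uses the same steps: an annihilator argument shows that $\sum_\alpha\iota_{(\cdot)}\dd\eta^\alpha$ maps onto $(\ker\dd\bm\eta)^\circ$, a Reeb vector field then corrects the $\bm\eta$-part, and non-uniqueness comes from gauge terms in $\ker\chi$. Your variations are minor: taking the whole of $\bigoplus^k\T M$ as domain makes the annihilator equal to $\ker\dd\bm\eta$ at once, without needing the splitting; the explicit gauge term $(R_2,0,\dots,0)$, or $(f_1R_1,\dots,f_kR_k)$ with $\sum_\alpha f_\alpha=0$, could replace your rank count; and your smooth-preimage argument via a fibre metric fills in a smoothness point the paper leaves implicit.
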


Let ${\bf X} = (X_1,\dotsc,X_k) \in \mathfrak{X}^k(M)$ be a $k$-vector field with local expression in Darboux coordinates
\[
X_\alpha
=
\sum_{i=1}^n (X_\alpha)^i \parder{}{q^i}
\;+\;
\sum_{\beta=1}^k \sum_{i=1}^n (X_\alpha)^\beta_i \parder{}{p_i^\beta}
\;+\;
\sum_{\beta=1}^k (X_\alpha)^\beta \parder{}{z^\beta}\,,
\qquad \alpha=1,\dotsc,k\,.
\]
Note that the existence of Darboux coordinates for some $(M,\bm\eta)$, namely putting $\bm\eta$ locally of the form ${\bm \eta}=\sum_{\alpha=1}^k (\dd z^\alpha -\sum_{i=1}^np^\alpha_i\dd q^i)\otimes e_\alpha$, yields that $(M,\bm\eta)$ induces, locally, a polarised $k$-contact manifold. Then, equations \eqref{eq:k-contact-HdDW-fields1} imply
\begin{equation}\label{eq:k-contact-HdDW-fields-Darboux-coordinates1}
\begin{dcases}
(X_\beta)^i = \parder{h}{p_i^\beta}\,,\\
\sum_{\alpha=1}^k (X_\alpha)^\alpha_i
= -\left( \parder{h}{q^i} + \sum_{\alpha=1}^k p_i^\alpha \parder{h}{z^\alpha} \right)\,,\\
\sum_{\alpha=1}^k (X_\alpha)^\alpha
= \sum_{\alpha=1}^k \sum_{j=1}^n p_j^\alpha \parder{h}{p_j^\alpha} - h\,,
\end{dcases}\qquad i=1,\dotsc,n\,,\qquad \beta=1,\dotsc,k\,.
\end{equation}

The components $(X_\alpha)^\beta$ and $(X_\alpha)^\beta_i$ with $\alpha \neq \beta$ do not appear explicitly in equations \eqref{eq:k-contact-HdDW-fields-Darboux-coordinates1}. Hence, given an $\bm \eta$-Hamiltonian $k$-vector field associated with $(M,\bm \eta,h)$, these components may be chosen arbitrarily without affecting the ${\bm \eta}$-Hamiltonian character of ${\bf X}$. Nevertheless, such choices may be relevant for ensuring the integrability of ${\bf X}$.

\begin{definition}\label{dfn:k-contact-Hamiltonian-system}
Given a map $\psi \colon D \subset \mathbb{R}^k \to M$, the \textit{$k$-contact Hamilton--De Donder--Weyl equations} related to the $\bm \eta$-Hamiltonian system $(M,\bm \eta,h)$ are
\begin{equation}\label{eq:k-contact-HdDW}
\sum_{\alpha=1}^k \inn{\psi'_\alpha} \dd \eta^\alpha
= \left( \dd h - \sum_{\alpha=1}^k (\Lie_{R_\alpha} h)\, \eta^\alpha \right) \circ \psi\,,\qquad
\sum_{\alpha=1}^k \inn{\psi'_\alpha} \eta^\alpha
= - h \circ \psi\,.
\end{equation}
\end{definition}

Note that a solution for the $k$-contact HdDW equations \eqref{eq:k-contact-HdDW} gives rise to a $k$-vector field on ${\rm Im}\,\psi$, which is naturally diffeomorphic to $D$.

In \textit{Darboux coordinates} for a polarised $k$-contact manifold $(M,\bm \eta,\mathcal{V})$, a section $\psi\colon\mathbb{R}^k\rightarrow M$ takes the form $\psi(t) = (q^i(t),p_i^\alpha(t), z^\alpha(t))$, while equations \eqref{eq:k-contact-HdDW} read
\begin{equation}\label{eq:k-contact-HdDW-Darboux-coordinates}
\begin{dcases}
\parder{q^i}{t^\beta} = \parder{h}{p_i^\beta} \circ \psi\,,\\
\sum_{\alpha=1}^k \parder{p_i^\alpha}{t^\alpha}
= -\left( \parder{h}{q^i} + \sum_{\alpha=1}^k p_i^\alpha \parder{h}{z^\alpha} \right) \circ \psi\,,\qquad i=1,\dotsc,n \qquad \beta=1,\dotsc,k, \\
\sum_{\alpha=1}^k \parder{z^\alpha}{t^\alpha}
= \left( \sum_{\alpha=1}^k \sum_{j=1}^n p_j^\alpha \parder{h}{p_j^\alpha} - h \right) \circ \psi\,.
\end{dcases}
\end{equation}

Furthermore, the existence of solutions of equations \eqref{eq:k-contact-HdDW-fields1}
does not imply the integrability of the associated $k$-vector fields. Indeed, as in the $k$-symplectic case, equations \eqref{eq:k-contact-HdDW} and
\eqref{eq:k-contact-HdDW-fields1} are not fully equivalent, since a $k$-vector field solving \eqref{eq:k-contact-HdDW-fields1} does not need to give rise to a solution
of \eqref{eq:k-contact-HdDW}. For instance, this happens when $X_1,\dotsc,X_k$ do not commute between themselves at any point of $M$ (see \cite{GGMRR_20} for further details).  The second line of equations in \eqref{eq:k-contact-HdDW-Darboux-coordinates} are called the \textit{balance equations} of the $k$-contact Hamilton--De Donder--Weyl equations, since they are related to the balance of momenta and dissipation in field theories \cite{Riv_22}.

\begin{definition} \label{def:IsotropicDistribution}
The {\it $k$-contact orthogonal} of a vector subspace $E_x\subset \ker \bm \eta_x$, for $x\in M$ and a $k$-contact manifold $(M,\bm \eta)$, is the vector subspace
\begin{equation}\label{eq:kConOrth}
E_x^{\perp_{\bm\eta}}:=\{v_x\in \ker \bm \eta_x \mid \dd\bm\eta(v_x,w_x)=0\,,\ \forall w_x\in E_x\}\,.
\end{equation}
Then, $E_x$ is {\it isotropic} if $E_x\subset E_x^{\perp_{\bm\eta}}$. Meanwhile, $E_x$ is {\it Legendrian} if $E_x$ is isotropic and it admits a complement $F_x$ such that $F_x\oplus E_x=\ker \bm \eta_x$ and $\dd\bm\eta\vert_{F_x\times F_x}=0$. A distribution $E\subset \T M$ is {\it isotropic} (resp. {\it Legendrian}) if every $E_x$, with $x\in M$, is isotropic (resp. Legendrian). A submanifold of $M$ is called isotropic (resp. Legendrian) if its tangent space at each point is isotropic (resp. Legendrian). A subspace $W_x\subset \T_xM$ is called {\it coisotropic} if $(W_x\cap \ker \bm\eta_x)^{\perp_{\bm \eta}} \subset W_x\cap \ker\bm\eta_x$ and {\it maximally coisotropic} if $(W_x\cap \ker \bm\eta_x)^{\perp_{\bm \eta}} = W_x\cap \ker\bm\eta_x$. 
\end{definition}

The following lemma is necessary so as to stress why our results retrieve natural Hamilton--Jacobi equations for contact Hamiltonian systems when $k=1$. 
\begin{lemma}\label{lem:coisotropic-max-coisotropic-contact-case}
Let $(M,\eta)$ be a contact manifold with $\dim M=2n+1$ and let $N\subset M$ be a submanifold of dimension $n+1$ such that $\restr{\eta}{\T N}$ is nowhere vanishing. Then, $N$ is coisotropic if and only if it is maximally coisotropic.
\end{lemma}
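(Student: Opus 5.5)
One direction is immediate from Definition~\ref{def:IsotropicDistribution}: maximal coisotropy is an equality and coisotropy is the corresponding inclusion. The work is in the converse, and the plan is a dimension count at each point $x\in N$. Write $W_x=\T_xN$ and $E_x=W_x\cap\ker\eta_x$.

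First, $\restr{\eta}{\T N}$ is nowhere vanishing, so $\eta_x$ restricted to $W_x$ is a nonzero linear form. Hence $E_x$ is a hyperplane in $W_x$, and $\dim E_x=n$.

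Second, I would compute $\dim E_x^{\perp_\eta}$. For $k=1$ we have $\T_xM=\ker\eta_x\oplus\ker\dd\eta_x$ with $\dim\ker\dd\eta_x=1$. So $\ker\eta_x$ has dimension $2n$ and $\dd\eta_x$ restricts to a nondegenerate, i.e. symplectic, form on it. If $\dd\eta_x$ were degenerate on $\ker\eta_x$, a nonzero kernel vector there would lie in $\ker\dd\eta_x\cap\ker\eta_x=0$, which is a contradiction. By the definition in \eqref{eq:kConOrth}, $E_x^{\perp_\eta}$ is exactly the symplectic orthogonal of $E_x$ inside $(\ker\eta_x,\dd\eta_x)$. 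Standard linear symplectic algebra then gives
\[
\dim E_x^{\perp_\eta}=2n-\dim E_x=n .
\]

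Third, assume $N$ is coisotropic, so $E_x^{\perp_\eta}\subset E_x$. Both spaces have dimension $n$, so the inclusion is an equality. This is maximal coisotropy at $x$, and since $x\in N$ was arbitrary, $N$ is maximally coisotropic.

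The main point to handle carefully is the nondegeneracy of $\dd\eta_x$ on $\ker\eta_x$. I will derive it directly from the splitting $\ker\eta\oplus\ker\dd\eta=\T M$ in Definition~\ref{def:k-form}, as above, rather than quoting a separate result. The nowhere-vanishing hypothesis is used only to get $\dim E_x=n$ rather than $n+1$. It is also needed: if $W_x\subset\ker\eta_x$, then $E_x=W_x$ would have dimension $n+1$ and could not be contained in an $n$-dimensional orthogonal, so the equivalence would fail.
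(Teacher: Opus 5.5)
Your proof is correct and is the same pointwise dimension count as the paper's ($\dim E_x=n$ and $\dim E_x^{\perp_\eta}=2n-n=n$, so the inclusion forces equality); you additionally justify that $\dd\eta_x$ is nondegenerate on $\ker\eta_x$ via the splitting, which the paper simply asserts. Your closing aside is slightly off: if $\T_xN\subset\ker\eta_x$, then $\dim E_x^{\perp_\eta}=n-1$, so what fails is the equality $E_x^{\perp_\eta}=E_x$ (not a containment of $E_x$ in an $n$-dimensional space), but this does not affect the proof.
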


\begin{proof}
Fix $x\in N$ and set $W_x\coloneqq \T_xN\cap\ker\eta_x$. Since $\restr{\eta_x}{\T_xN}$ is a nonzero linear form on the $(n+1)$-dimensional space $\T_xN$, one has that $\T_xN+\ker \eta_x$ has dimension $2n+1$ and $\dim W_x=n$. On the other hand, $(\ker\eta_x,\dd\eta_x|_{\ker\eta_x})$ is a symplectic linear space of dimension $2n$, so $\dim W_x^{\perp}=2n-\dim W_x=n$. If $N$ is coisotropic, then $W_x^{\perp_\eta}\subset W_x$, and since both spaces have dimension $n$, they must be equal. The converse is immediate.
\end{proof}

\begin{example}
Let us illustrate our theory by studying a damped vibrating membrane on an elastic foundation. Consider coordinates $(t,x,y)$ on $\mathbb{R}^3$ and let $Q=\mathbb{R}$. The phase space is \(\mathcal{J}_{\mathbb{R},3}=\big(\cT\mathbb{R}\oplus \cT\mathbb{R}\oplus \cT\mathbb{R}\big)\times\mathbb{R}^3\) with coordinates $(u,p^t,p^x,p^y,z^t,z^x,z^y)$. The variable $u$ represents the transverse displacement of the membrane, $\lambda\geq 0$ is a damping coefficient, and $\kappa\geq 0$ is the stiffness of a linear elastic foundation. Consider the Hamiltonian
\begin{equation}\label{eq:Hamiltonian-damped-membrane}
h(u,p^t,p^x,p^y,z^t,z^x,z^y)=\frac12\Big((p^t)^2-c^{-2}(p^x)^2-c^{-2}(p^y)^2\Big)+\frac{\kappa}{2}u^2+\lambda z^t ,
\end{equation} 
where \(c>0\) is the propagation speed. The Hamilton--De Donder--Weyl equations for the associated $\bm\eta_{\mathbb{R},3}$-Hamiltonian three-vector field read
\[
\begin{dcases}
\parder{u}{t}=p^t,\\
\parder{u}{x}=-c^{-2}p^x,\\
\parder{u}{y}=-c^{-2}p^y,\\
\parder{p^t}{t}+\parder{p^x}{x}+\parder{p^y}{y}=-\kappa u-\lambda p^t,\\
\parder{z^t}{t}+\parder{z^x}{x}+\parder{z^y}{y}=\frac12\Big((p^t)^2-c^{-2}(p^x)^2-c^{-2}(p^y)^2\Big)-\frac{\kappa}{2}u^2-\lambda z^t .
\end{dcases}
\]
Using the first three equations in the fourth one, we obtain
\begin{equation}\label{eq:damped-membrane-foundation}
u_{tt}-c^2(u_{xx}+u_{yy})+\lambda u_t+\kappa u=0,
\end{equation}
that is,
\[
u_{tt}-c^2\Delta u+\lambda u_t+\kappa u=0,
\qquad
\Delta=\partial^2_{x}+\partial^2_{y}.
\]
Equation \eqref{eq:damped-membrane-foundation} describes a damped vibrating membrane on a linear elastic foundation. When $\kappa=0$, one recovers the standard damped membrane equation.
\end{example}

\subsection{Evolution \texorpdfstring{$k$}{}-contact \texorpdfstring{$k$}{}-vector fields}
\label{subsec:evolution-k-contact-k-vector-fields}

Let us develop a generalisation of contact evolution vector fields \cite{LL_19} to the realm of $k$-contact geometry. This will help us develop a theory of Hamilton--Jacobi equations in a new realm for $k$-contact manifolds. 

On a co-oriented one-contact manifold $(M,\eta)$, an evolution vector field is a vector field $E$ on $M$ that satisfies the equation
\[
\iota_E\dd\eta = \dd h - (\Lie_R h)\eta\,, \qquad \iota_E\eta = 0\,,
\]
for a certain $h\in \Cinfty(M)$. In fact, $E$ is uniquely determined from $h$ because $\ker \eta\oplus\ker \d\eta=\T M$. 
Meanwhile, for the $k$-contact setting with $k>1$, an evolution $k$-contact $k$-vector field is defined as follows. 

\begin{definition}
An {\it evolution $k$-vector field} for a co-oriented $k$-contact manifold $(M,\bm \eta)$ is a $k$-vector field ${\bm E}$ on $M$ satisfying
\begin{equation}\label{eq:k-contact-evolution-HdDW-fields}
\iota_{\bm E}\dd\bm\eta=\sum_{\alpha=1}^k\inn{E_\alpha}\dd\eta^\alpha = \dd h - \sum_{\alpha=1}^k(\Lie_{R_\alpha}h)\eta^\alpha\,,\qquad
\iota_{\bm E}\bm \eta=\sum_{\alpha=1}^k\inn{E_\alpha}\eta^\alpha = 0\,,
\end{equation}
for some function $h\in \Cinfty (M)$. The above equations for a fixed $h$ are called the {\it evolution $k$-contact Hamilton--De Donder--Weyl equations} for $k$-contact vector fields. We also call $(M,\bm \eta,h)_{\bm E}$ an {\it evolution ${\bm \eta}$-Hamiltonian system} and ${\bm E}$ an $\bm \eta$-evolution $k$-vector field.
\end{definition}

The main difference between the previous system \eqref{eq:k-contact-evolution-HdDW-fields} and the $k$-contact HdDW equations for ${\bm \eta}$-Hamiltonian $k$-vector fields relies on the fact that $\iota_{\bm E}\bm \eta=0$ instead of $\iota_{\bm E}\bm \eta=-h$. The following result is a simple modification of \cite[Proposition 3.5]{HLM_26} to the evolution $k$-contact realm.

\begin{proposition}\label{prop:k-contact-evolution-HdDW-have-solutions}
Evolution $k$-contact Hamilton--De Donder--Weyl equations \eqref{eq:k-contact-evolution-HdDW-fields} admit solutions. They are not unique if $k > 1$.
\end{proposition}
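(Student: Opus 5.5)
The plan is to reduce the evolution equations \eqref{eq:k-contact-evolution-HdDW-fields} to a pointwise linear-algebra problem for the vector bundle morphism $\chi\colon\bigoplus^k\T M\to\cT M\times\mathbb{R}$, $\chi({\bm Z})=(\iota_{\bm Z}\dd\bm\eta,\iota_{\bm Z}\bm\eta)$. Solving \eqref{eq:k-contact-evolution-HdDW-fields} for a fixed $h$ amounts to finding a smooth section ${\bm E}$ with $\chi({\bm E})=(\dd h-\sum_\alpha(\Lie_{R_\alpha}h)\eta^\alpha,0)$. Existence and non-uniqueness will follow once we show two things: first, that this right-hand side lies pointwise in $\Ima\chi$, and second, that $\ker\chi\neq 0$ for $k>1$.

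\textbf{Image of $\chi$.} Write $\sigma_h=\dd h-\sum_\alpha(\Lie_{R_\alpha}h)\eta^\alpha$ and note that $\iota_{R_\beta}\sigma_h=\Lie_{R_\beta}h-\Lie_{R_\beta}h=0$ for every $\beta$. Hence $\sigma_h$ annihilates $\ker\dd\bm\eta=\langle R_1,\dots,R_k\rangle$ by Theorem~\ref{thm:k-contact-Reeb}. I will use the flat map
\[
\flat\colon\ker\bm\eta\to(\ker\dd\bm\eta)^\circ,\qquad v\mapsto\iota_v\dd\bm\eta=\textstyle\sum_\alpha\iota_v\dd\eta^\alpha\otimes e_\alpha .
\]
The key sub-claim is that the combined map $(v_1,\dots,v_k)\in\bigoplus^k\ker\bm\eta\mapsto\sum_\alpha\iota_{v_\alpha}\dd\eta^\alpha\in(\ker\dd\bm\eta)^\circ$ is surjective. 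In the first instance I will argue this in Darboux coordinates, using Theorem~\ref{thm:kcontDarboux} or the local model of Example~\ref{ex:canonical-k-contact-structure}. Choose $v_\beta=\sum_i\frac{\partial h}{\partial p_i^\beta}\bigl(\partial_{q^i}+\sum_\gamma p_i^\gamma\partial_{z^\gamma}\bigr)+\sum_i a_{\beta i}\partial_{p_i^\beta}$ with $\sum_\beta a_{\beta i}=-(\partial_{q^i}h+\sum_\alpha p^\alpha_i\partial_{z^\alpha}h)$. Compared with \eqref{eq:k-contact-HdDW-fields-Darboux-coordinates1}, only the third equation changes: it becomes $\sum_\alpha(E_\alpha)^\alpha=\sum_{\alpha,j}p_j^\alpha\partial h/\partial p_j^\alpha$. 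This is satisfied automatically by the displayed choice, because each $v_\beta\in\ker\bm\eta$ and the diagonal $z$-components are exactly those terms.

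For a general $k$-contact manifold without a polarisation, I will instead follow the argument of \cite[Proposition 3.5]{HLM_26}. The plan there is to show that the restriction of $\chi$ to $\bigoplus^k\ker\bm\eta$ has image exactly $(\ker\dd\bm\eta)^\circ\times\{0\}$. I would argue by annihilators: a vector $w$ killing all $\iota_v\dd\eta^\alpha$, for all $\alpha$ and all $v\in\ker\bm\eta$, lies in $\ker\dd\bm\eta$ restricted appropriately, using $\ker\bm\eta\oplus\ker\dd\bm\eta=\T M$.

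\textbf{Smoothness and non-uniqueness.} Once pointwise solvability holds and $\chi$ restricted to $\bigoplus^k\ker\bm\eta$ has constant rank onto the smooth bundle $(\ker\dd\bm\eta)^\circ$, I will obtain smooth local solutions by a smooth right inverse, for instance via a Riemannian metric. These are glued with a partition of unity, which is legitimate because the equations are affine in ${\bm E}$. For non-uniqueness with $k>1$, the solution set is an affine space modelled on $\Gamma(\ker\chi)$. In the polarised case Proposition~\ref{prop:kernel-flat-k-contact-Darboux} gives $\dim\ker\chi=(n+1)(k^2-1)>0$. In general, a rank count gives $\dim\ker\chi\ge k\dim M-(\dim M+1)>0$ when $k\ge2$ and $\dim M\ge 2$, where $\dim M\ge 2$ holds since $\ker\bm\eta\neq0$. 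Adding any nonzero gauge $k$-vector field to a solution yields another solution.

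The main obstacle is the surjectivity claim in the general, non-polarised case, namely establishing that $\Ima\chi\supseteq(\ker\dd\bm\eta)^\circ\times\{0\}$ with constant rank. I would first try to transport the proof of \cite[Proposition 3.5]{HLM_26} verbatim, which only alters the second component of the target.
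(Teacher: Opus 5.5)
Your proposal is correct and is essentially the paper's argument: the paper also shows by annihilators that $\bm E\mapsto\iota_{\bm E}\dd\bm\eta$ has image $(\ker\dd\bm\eta)^\circ$, but it lets $\bm E$ range over all of $\bigoplus^k\T M$ and then enforces $\iota_{\bm E}\bm\eta=0$ by replacing $E_1$ with $E_1-(\iota_{\bm E}\bm\eta)R_1$, and it obtains non-uniqueness from the explicit gauge terms $(f_1R_1,\dots,f_kR_k)$ with $\sum_\alpha f_\alpha=0$ rather than from a rank count. The step you flag as the main obstacle is in fact immediate and needs no Darboux coordinates: if $\dd\eta^\alpha(v,w)=0$ for all $v\in\ker\bm\eta$ and all $\alpha$, then, since also $\dd\eta^\alpha(R_\beta,w)=0$, the splitting $\T M=\ker\bm\eta\oplus\ker\dd\bm\eta$ gives $\iota_w\dd\eta^\alpha=0$ for every $\alpha$, that is, $w\in\ker\dd\bm\eta$.
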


\begin{proof}
Consider the mapping
$
\textstyle\rho\colon {\bm E}\in\bigoplus^k\T M\longmapsto \iota_{\bm E}\dd\bm \eta\in \cT M\,.
$
One has that $\iota_{R_\alpha}\rho({\bm E})=0$ for every $R_1,\dotsc,R_k$. Hence, ${\rm Im}\,\rho\subset (\ker \dd \bm \eta)^\circ$. Moreover, ${\rm Im}\,\rho=(\ker \dd \bm \eta)^\circ$, otherwise, $({\rm Im}\,\rho)^\circ$ would have dimension bigger than $k$ and $({\rm Im}\,\rho)^\circ\cap \ker \bm \eta\neq 0$. But  ${\rm Im} \rho)^\circ\subset \ker\d\bm\eta$ and
$$
({\rm Im}\rho)^\circ\cap\ker \bm \eta\subset \ker\dd\bm\eta\cap\ker\bm\eta=0\,,
$$ 
which is a contradiction. Therefore, for every $h$ there exists some $k$-vector field ${\bm E}$ such that $\rho({\bm E})=\dd h-\sum_{\alpha=1}^k(\Lie_{R_\alpha}h)\eta^\alpha$. Then, $(E_1-(\iota_{\bm E}\bm \eta)R_1,\dotsc,E_k)$ is an ${\bm \eta}$-evolution $k$-vector field for $h$. 
    
If $k>1$ and ${\bm E}$ is a solution to evolution $k$-contact HdDW equations, one has that functions $f_1,\dotsc,f_k\in \Cinfty(M)$ such that $\sum_{\alpha=1}^kf_\alpha=0$ yield that ${\bm E}+(f_1R_1,\dotsc,f_kR_k)$ is another solution to evolution $k$-contact HdDW equations. 
\end{proof}


Consider a $k$-vector field ${\bf E} = (E_1,\dotsc,E_k)\in\mathfrak{X}^k(M)$. Its local expression in \textit{Darboux coordinates} in a co-oriented polarised $k$-contact manifold $(M,\bm \eta,\mathcal{V})$ reads 
\[
E_\alpha
=
\sum_{i=1}^n (E_\alpha)^i \frac{\partial}{\partial q^i}
\;+\;
\sum_{\beta=1}^k \sum_{i=1}^n (E_\alpha)^\beta_i \frac{\partial}{\partial p_i^\beta}
\;+\;
\sum_{\beta=1}^k (E_\alpha)^\beta \frac{\partial}{\partial z^\beta}\,,
\qquad \alpha=1,\dotsc,k\,.
\]
Moreover, equation \eqref{eq:k-contact-evolution-HdDW-fields} implies that
\begin{equation}\label{eq:k-contact-HdDW-fields-Darboux-coordinates}
\begin{dcases}
(E_\beta)^i = \parder{h}{p_i^\beta}\,,\\
\sum_{\alpha=1}^k  (E_\alpha)^\alpha_i = -\left( \parder{h}{q^i} + \sum_{\alpha=1}^k  p_i^\alpha\parder{h}{z^\alpha} \right)\,,\\
\sum_{\alpha=1}^k  (E_\alpha)^\alpha = \sum_{j=1}^n  \sum_{\alpha=1}^k  p_j^\alpha\parder{h}{p_j^\alpha}\,.
\end{dcases}\qquad i=1,\dotsc,n\,,\qquad \beta=1,\dotsc,k\,.
\end{equation}
There is therefore a significant difference relative to the standard $k$-contact Hamiltonian case: the sums of coefficients of ${\bm E}$ in the left-hand side of the above equations are always proportional to linear combinations of first-order derivatives of $h$, while $k$-contact Hamiltonian $k$-vector fields are not.

The following observation is immediate from the local expressions above, so we only record the statement.

\begin{lemma}
Let $\mathbf{E}=(E_1,\dots,E_k)$ be an $\bm \eta$-evolution $k$-vector field. Then, $\mathbf{E}'=(E'_1,\dots,E'_k)$ with
\[
E'_\alpha
=
E_\alpha
-
\sum_{\beta\neq\alpha}(E_\alpha)^\beta R_\beta\,,
\qquad \alpha=1,\dots,k\,,
\]
satisfies the same $k$-contact Hamilton--De Donder--Weyl equations as $\mathbf{E}$. In particular, ${\bm E'}$ is again an $\bm \eta$-evolution $k$-vector field for the same Hamiltonian function.
\end{lemma}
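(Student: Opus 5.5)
The plan is to check directly that the difference $\mathbf{E}'-\mathbf{E}$ is invisible to both evolution equations \eqref{eq:k-contact-evolution-HdDW-fields}. This difference is the $k$-vector field $\mathbf{G}=(G_1,\dots,G_k)$ with $G_\alpha=-\sum_{\beta\neq\alpha}(E_\alpha)^\beta R_\beta$, where the coefficients $(E_\alpha)^\beta$ are the $\partial/\partial z^\beta$-components of $E_\alpha$ in Darboux coordinates. Equivalently, $(E_\alpha)^\beta=\iota_{E_\alpha}\dd z^\beta$. Since both equations are affine in the $k$-vector field, it suffices to show that $\iota_{\mathbf{G}}\dd\bm\eta=0$ and $\iota_{\mathbf{G}}\bm\eta=0$, that is, $\mathbf{G}$ takes values in $\ker\chi$.

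For the first identity, each $G_\alpha$ lies in $\langle R_1,\dots,R_k\rangle=\ker\dd\bm\eta$ by Theorem~\ref{thm:k-contact-Reeb}. Hence $\iota_{G_\alpha}\dd\eta^\alpha=0$ for every $\alpha$, and therefore $\sum_\alpha\iota_{G_\alpha}\dd\eta^\alpha=0$. For the second identity, the relation $\iota_{R_\beta}\eta^\alpha=\delta^\alpha_\beta$ gives
\[
\iota_{G_\alpha}\eta^\alpha=-\sum_{\beta\neq\alpha}(E_\alpha)^\beta\delta^\alpha_\beta=0,
\]
so the sum vanishes as well. Consequently $\iota_{\mathbf{E}'}\dd\bm\eta=\iota_{\mathbf{E}}\dd\bm\eta=\dd h-\sum_\alpha(\Lie_{R_\alpha}h)\eta^\alpha$ and $\iota_{\mathbf{E}'}\bm\eta=\iota_{\mathbf{E}}\bm\eta=0$. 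This is exactly the statement that $\mathbf{E}'$ is an $\bm\eta$-evolution $k$-vector field for the same $h$.

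As a cross-check, I would also verify this against the local system \eqref{eq:k-contact-HdDW-fields-Darboux-coordinates}. There $R_\beta=\partial/\partial z^\beta$, so passing from $\mathbf{E}$ to $\mathbf{E}'$ only changes the off-diagonal coefficients $(E_\alpha)^\beta$ with $\beta\neq\alpha$, setting them to zero. None of these coefficients appears in \eqref{eq:k-contact-HdDW-fields-Darboux-coordinates}, whose only $z$-components enter through the diagonal sum $\sum_\alpha(E_\alpha)^\alpha$. This also matches the pattern of Proposition~\ref{prop:kernel-flat-k-contact-Darboux}, where the off-diagonal components lie in $\ker\chi$.

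The only delicate point is that the coefficients $(E_\alpha)^\beta$ are defined only through a choice of Darboux chart, so $\mathbf{E}'$ is a priori a local object. I would therefore state the result on the domain of a Darboux chart. The intrinsic computation above works for any functions multiplying the $R_\beta$ with $\beta\neq\alpha$, so the conclusion holds chart by chart, and globally whenever the coefficients are defined globally, for instance on $\mathcal{J}_{Q,k}$ with the global functions $z^\beta$.
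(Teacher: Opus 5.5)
Your proof is correct and follows essentially the paper's reasoning. The paper gives no separate proof: it only remarks that the claim is immediate from the local expressions \eqref{eq:k-contact-HdDW-fields-Darboux-coordinates}, where the off-diagonal coefficients $(E_\alpha)^\beta$ with $\beta\neq\alpha$ never appear, which is exactly your cross-check. Your main computation, showing that $\mathbf{E}'-\mathbf{E}$ takes values in $\ker\chi$ because $\iota_{R_\beta}\dd\eta^\alpha=0$ and $\iota_{R_\beta}\eta^\alpha=\delta^\alpha_\beta$, is the same observation in coordinate-free form, and your remark that $\mathbf{E}'$ is a priori defined only on a Darboux chart is a reasonable precision that the paper leaves implicit.
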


More frequently, one is concerned with the integral sections of an evolution $k$-vector field. In this sense, the interest is focused on the system of PDEs given below.

\begin{definition}\label{dfn:evolution-k-contact-Hamiltonian-system}
Given a $k$-contact Hamiltonian system $(M,\bm\eta,h)$, the \textit{evolution $k$-contact Hamilton--De Donder--Weyl (HdDW) equations} are
\begin{equation}\label{eq:k-contact-HdDW2}
\begin{dcases}
\sum_{\alpha=1}^k\inn{\psi_\alpha'}\dd\eta^\alpha = \left( \dd h - \sum_{\alpha=1}^k(\Lie_{R_\alpha}h)\eta^\alpha \right)\circ\psi\,,\\
\sum_{\alpha=1}^k \inn{\psi_\alpha'}\eta^\alpha = 0\,,
\end{dcases}
\end{equation}
where solutions are maps $\psi\colon D\subset\R^k\to M$, where $D$ is an open subset of $\R^k$. 
\end{definition}

In \textit{Darboux coordinates} for some $(M,\bm\eta,\mathcal{V})$, one may write $\psi(t) = (q^i(t),p_i^\alpha(t),z^\alpha(t))$ and equations \eqref{eq:k-contact-HdDW2} read
\begin{equation}\label{eq:k-contact-HdDW-Darboux-coordinates2}
\begin{dcases}
\parder{q^i}{t^\beta} = \parder{h}{p_i^\beta}\circ\psi\,,\\
\sum_{\alpha=1}^k\parder{p^\alpha_i}{t^\alpha} = -\left( \parder{h}{q^i} +\sum_{\alpha=1}^k p_i^\alpha\parder{h}{z^\alpha} \right)\circ\psi\,,\\
\sum_{\alpha=1}^k\parder{z^\alpha}{t^\alpha} =\sum_{\alpha=1}^k \sum_{j=1}^n p_j^\alpha\parder{h}{p_j^\alpha}\circ\psi\,,
\end{dcases}
\end{equation}
for $ i=1,\dotsc,n$ and $\beta=1,\dotsc,k$.

 \begin{lemma}\label{lem:canonical-evolution-lift}
Let \((P,\bm\omega=\sum_{\alpha=1}^k\omega^\alpha\otimes e_\alpha)\) be an exact \(k\)-symplectic manifold, with \(\bm \omega=-d\bm \theta\), and let \((M=P\times\mathbb R^k,\bm\eta=\sum_{\alpha=1}^k\eta^\alpha\otimes e_\alpha)\) be its induced \(k\)-contactification. Let \(\rho\colon M\to P\) be the canonical projection and \(H\in \Cinfty(P)\), set \(h\coloneqq H\circ\rho\), and let \(\mathbf X=(X_1,\dots,X_k)\) be a \(k\)-symplectic Hamiltonian \(k\)-vector field for \(H\), i.e.
\(
\sum_{\alpha=1}^k\iota_{X_\alpha}\omega^\alpha=\dd H.
\)
Then, there exists a canonical \(k\)-vector field \(\mathbf E=(E_1,\dots,E_k)\) on \(M\) such that \(\rho_*E_\alpha=X_\alpha\), \(\iota_{E_\alpha}\eta^\alpha=0\) and $\iota_{E_\alpha}\dd z^\beta =0$ for every \(\alpha,\beta=1,\dots,k\) with $\alpha\neq \beta$. It is given by
\begin{equation}\label{eq:Lift}
E_\alpha=\widetilde X_\alpha+\rho^*(\theta^\alpha(X_\alpha))\frac{\partial}{\partial z^\alpha}\,,\qquad \alpha=1,\dots,k\,,
\end{equation}
where \(\widetilde X_\alpha\) denotes the lift of \(X_\alpha\) to \(M\) tangent to the \(P\)-factor. Moreover, \(\mathbf E\) is an ${\bm \eta}$-evolution \(k\)-contact \(k\)-vector field for \(h\).
\end{lemma}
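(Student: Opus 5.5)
The proof will be a direct computation with the definitions and the explicit formula \eqref{eq:Lift}, carried out in three steps.

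\emph{Step 1: the conditions determine $\mathbf E$.} Let $E_\alpha$ be any vector field on $M$ with $\rho_*E_\alpha=X_\alpha$ and $\iota_{E_\alpha}\dd z^\beta=0$ for $\beta\neq\alpha$. Since $\T M=\T P\oplus\T\mathbb R^k$, such a field must have the form $E_\alpha=\widetilde X_\alpha+f_\alpha\,\partial/\partial z^\alpha$ for some function $f_\alpha$. The remaining condition is
\[
\iota_{E_\alpha}\eta^\alpha=f_\alpha-\rho^*(\theta^\alpha(X_\alpha))=0,
\]
which forces $f_\alpha=\rho^*(\theta^\alpha(X_\alpha))$. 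This gives both existence and uniqueness, which justifies calling $\mathbf E$ canonical. Here I use that $\eta^\alpha=\dd z^\alpha-\theta^\alpha_M$ and that $\theta_M^\alpha=\rho^*\theta^\alpha$ satisfies $\theta^\alpha_M(\widetilde X_\alpha)=\rho^*(\theta^\alpha(X_\alpha))$, because $\widetilde X_\alpha$ is $\rho$-related to $X_\alpha$.

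\emph{Step 2: the evolution HdDW equations.} The second equation in \eqref{eq:k-contact-evolution-HdDW-fields} follows immediately from Step 1, since $\iota_{\mathbf E}\bm\eta=\sum_\alpha\iota_{E_\alpha}\eta^\alpha=0$. For the first equation:
\begin{itemize}
\item The Reeb vector fields are $R_\alpha=\partial/\partial z^\alpha$, as shown in Example~\ref{ex:contactification-k-symplectic-manifold}.
\item Since $h=\rho^*H$ does not depend on $z$, one has $\Lie_{R_\alpha}h=0$, so the right-hand side reduces to $\dd h=\rho^*\dd H$.
\item On the left-hand side, $\dd\eta^\alpha=-\dd\theta^\alpha_M=\rho^*\omega^\alpha$. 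This form annihilates $\partial/\partial z^\alpha$, so $\iota_{E_\alpha}\dd\eta^\alpha=\iota_{\widetilde X_\alpha}\rho^*\omega^\alpha$.
\end{itemize}
It remains to check that $\iota_{\widetilde X_\alpha}\rho^*\omega^\alpha=\rho^*(\iota_{X_\alpha}\omega^\alpha)$. This holds because $\widetilde X_\alpha$ is $\rho$-related to $X_\alpha$: evaluate both sides on a vector $v$ and use $\rho_*\widetilde X_\alpha=X_\alpha\circ\rho$. Summing over $\alpha$ and using $\sum_\alpha\iota_{X_\alpha}\omega^\alpha=\dd H$ yields $\iota_{\mathbf E}\dd\bm\eta=\rho^*\dd H=\dd h$, as required.

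\emph{Main difficulty.} The only delicate points are bookkeeping. First, one must make precise the meaning of ``lift tangent to the $P$-factor'' using the product structure, so that $\widetilde X_\alpha$ has no $\partial/\partial z$ components. Second, one must keep track of the sign convention $\bm\omega=-\dd\bm\theta$, so that $\dd\eta^\alpha=+\rho^*\omega^\alpha$ and no sign discrepancy appears.

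As a cross-check, I will verify the result in Darboux-type coordinates. For $P=\bigoplus^k\T^*Q$ one has $\theta^\alpha(X_\alpha)=\sum_i p_i^\alpha\,\partial H/\partial p_i^\alpha$. Summing over $\alpha$ recovers exactly the third equation in \eqref{eq:k-contact-HdDW-fields-Darboux-coordinates}.
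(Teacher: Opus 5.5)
Your proposal is correct and follows essentially the same route as the paper, which first verifies the evolution equations for the explicit formula using $\dd\eta^\alpha=\rho^*\omega^\alpha$ and then proves uniqueness by writing a competitor as $\widetilde X_\alpha+f_\alpha^\beta\,\partial/\partial z^\beta$, whereas you run that uniqueness argument first and read off the formula from it. You also state explicitly that $\Lie_{R_\alpha}h=0$, a fact the paper uses without comment.
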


\begin{proof}
Since \(\eta^\alpha = \d z^\alpha-\rho^*\theta^\alpha\), one has \(\d\eta^\alpha=\rho^*\omega^\alpha\). Moreover, \(\partial/\partial z^\beta\in\ker \d\eta^\alpha\) for every \(\alpha,\beta\). Hence, for \eqref{eq:Lift}, 
we obtain
\[
\sum_{\alpha=1}^k\iota_{E_\alpha}\d\eta^\alpha=\sum_{\alpha=1}^k\iota_{\widetilde X_\alpha}\rho^*\omega^\alpha=\rho^*\!\left(\sum_{\alpha=1}^k\iota_{X_\alpha}\omega^\alpha\right)=\rho^*(\d H) = \d h\,.
\]
On the other hand,
\[
\iota_{E_\alpha}\eta^\alpha=\iota_{\widetilde X_\alpha}(\d z^\alpha-\rho^*\theta^\alpha)+\rho^*(\theta^\alpha(X_\alpha))\,\iota_{\partial/\partial z^\alpha}\eta^\alpha\,.
\]
Since \(\widetilde X_\alpha\) is tangent to \(P\), one has \(\iota_{\widetilde X_\alpha}\d z^\alpha=0\), while \(\iota_{\widetilde X_\alpha}\rho^*\theta^\alpha=\rho^*(\theta^\alpha(X_\alpha))\) and \(\iota_{\partial/\partial z^\alpha}\eta^\alpha=1\). Therefore \(\iota_{E_\alpha}\eta^\alpha=0\) for every \(\alpha\), and thus \(\sum_{\alpha=1}^k\iota_{E_\alpha}\eta^\alpha=0\). Hence \(\mathbf E\) is an evolution \(k\)-contact \(k\)-vector field for \(h\).

Finally, let \(\mathbf F=(F_1,\dots,F_k)\) be another \(k\)-vector field on \(M\) such that \(\rho_*F_\alpha=X_\alpha\) and \(\iota_{F_\alpha}\eta^\alpha=0\) for every \(\alpha\). Then \(F_\alpha-\widetilde X_\alpha\) is vertical with respect to \(\rho\), so \(F_\alpha=\widetilde X_\alpha+f_\alpha^\beta\partial/\partial z^\beta\) for certain functions \(f_\alpha^\beta\). Since \(\eta^\alpha(\partial/\partial z^\beta)=\delta^\alpha_\beta\), the conditions \(\iota_{F_\alpha}\eta^\alpha=0\), for $\alpha=1,\ldots,k$, yield
\[
0=\eta^\alpha(F_\alpha)=-\rho^*(\theta^\alpha(X_\alpha))+f_\alpha^\alpha\,,\qquad \alpha=1,\ldots,k,
\]
Thus, \(f_\alpha^\alpha=\rho^*(\theta^\alpha(X_\alpha))\) for $\alpha=1,\ldots,k$.  The conditions \(\iota_{F_\alpha}\dd z^\beta=0\) for \(\beta\neq\alpha\) force all off-diagonal coefficients to vanish, that is, \(f_\alpha^\beta=0\) whenever \(\beta\neq\alpha\). Then, the condition \(\iota_{F_\alpha}\eta^\alpha=0\) determines uniquely the remaining diagonal coefficients. 
Hence, \(F_\alpha=E_\alpha\).
\end{proof}

\subsection{HdDW equations for special types of Hamiltonian functions and general PDEs}\label{Sec:SpecialTypes}

The aim of this section is to study a particular class of 
$k$-contact Hamiltonian systems, their HdDW equations, and their use in the study of higher-order systems of PDEs with relevant applications \cite{Riv_22}.

The following definition is a generalisation of the notion of regularity for Hamiltonian functions in the symplectic and contact cases to the $k$-contact realm. It is also a natural analogue of the notion of regularity for Lagrangian functions in the $k$-contact setting appearing in \cite{Riv_22}.

\begin{definition}\label{def:regular_hamiltonian}
We say that \((\mathcal{J}_{Q,k},\bm \eta_{Q,k},h)\) is \emph{regular} if the fibre derivative \(\mathbb Fh\colon\bigoplus^k\T^*Q\times\mathbb R^k\to \bigoplus^k\T Q\times\mathbb R^k\), given in canonical coordinates by \(\mathbb Fh(q^i,p_i^\alpha,z^\alpha)=(q^i,\partial h/\partial p_i^\alpha,z^\alpha)\), is a local diffeomorphism. If \(\mathbb Fh\) is a global diffeomorphism, then \((\mathcal{J}_{Q,k},\bm \eta_{Q,k},h)\) is said to be \emph{hyperregular}.
\end{definition}

In a slight abuse of notation, the terms `regular' and `hyperregular' are only used in reference to $h$ when it is understood from context that the notion is with respect to $(\mathcal{J}_{Q,k},\bm\eta_{Q,k},h)$. 

An analogue of the following proposition for the Lagrangian setting can be found in \cite[Proposition 3.2]{Riv_22}.

\begin{proposition}\label{prop:equivalent_regular_hamiltonian}
Given \((\mathcal{J}_{Q,k},\bm\eta_{Q,k}, h)\), the following conditions are equivalent: 
\begin{enumerate}
\item \(h\) is regular.
\item The Hessian matrix \(\bigl(\partial^2 h/\partial p_i^\alpha\partial p_j^\beta\bigr)\) is non-singular at every point.
\item The relations \(v_\alpha^i=\partial h/\partial p_i^\alpha\) can be solved locally for the variables \(p_i^\alpha\) as smooth functions of \((q^i,v_\alpha^i,z^\alpha)\).
\end{enumerate}
\end{proposition}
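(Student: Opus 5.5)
The plan is to prove the cycle of implications (1)$\Rightarrow$(2)$\Rightarrow$(3)$\Rightarrow$(1), working in canonical coordinates $(q^i,p_i^\alpha,z^\alpha)$ on $\mathcal{J}_{Q,k}$ and the induced coordinates $(q^i,v^i_\alpha,z^\alpha)$ on $\bigoplus^k\T Q\times\mathbb R^k$. Since the statement is local, we may fix a point and a chart. In these charts $\mathbb Fh$ is the identity on the variables $q^i$ and $z^\alpha$, and it sends $p_i^\alpha$ to $v^i_\alpha=\partial h/\partial p_i^\alpha$.

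First compute the Jacobian of $\mathbb Fh$, ordering coordinates as $(q,z,p)$ and $(q,z,v)$. It has the block lower-triangular form
\[
\begin{pmatrix} \mathrm{Id}_{n+k} & 0\\ \ast & \bigl(\partial^2 h/\partial p_i^\alpha\partial p_j^\beta\bigr)\end{pmatrix},
\]
where $\ast$ collects the mixed derivatives $\partial^2h/\partial p\,\partial q$ and $\partial^2 h/\partial p\,\partial z$. Hence $\det \T\mathbb Fh=\det\bigl(\partial^2 h/\partial p_i^\alpha\partial p_j^\beta\bigr)$.

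For (1)$\Rightarrow$(2): a local diffeomorphism has invertible differential everywhere, so the Hessian is nonsingular at every point.

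For (2)$\Rightarrow$(3): nonsingularity of the Hessian gives invertibility of $\T\mathbb Fh$. The inverse function theorem then provides, around each point, a smooth local inverse of $\mathbb Fh$. Because $\mathbb Fh$ preserves $q$ and $z$, this inverse has the form $(q,v,z)\mapsto (q,p(q,v,z),z)$. Its $p$-components are the desired smooth solutions $p_i^\alpha=p_i^\alpha(q,v,z)$ of $v^i_\alpha=\partial h/\partial p_i^\alpha$. Equivalently, one can apply the implicit function theorem directly to $F(q,p,z,v)=\partial h/\partial p-v$, whose derivative in $p$ is the Hessian.

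For (3)$\Rightarrow$(1): a local smooth solution $p=p(q,v,z)$ yields a smooth map $G(q,v,z)=(q,p(q,v,z),z)$ with $\mathbb Fh\circ G=\mathrm{Id}$ near the image point. This only gives that $\T\mathbb Fh$ is surjective, hence invertible since both manifolds have dimension $n+nk+k$. The inverse function theorem then makes $\mathbb Fh$ a local diffeomorphism.

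The main subtlety lies here: condition (3) must be read as saying that around every point the solution exists and passes through that point, with $p(q,\partial h/\partial p,z)=p$ locally. With that reading, the chain rule applied to $\mathbb Fh\circ G=\mathrm{Id}$ gives invertibility of the differential directly. I would state this interpretation explicitly and otherwise treat the argument as routine.
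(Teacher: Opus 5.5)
Your proof is correct and complete. The paper gives no proof of its own; it only refers to the analogous Lagrangian statement in \cite{Riv_22}. Your route is the standard argument behind such a result: since $\mathbb Fh$ fixes $(q,z)$, its Jacobian is block-triangular, so $\det\T\mathbb Fh$ equals the determinant of the fibre Hessian, and the inverse/implicit function theorem finishes the cycle.

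Your explicit reading of (3) is exactly what (3)$\Rightarrow$(1) needs: around each point $x$ there must be a smooth local solution $G$ with $G(\mathbb Fh(x))=x$. This one-point condition already makes the chain-rule argument work, so you do not need the stronger two-sided identity $p(q,\partial h/\partial p,z)=p$ that you mention.
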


Let us now formalise for our setting a result that describes how $k$-contact geometry has been used to study higher-order systems of PDEs in the literature. In particular, we will focus on the case of Hamiltonian functions that are affine in the dissipative variables, which is a common situation in many models due to its simplicity and applications (cf. \cite{Riv_22,Riv_23,RSS_24,GRR_22}). The following result states that, for such Hamiltonians, every integral section of an \(\bm\eta_{Q,k}\)-Hamiltonian or \(\bm\eta_{Q,k}\)-evolution $k$-vector field associated with the same Hamiltonian function projects onto a map whose coordinates satisfy the same second-order system of PDEs. It also explains certain additional properties of this case, which will be important for the applications of $k$-contact geometry to the study of PDEs in the literature and the rest of our paper.

\begin{theorem}\label{thm:linear_z_same_second_order_system}
Let \(h\in \Cinfty(\mathcal J_{Q,k})\) be of the form \(h(q^i,p_i^\alpha,z^\alpha)=g(q^i,p_i^\alpha)+\sum_{\alpha=1}^kA_\alpha z^\alpha\), where \(A_1,\dots,A_k\in\mathbb R\) are constants and \(h\) is regular. Every integral section \(\psi(t)=(q^i(t),p_i^\alpha(t),z^\alpha(t))\) of an \(\bm\eta_{Q,k}\)-Hamiltonian \(k\)-vector field \({\bm X}_{h}\) projects onto a map \(q(t)=(q^1(t),\dots,q^n(t))\) satisfying a second-order system of partial differential equations. More precisely, if \(p_i^\alpha=P_i^\alpha(q ,v_\beta^j)\) denotes the local inverse of the fibre derivative, determined by \(v_\beta^i=\partial h/\partial p_i^\beta=\partial g/\partial p_i^\beta\), then the functions \(q^i(t)\) satisfy the second-order system of PDEs of the form
\begin{equation}\label{eq:second_order_system_linear_z}
\sum_{\alpha=1}^k \frac{\partial}{\partial t^\alpha}\Bigl(P_i^\alpha(q,v_\beta)\Bigr)
+\frac{\partial g}{\partial q^i}\Bigl(q,P_j^\beta(q,v_\gamma)\Bigr)
+\sum_{\alpha=1}^k A_\alpha P_i^\alpha(q,v_\beta)=0\,,\qquad i=1,\dotsc,n\,,
\end{equation}
where \(v_\beta =\partial q /\partial t^\beta\) and \eqref{eq:second_order_system_linear_z} is independent of the dissipative variables. Moreover, the same second-order system is obtained from every integral section of an \(\bm\eta_{Q,k}\)-evolution $k$-vector field associated with the same Hamiltonian \(h\). In particular, \eqref{eq:second_order_system_linear_z} depends only on \(h\) and not on the particular \(\bm\eta_{Q,k}\)-Hamiltonian or \(\bm\eta_{Q,k}\)-evolution $k$-vector field chosen.
\end{theorem}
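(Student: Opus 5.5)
The proof works entirely in the Darboux coordinates of $\mathcal J_{Q,k}$ and uses the local expressions of the HdDW equations. For the Hamiltonian case this is \eqref{eq:k-contact-HdDW-Darboux-coordinates}; for the evolution case it is \eqref{eq:k-contact-HdDW-Darboux-coordinates2}.

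First, if $\psi$ is an integral section of an $\bm\eta_{Q,k}$-Hamiltonian $k$-vector field ${\bm X}_h$, then $\psi'={\bm X}_h\circ\psi$. Contracting \eqref{eq:k-contact-HdDW-fields1} along $\psi$ therefore shows that $\psi$ solves \eqref{eq:k-contact-HdDW}. By \eqref{eq:k-contact-HdDW-fields-Darboux-coordinates1}, its coordinates then satisfy \eqref{eq:k-contact-HdDW-Darboux-coordinates}. The same argument, using \eqref{eq:k-contact-evolution-HdDW-fields} and \eqref{eq:k-contact-HdDW-fields-Darboux-coordinates}, applies to an integral section of an $\bm\eta_{Q,k}$-evolution $k$-vector field, which gives \eqref{eq:k-contact-HdDW-Darboux-coordinates2}. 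The key observation is that the two systems share their first two blocks of equations: the $q$-equations and the balance equations. They differ only in the $z$-equation.

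Now specialise to $h=g+\sum_\alpha A_\alpha z^\alpha$. Then $\partial h/\partial p_i^\beta=\partial g/\partial p_i^\beta$, $\partial h/\partial q^i=\partial g/\partial q^i$ and $\partial h/\partial z^\alpha=A_\alpha$. The $q$-equations read $v^i_\beta:=\partial q^i/\partial t^\beta=(\partial g/\partial p_i^\beta)\circ\psi$. By regularity, through item 3 of Proposition~\ref{prop:equivalent_regular_hamiltonian}, these relations can be inverted locally as $p_i^\alpha(t)=P_i^\alpha(q(t),v_\beta(t))$.

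A subtle point is whether the inverse $P$ depends on $z$. In general the fibre derivative $\mathbb Fh$ involves $z$ as a parameter, but here $\partial h/\partial p$ is $z$-independent. The Hessian condition (item 2) is therefore a condition on $g$ alone, and the implicit function theorem gives $P_i^\alpha$ as functions of $(q,v)$ only. I will need to state this explicitly.

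Substituting $p_i^\alpha=P_i^\alpha(q,\partial q/\partial t)$ into the balance equation gives
\[
\sum_\alpha \partial_{t^\alpha}P_i^\alpha+\partial g/\partial q^i(q,P)+\sum_\alpha A_\alpha P_i^\alpha=0 .
\]
This is exactly \eqref{eq:second_order_system_linear_z}. It is second order because $P$ depends on first derivatives of $q$. It contains no $z$, since the only possible $z$-dependence, through $\partial h/\partial z^\alpha$, is the constant $A_\alpha$.

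Finally, the $z$-equations and the arbitrary off-diagonal components of ${\bm X}_h$ or ${\bm E}$ do not enter this derivation, because only the $q$-equations and the balance equations were used. Hence the resulting system depends only on $h$, through $g$ and the constants $A_\alpha$, and is the same for every Hamiltonian or evolution $k$-vector field associated with $h$.

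The main obstacle is a careful treatment of the local inverse: showing that regularity yields a $z$-free inverse $P$, and that on each open set of the domain of $\psi$ its choice is consistent with $\psi$. I would handle this by applying the implicit function theorem pointwise around $\psi(t_0)$, shrinking the domain as needed.
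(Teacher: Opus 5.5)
Your proposal is correct and follows essentially the same route as the paper's proof. You read off the coordinate HdDW equations along $\psi$, use $\partial h/\partial z^\alpha=A_\alpha$ and the $z$-independence of $\partial h/\partial p_i^\beta=\partial g/\partial p_i^\beta$ to invert the $q$-equations by regularity as $p_i^\alpha=P_i^\alpha(q,v_\beta)$, substitute into the balance equations, and note that the evolution case shares exactly these first two blocks; your extra care about the $z$-free inverse and about $\psi$ staying in the domain of the chosen local branch (by continuity near $t_0$) only makes explicit what the paper leaves implicit.
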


\begin{proof}
Let us consider the case of an integral section of an ${\bm\eta}_{Q,k}$-Hamiltonian $k$-vector field. 
Let \(\psi(t)=(q^i(t),p_i^\alpha(t),z^\alpha(t))\) be an integral section of an \({\bm\eta}_{Q,k}\)-Hamiltonian \(k\)-vector field ${\bm X}_h$. By the local expression of the \(k\)-contact HdDW equations for ${\bm X}_h$, one has \(\partial q^i/\partial t^\beta=(\partial h/\partial p_i^\beta)\circ\psi\) and \(\sum_{\alpha=1}^k\partial p_i^\alpha/\partial t^\alpha=-\bigl(\partial h/\partial q^i+\sum_{\alpha=1}^k p_i^\alpha\,\partial h/\partial z^\alpha\bigr)\circ\psi\) for $i=1,\dotsc,n$. Since \(h=g+\sum_{\alpha=1}^kA_\alpha z^\alpha\), it follows that \(\partial h/\partial p_i^\beta=\partial g/\partial p_i^\beta\), \(\partial h/\partial q^i=\partial g/\partial q^i\), and \(\partial h/\partial z^\alpha=A_\alpha\). Hence,
\[
\frac{\partial q^i}{\partial t^\beta}=\frac{\partial g}{\partial p_i^\beta}\circ\psi\,,
\qquad 
\sum_{\alpha=1}^k\frac{\partial p_i^\alpha}{\partial t^\alpha}
=
-\left(\frac{\partial g}{\partial q^i}+\sum_{\alpha=1}^k A_\alpha p_i^\alpha\right)\circ\psi\,,\qquad i=1,\dotsc,n\,.
\]

Since \(h\) is regular, Proposition~\ref{prop:equivalent_regular_hamiltonian} ensures that the relations \(v_\beta^i=\partial g/\partial p_i^\beta\) can be solved locally as \(p_i^\alpha=P_i^\alpha(q,v_\beta)\), which is independent of the values of the dissipative variables. Along \(\psi\), the first equations above yield \(p_i^\alpha=P_i^\alpha(q,v_\beta)\). Substituting these expressions into the second equations, we obtain \eqref{eq:second_order_system_linear_z}.

Assume now that \(\psi\) is an integral section of an \(\bm\eta_{Q,k}\)-evolution $k$-vector field associated with the same Hamiltonian \(h\). By the local form of the evolution HdDW equations and the special form of $h$, the equations for \(q^i\) and \(p_i^\alpha\) are again \(\partial q^i/\partial t^\beta=(\partial g/\partial p_i^\beta)\circ\psi\) and \(\sum_{\alpha=1}^k\partial p_i^\alpha/\partial t^\alpha=-\bigl(\partial g/\partial q^i+\sum_{\alpha=1}^k p_i^\alpha\,A_\alpha\bigr)\circ\psi\). Therefore, for Hamiltonians of the form \(h=g+\sum_{\alpha=1}^kA_\alpha z^\alpha\), one obtains exactly the same first-order system for \((q^i,p_i^\alpha)\), and, using the regularity of $h$, the same second-order system of PDEs \eqref{eq:second_order_system_linear_z} for the variables \(q^i\). 
\end{proof}

The system of second-order PDEs \eqref{eq:damped-membrane-foundation} is the particular case of the second-order system \eqref{eq:second_order_system_linear_z} for the Hamiltonian function \eqref{eq:Hamiltonian-damped-membrane}, which is affine in the dissipative variables in the manner given in Theorem~\ref{thm:linear_z_same_second_order_system}. Hence, Theorem~\ref{thm:linear_z_same_second_order_system} ensures that every integral section of an \(\bm\eta_{Q,k}\)-Hamiltonian or \(\bm\eta_{Q,k}\)-evolution $k$-vector field associated with the Hamiltonian \eqref{eq:Hamiltonian-damped-membrane} projects onto a solution of the damped wave equation \eqref{eq:damped-membrane-foundation}.

\begin{proposition}\label{thm:difference_Hamiltonian_kernel_flat}
Let \(h\in \Cinfty(\mathcal J_{Q,k})\) and let \(\mathbf X=(X_1,\dots,X_k)\) and \(\mathbf Y=(Y_1,\dots,Y_k)\) be two \(\bm\eta_{Q,k}\)-Hamiltonian \(k\)-vector fields associated with the same Hamiltonian function \(h\). Then, \(\mathbf X-\mathbf Y\) takes values in \(\ker\chi\). In particular, \(\mathbf X-\mathbf Y\) is \(\pi_Q\)-vertical. The same conclusion holds if \(\mathbf X\) and \(\mathbf Y\) are two \(\bm\eta_{Q,k}\)-evolution $k$-vector fields associated with the same Hamiltonian function \(h\).
\end{proposition}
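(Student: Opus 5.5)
The plan is to use the linearity of the morphism \(\chi\) together with the fact that both \(k\)-vector fields satisfy the same right-hand sides of the relevant equations. Since \(\chi\) is a vector bundle morphism over \(\mathcal J_{Q,k}\), it acts pointwise and linearly on \(k\)-vector fields. Hence
\[
\chi(\mathbf X-\mathbf Y)=\chi(\mathbf X)-\chi(\mathbf Y).
\]
By the \(k\)-contact Hamilton--De Donder--Weyl equations \eqref{eq:k-contact-HdDW-fields1}, both \(\mathbf X\) and \(\mathbf Y\) satisfy
\[
\chi(\mathbf X)=\chi(\mathbf Y)=\Bigl(\dd h-\sum_{\alpha=1}^k(\Lie_{R_\alpha}h)\eta^\alpha,\,-h\Bigr),
\]
where \(\eta^\alpha\) and \(R_\alpha\) refer to \(\bm\eta_{Q,k}\). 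Therefore \(\chi(\mathbf X-\mathbf Y)=0\) at every point, so \(\mathbf X-\mathbf Y\) takes values in \(\ker\chi\). Equivalently, \(\mathbf X-\mathbf Y\) is an \(\bm\eta_{Q,k}\)-gauge \(k\)-vector field.

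Verticality then follows directly from Proposition~\ref{prop:kernel-flat-k-contact-Darboux}. That result gives \(\ker\chi\subset\bigoplus^kV(\pi_Q)\) in Darboux coordinates on \(\mathcal J_{Q,k}\), and the canonical coordinates \((q^i,p^\alpha_i,z^\alpha\)) are global Darboux coordinates on each chart. Each component \(X_\alpha-Y_\alpha\) is therefore \(\pi_Q\)-vertical. One can also check this directly: the first line of \eqref{eq:k-contact-HdDW-fields-Darboux-coordinates1} fixes \((X_\beta)^i=(Y_\beta)^i=\partial h/\partial p^\beta_i\), so the \(\partial/\partial q^i\)-components of the difference vanish.

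For \(\bm\eta_{Q,k}\)-evolution \(k\)-vector fields \(\mathbf E,\mathbf E'\) associated with the same \(h\), the argument is identical. By \eqref{eq:k-contact-evolution-HdDW-fields},
\[
\chi(\mathbf E)=\chi(\mathbf E')=\Bigl(\dd h-\sum_{\alpha=1}^k(\Lie_{R_\alpha}h)\eta^\alpha,\,0\Bigr),
\]
so again \(\chi(\mathbf E-\mathbf E')=0\), and the same verticality conclusion follows.

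No step is a real obstacle. The only points needing care are that \(\chi\) is linear fibrewise, so it commutes with taking differences of sections, and that the kernel description in Proposition~\ref{prop:kernel-flat-k-contact-Darboux}, although stated in coordinates, is pointwise and so applies at every point of \(\mathcal J_{Q,k}\).
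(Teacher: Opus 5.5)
Your proposal is correct and follows essentially the same argument as the paper: both $k$-vector fields have the same image under $\chi$, so fibrewise linearity puts their difference in $\ker\chi$, and Proposition~\ref{prop:kernel-flat-k-contact-Darboux} then gives $\pi_Q$-verticality. The evolution case is handled identically. Your extra direct check that the $\partial/\partial q^i$-components agree is a harmless redundancy.
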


\begin{proof}
Since \(\mathbf X\) and \(\mathbf Y\) are \(\bm\eta_{Q,k}\)-Hamiltonian for the same function \(h\), they satisfy the same defining equations and hence \(\chi(\mathbf X)=\chi(\mathbf Y)\). By fibrewise linearity of \(\chi\), it follows that \(\chi(\mathbf X-\mathbf Y)=0\), so \(\mathbf X-\mathbf Y\) takes values in \(\ker\chi\). Moreover, by Proposition~\ref{prop:kernel-flat-k-contact-Darboux}, one has \(\ker\chi\subset\bigoplus^kV(\pi_Q)\), where \(V(\pi_Q)=\ker \T\pi_Q\). Therefore, \(\mathbf X-\mathbf Y\) is \(\pi_Q\)-vertical. The proof for \(\bm\eta_{Q,k}\)-evolution $k$-vector fields is identical.
\end{proof}

Proposition 3.5 of \cite{HLM_26} and Proposition \ref{prop:k-contact-evolution-HdDW-have-solutions} show that every $(\mathcal{J}_{Q,k},\bm \eta_{Q,k},h)$ is related to non-empty families of ${\bm \eta}_{Q,k}$-Hamiltonian and ${\bm \eta}_{Q,k}$-evolution $k$-vector fields. Using this, the following proposition is immediate. 

\begin{proposition}
The space of $\bm\eta_{Q,k}$-Hamiltonian $k$-vector fields associated with
$h\in \Cinfty(\mathcal{J}_{Q,k})$ is an affine
space modelled on $\Gamma(\ker\chi)$. The same holds for $\bm\eta_{Q,k}$-evolution
$k$-vector fields.
\end{proposition}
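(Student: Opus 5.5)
The plan is to prove the two halves of the affine-space claim separately: first that the set is nonempty, then that it is closed under adding sections of $\ker\chi$ and that any two elements differ by such a section. This gives an affine space whose direction is exactly $\Gamma(\ker\chi)$.

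\emph{Nonemptiness.} For a given $h\in\Cinfty(\mathcal J_{Q,k})$, I will take any $\bm\eta_{Q,k}$-Hamiltonian $k$-vector field ${\bm X}_0$ provided by Proposition~\ref{prop:k-contact-HdDW-have-solutions} (equivalently, \cite[Proposition 3.5]{HLM_26}). This shows the set $\mathcal H_h$ of such $k$-vector fields is nonempty. The same step uses Proposition~\ref{prop:k-contact-evolution-HdDW-have-solutions} for the evolution case.

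\emph{Closure under $\Gamma(\ker\chi)$.} I will observe that the defining equations \eqref{eq:k-contact-HdDW-fields1} can be written as
\[
\chi({\bm X})=\Bigl(\dd h-\sum_{\alpha=1}^k(\Lie_{R_\alpha}h)\eta^\alpha,\,-h\Bigr),
\]
where $\chi$ is applied pointwise. Since $\chi$ is a fibrewise linear vector bundle morphism over $\mathcal J_{Q,k}$, for any smooth section ${\bm K}\in\Gamma(\ker\chi)$ one gets $\chi({\bm X}_0+{\bm K})=\chi({\bm X}_0)$. Hence ${\bm X}_0+{\bm K}\in\mathcal H_h$. This gives a well-defined map $\Gamma(\ker\chi)\to\mathcal H_h$, ${\bm K}\mapsto {\bm X}_0+{\bm K}$. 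It is injective trivially.

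\emph{Surjectivity.} Given any ${\bm Y}\in\mathcal H_h$, Proposition~\ref{thm:difference_Hamiltonian_kernel_flat} shows that ${\bm Y}-{\bm X}_0$ takes values in $\ker\chi$. It is a difference of smooth $k$-vector fields, so it is smooth and lies in $\Gamma(\ker\chi)$. Writing ${\bm Y}={\bm X}_0+({\bm Y}-{\bm X}_0)$ proves surjectivity. The action $({\bm K},{\bm X})\mapsto{\bm X}+{\bm K}$ of the vector space $\Gamma(\ker\chi)$ on $\mathcal H_h$ is therefore free and transitive, which is the definition of an affine space modelled on $\Gamma(\ker\chi)$.

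For the evolution case the argument is the same, replacing the second component $-h$ by $0$. The only point needing care is that $\Gamma(\ker\chi)$ is a genuine vector space of smooth sections. By Proposition~\ref{prop:kernel-flat-k-contact-Darboux}, $\ker\chi$ has constant rank $(n+1)(k^2-1)$, so it is a smooth vector subbundle of $\bigoplus^k\T\mathcal J_{Q,k}$ (the kernel of a constant-rank morphism). I expect no substantial obstacle; this constant-rank remark is the only nontrivial ingredient.
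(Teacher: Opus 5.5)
Your proposal is correct and follows the paper's argument. Nonemptiness comes from the existence results (Proposition 3.5 of \cite{HLM_26} and Proposition~\ref{prop:k-contact-evolution-HdDW-have-solutions}), closure follows from fibrewise linearity of $\chi$, and Proposition~\ref{thm:difference_Hamiltonian_kernel_flat} shows that any two solutions differ by a section of $\ker\chi$; your constant-rank remark via Proposition~\ref{prop:kernel-flat-k-contact-Darboux} is a correct detail that the paper leaves implicit.
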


Higher-order systems of PDEs can be written as first-order systems of PDEs by considering higher-order partial derivatives of the coordinates of solutions as new variables. In this manner, the resulting system can be described by our $k$-contact formalism.

\section{The \texorpdfstring{$k$}{}-contact Hamilton--Jacobi equations in the \texorpdfstring{$z$}{}-independent framework}
\label{sec:HJzindependent}


Let us develop two types of Hamilton--Jacobi theories. The main idea is to analyse the HdDW equations for a $k$-contact Hamiltonian system $(\mathcal{J}_{Q,k}=\bigoplus^k \T^*Q\times \R^k,\bm \eta_{Q,k},h)$ by means of the values of an ${\bm \eta}_{Q,k}$-Hamiltonian $k$-vector field along a submanifold of $\mathcal{J}_{Q,k}$ that is projectable onto $Q$ in a suitable sense. Nevertheless, the ${\bm \eta}_{Q,k}$-Hamiltonian $k$-vector field does not need to be tangent to the submanifold for $k>1$. It is worth noting that our formalism yields a generalisation to the $k$-contact setting of a classical result on jet manifolds described by means of a contact form: a submanifold is Legendrian if and only if it can be understood as the image of a  holonomic section.

Consider a section of the bundle \(
\pi_Q\colon \bigoplus\nolimits^k \cT Q \times \mathbb{R}^k \to Q
\) locally given in adapted coordinates by
\[
\begin{array}{rccl}
\gamma\colon & Q & \longrightarrow & \bigoplus\nolimits^k \cT Q \times \mathbb{R}^k \\
& q=(q^i) & \longmapsto & \left(q^i,\gamma_i^\alpha(q),\gamma^\alpha(q)\right).
\end{array}
\]
Given a smooth map $f\colon M\to N$, define
\[\textstyle
\bigoplus\nolimits^k \T f\colon \bigoplus\nolimits^k \T M \longrightarrow \bigoplus\nolimits^k \T N\,,
\qquad
(v_1,\dotsc,v_k)\longmapsto (\T f(v_1),\dotsc,\T f(v_k))\,.
\]

Let $\mathbf{Z}=(Z_\alpha)$ be a $k$-vector field on
$\bigoplus^k \cT Q \times \mathbb{R}^k$. Since $\gamma$ is a section of $\pi_Q$, let us define the projected $k$-vector
field on $Q$ by
\[
\textstyle\mathbf{Z}^\gamma
=\left(\bigoplus\nolimits^k \T\pi_Q\right)\circ (\mathbf{Z}\circ \gamma)\,.
\]
We denote by $\left[\cdot\right]\colon\bigoplus^k\T\mathcal{J}_{Q,k}\to \bigoplus^k\T\mathcal{J}_{Q,k}/\ker \chi$ the natural projection onto the quotient space. 
The difference between \(
\mathbf{Z}\circ \gamma\) and \(
\left(\bigoplus\nolimits^k \T\gamma\right)\circ \mathbf{Z}^\gamma
\) takes values in $\ker \chi$ if and only if the diagram
\[
\begin{tikzcd}[row sep=4.5em]
\mathcal{J}_{Q,k}
  \arrow[r,"{[\mathbf{Z}]}"]
&
\bigoplus^k \T\mathcal{J}_{Q,k}/\ker \chi
  \arrow[d,"\widetilde{\bigoplus^k \T\pi_Q}"']
\\
Q
  \arrow[r,"\mathbf{Z}^\gamma"]
  \arrow[u,"\gamma"]
&
\bigoplus^k \T Q
  \arrow[u,bend right=30,"{\left[(\bigoplus^k \T\gamma)\right]}"']
\end{tikzcd}
\]
is  commutative. Note also that the diagram is well-defined since $\ker \bigoplus^k\T\pi_Q\supset \ker \chi$. We write $\widetilde{\bigoplus^k \T\pi_Q}([{\bm Z}])=[\bigoplus^k\T\pi_Q ({\bf Z})]$. We say that $\mathbf{Z}$ and $\mathbf{Z}^\gamma$ are {\it almost $\gamma$-related} if the above diagram is commutative. Meanwhile, $\mathbf{Z}$ and $\mathbf{Z}^\gamma$ are $\gamma$-related if  $\mathbf{Z}\circ \gamma = (\bigoplus^k\T \gamma)\circ {\bm Z}^\gamma$.

For $k>1$, the almost $\gamma$-relation does not ensure that $\mathbf{Z}$ on ${\rm Im}\,\gamma$ is the lift via $ \bigoplus^k\T\gamma$  of $\mathbf{Z}^\gamma$. It is also worth noting that $\ker \chi=0$ if and only if $k=1$. Hence,  the almost $\gamma$-relation is equivalent to the $\gamma$-relation in the contact case.

Recall that ${\rm Im}\, \gamma$ is a submanifold in $\bigoplus^k \cT Q \times \mathbb{R}^k$.   Note that if ${\bf X}_h$ is an $\bm\eta_{Q,k}$-Hamiltonian $k$-vector field, then 
\begin{equation}\label{eq:ProjZgamma}
({\bf X}^\gamma)_\alpha(q)=\sum_{i=1}^n\frac{\partial h}{\partial p^\alpha_i}\circ\gamma(q)\frac{\partial}{\partial q^i}\,,\qquad \alpha=1,\dotsc,k\,,\qquad q\in Q\,.
\end{equation}
A similar expression is obtained from \eqref{eq:k-contact-HdDW-fields-Darboux-coordinates} if ${\bf Z}$ is an $\bm \eta_{Q,k}$-evolution  $k$-vector field.

Consider a Legendrian submanifold of $\bigoplus ^k\cT Q\times \R^k$ relative to its natural $k$-contact manifold structure induced by $\bm\eta_{Q,k}$. Let us describe the conditions for $\Ima\gamma$ to be a  Legendrian submanifold. 
The canonical diffeomorphism
$$ \textstyle J^1(Q,Q\times \mathbb{R}^k) \longrightarrow \bigoplus ^k\cT Q\times \mathbb{R}^k\,, \qquad  j^1_x\sigma\longmapsto (x,\dd \sigma^\mu(x),\sigma^\mu(x)) $$
allows us to understand $\bigoplus ^k\cT Q\times \mathbb{R}^k$  as the first-order jet manifold $J^1(Q,Q\times \mathbb{R}^k)$ of sections of the bundle $Q\times \mathbb{R}^k\rightarrow Q$ \cite{OLV_86}. This is what inspired us to denote \(\bigoplus^k\T^*Q\times \mathbb{R}^k\) by \(\mathcal{J}_{Q,k}\). This also justifies calling a section \(\gamma\colon Q\rightarrow \mathcal{J}_{Q,k}\) {\it holonomic} when it is of the form
$$
\gamma(q)=(q,\dd\gamma^\mu(q),\gamma^\mu(q))\,.
$$
The characterisation of holonomic sections is accomplished by means of the following result, which has applications in the further Hamilton--Jacobi theory. It is worth noting that one can also define the symplectic orthogonal of a vector subspace $W_x\subset \T_xM$ relative to a two-form $\bm\Omega$ on a manifold $M$ taking values in $\mathbb{R}^k$ as follows
$$
W_x^{\perp_{\bm\Omega}}=\{v_x\in \T_xM\mid\bm\Omega(v_x,w_x)=0,\forall w_x\in W_x\}.
$$

\begin{theorem}\label{thm:k-coisotropic-condition}
    Given a section $\gamma\colon \left(q^i\right) \in Q  \mapsto \left(q^i,\gamma_i^\alpha(q), \gamma^\alpha(q) \right)\in\mathcal{J}_{Q,k}=\bigoplus^k\cT Q\times \R^k$ in adapted coordinates \((q^i,p_i^\alpha,z^\alpha)\), the tangent space to $\Ima \gamma$ is 
        \[
\T\Ima\gamma
=
\left\langle
\frac{\partial}{\partial q^i}
+
\sum_{\alpha=1}^k \sum_{j=1}^n
\frac{\partial \gamma^\alpha_j}{\partial q^i}
\frac{\partial}{\partial p^\alpha_j}
+
\sum_{\alpha=1}^k
\frac{\partial \gamma^\alpha}{\partial q^i}
\frac{\partial}{\partial z^\alpha}
\right\rangle_{i=1,\dotsc,n}\,.
\]
The submanifold $\gamma(Q)$ is Legendrian if and only if $\gamma$ is holonomic. In particular, the condition \({\rm Im}\,\T\gamma\subset \ker \bm \eta\) implies that \(\gamma\) is holonomic. Meanwhile,  the orthogonal of $\T\!\Ima \gamma$ relative to $\dd \bm\eta_{Q,k}$ contains $\T\!\Ima\gamma$ if and only if  
$$
\frac{\partial \gamma^\mu_i}{\partial q^j} = \frac{\partial \gamma^\mu_j}{\partial q^i}\,, \qquad \mu=1,\dotsc,k\,, \qquad i,j =1, \dotsc, n\,.
$$
\end{theorem}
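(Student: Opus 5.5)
The plan is a direct coordinate computation, carried out in the order the statement lists its claims. First, the tangent space: $\gamma$ is an embedding with left inverse $\pi_Q$, so $\T\Ima\gamma$ is spanned by the images $\T\gamma(\partial/\partial q^i)$. The chain rule applied to $\gamma(q)=(q^i,\gamma^\alpha_i(q),\gamma^\alpha(q))$ gives exactly the displayed generators, and these are linearly independent because of their $\partial/\partial q^i$ components. Hence $\dim \Ima\gamma=n$.

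Next, I evaluate the forms on these generators, which I denote $V_i$. From the local expressions in Example~\ref{ex:canonical-k-contact-structure},
\[
\eta^\alpha(V_i)=\frac{\partial\gamma^\alpha}{\partial q^i}-\gamma^\alpha_i\,,\qquad
\dd\eta^\alpha(V_i,V_j)=\frac{\partial\gamma^\alpha_j}{\partial q^i}-\frac{\partial\gamma^\alpha_i}{\partial q^j}\,.
\]
The second formula proves the last claim at once. Indeed, $\T\Ima\gamma\subset(\T\Ima\gamma)^{\perp_{\dd\bm\eta_{Q,k}}}$ holds exactly when every $\dd\eta^\mu(V_i,V_j)$ vanishes, that is, when $\partial\gamma^\mu_i/\partial q^j=\partial\gamma^\mu_j/\partial q^i$. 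The first formula shows that $\Ima\T\gamma\subset\ker\bm\eta_{Q,k}$ if and only if $\gamma^\alpha_i=\partial\gamma^\alpha/\partial q^i$, i.e.\ $\gamma=(q,\dd\gamma^\alpha,\gamma^\alpha)$ is holonomic. This settles the ``in particular'' claim and also shows that holonomic sections are isotropic, since then $\dd\eta^\alpha(V_i,V_j)$ is a difference of symmetric second derivatives.

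For the Legendrian equivalence, I argue the two directions separately. If $\Ima\gamma$ is Legendrian, then by Definition~\ref{def:IsotropicDistribution} its tangent space lies in $\ker\bm\eta_{Q,k}$, because isotropic subspaces are by definition subspaces of $\ker\bm\eta$. The previous step then gives holonomy. Conversely, assume $\gamma$ is holonomic. Then $E=\T\Ima\gamma\subset\ker\bm\eta_{Q,k}$ is isotropic, and it remains to exhibit a complement $F$ in $\ker\bm\eta_{Q,k}$ on which $\dd\bm\eta_{Q,k}$ vanishes. I take $F=\mathcal V_{Q,k}=\langle\partial/\partial p^\alpha_i\rangle$. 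This lies in $\ker\bm\eta_{Q,k}$ and satisfies $\dd\bm\eta_{Q,k}|_{F\times F}=0$, since $\dd\eta^\alpha=\sum_i\dd q^i\wedge\dd p^\alpha_i$ has no $\dd p\wedge\dd p$ terms. To check $E\oplus F=\ker\bm\eta_{Q,k}$, I count dimensions: $\ker\bm\eta_{Q,k}$ has dimension $n+nk$, while $\dim E=n$ and $\dim F=nk$. Moreover $E\cap F=0$, because $\T\pi_Q$ is injective on $E$ and vanishes on $F$.

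The only delicate point is this converse direction, namely making sure that the definition of Legendrian used in the paper, with its isotropic-plus-isotropic-complement formulation rather than a dimension count, is met. I expect the vertical polarisation $F=\mathcal V_{Q,k}$ to handle this cleanly; the rest is routine.
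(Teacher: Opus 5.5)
Your proposal is correct and follows essentially the same route as the paper. Both compute $\gamma^*\bm\eta_{Q,k}$ and $\gamma^*\dd\bm\eta_{Q,k}$ in coordinates (equivalently, evaluate on the generators $\T\gamma(\partial/\partial q^i)$), and both use the vertical polarisation $\mathcal{V}_{Q,k}$ as the isotropic complement; you also make explicit the dimension count and $E\cap F=0$, which the paper leaves implicit. One harmless sign slip: with $\dd\eta^\alpha=\sum_r\dd q^r\wedge\dd p^\alpha_r$ one gets $\dd\eta^\alpha(V_i,V_j)=\partial\gamma^\alpha_i/\partial q^j-\partial\gamma^\alpha_j/\partial q^i$, the opposite of what you wrote, which is immaterial since only its vanishing is used.
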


\begin{proof}
The canonical $k$-contact form on $\mathcal{J}_{Q,k}=\bigoplus^k\cT Q\times \R^k$ reads
${\bm \eta}_{Q,k}=\sum_{\mu=1}^k(\mathrm{d}z^\mu - \sum_{i=1}^np_i^\mu\,\mathrm{d}q^i)\otimes e_\mu$, and hence $\mathrm{d}\bm \eta_{Q,k} = -\sum_{\mu=1}^k\sum_{i=1}^n\,\mathrm{d}p_i^\mu\wedge \mathrm{d}q^i\otimes e_\mu$. Since $\gamma(q)=(q,\gamma^\mu_i(q),\gamma^\mu(q))$, the condition ${\rm Im}\,\T\gamma \subset \ker \bm \eta_{Q,k}$, which gives
$$
\gamma^*\bm \eta_{Q,k}=0\quad\Longleftrightarrow\quad \dd\gamma^\mu=\sum_{i=1}^n\gamma^\mu_i(q)\dd q^i\,,\qquad \mu=1,\ldots,k\,.
$$
In other words, 
\begin{equation}\label{eq:ConHol} \gamma^\mu_i(q)=\parder{\gamma^\mu}{q^i}(q)\,,\qquad  \mu=1,\dotsc,k\,,\qquad i=1,\dotsc,n\,.
\end{equation}
Hence, it makes sense to write $\gamma(q)=(q,\dd\gamma^\mu (q),\gamma^\mu(q))$ and $\gamma$ is holonomic. In other words, one has that the section can be understood as the one-jet prolongation of a section of $Q\times \mathbb{R}^k\rightarrow Q$ (see \cite{Saunders1989}). To verify that $\gamma(Q)$ is also Legendrian, one may prove that it admits a complement  in $\ker \bm \eta_{Q,k}$ that is also isotropic. But this is immediate since $\mathcal{V}_{Q,k}$ spans an isotropic subbundle in $\ker \bm \eta_{Q,k}$ that complements $\T{\rm Im}\gamma$. The converse is immediate.

Moreover, 
\[
\gamma^*\dd\bm \eta_{Q,k}= -\,\frac 12\sum_{i,j=1}^n\sum_{\mu=1}^k\frac{\partial \gamma_i^\mu}{\partial q^j}\,
\mathrm{d}q^j\wedge \mathrm{d}q^i\otimes e_\mu\,,
\]
and the orthogonal of  $\T\Ima\gamma=$ relative to $\dd \bm\eta_{Q,k}$ includes $\T\Ima\,\gamma$ if and only if these terms vanish, i.e.
\begin{equation}\label{eq:closedgamma2}
\frac{\partial \gamma_i^\mu}{\partial q^j} =
\frac{\partial \gamma_j^\mu}{\partial q^i}\,,\qquad \mu=1,\dotsc,k\,,\qquad 1\leq i<j\leq n\,,
\end{equation}
which is automatically satisfied under condition \eqref{eq:ConHol}. Again, the converse is immediate.
\end{proof}

\subsection{The classical Hamiltonian approach}
\label{Sec::zindclass}
Let us introduce the $z$-independent $k$-contact Hamilton--Jacobi theorem for $k$-contact Hamiltonian $k$-vector fields and analyse its consequences and applications. 

\begin{theorem}\label{theo:ClassicalHJzindependent}
Let $\mathbf{X}_h$ be an ${\bm \eta}_{Q,k}$-Hamiltonian $k$-vector field associated with $h\in\Cinfty(\bigoplus^k \cT Q \times \mathbb{R}^k)$, let $\gamma$ be a holonomic section, and assume $\mathbf{X}_h^\gamma$ to be integrable. Then, the following statements are equivalent:
\begin{enumerate}
\item Every integral section of $\mathbf{X}_h^\gamma$ lifts to a solution of the $z$-independent $\bm\eta_{Q,k}$-Hamiltonian HdDW equations for $h$,
\item $h\circ\gamma=0$.
\end{enumerate}
\end{theorem}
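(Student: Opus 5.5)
Both directions will be checked in Darboux coordinates $(q^i,p_i^\alpha,z^\alpha)$ on $\mathcal J_{Q,k}$, with $\gamma(q)=(q,\partial\gamma^\alpha/\partial q^i,\gamma^\alpha(q))$ holonomic. Let $\phi\colon D\subset\mathbb R^k\to Q$ be an integral section of $\mathbf X_h^\gamma$; by \eqref{eq:ProjZgamma} this means $\partial\phi^i/\partial t^\beta=(\partial h/\partial p^\beta_i)\circ\gamma\circ\phi$. The candidate lift is $\psi=\gamma\circ\phi$, and we must decide when $\psi$ solves \eqref{eq:k-contact-HdDW-Darboux-coordinates}. Note that $\mathbf X_h^\gamma$ does not depend on the representative $\mathbf X_h$, by Proposition~\ref{thm:difference_Hamiltonian_kernel_flat} and the inclusion $\ker\chi\subset\bigoplus^kV(\pi_Q)$, so statement (1) is intrinsic to $h$ and $\gamma$.

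\textbf{Equation-by-equation analysis.} The first line of \eqref{eq:k-contact-HdDW-Darboux-coordinates} holds by construction.

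For the third line, holonomy gives $z^\alpha(t)=\gamma^\alpha(\phi(t))$. Hence
\[
\sum_\alpha\partial_{t^\alpha}z^\alpha=\sum_{\alpha,j}\frac{\partial\gamma^\alpha}{\partial q^j}\,\frac{\partial h}{\partial p^\alpha_j}\circ\gamma=\sum_{\alpha,j}p^\alpha_j\,\frac{\partial h}{\partial p_j^\alpha}\circ\gamma .
\]
This line therefore holds exactly when $h\circ\gamma\circ\phi=0$. Since $\mathbf X^\gamma_h$ is integrable, every point of $Q$ lies on some $\phi$. Thus (1) forces $h\circ\gamma=0$ everywhere, which proves (1)$\Rightarrow$(2).

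For (2)$\Rightarrow$(1), the remaining task is the balance equation. Assume $h\circ\gamma=0$ and differentiate with respect to $q^i$:
\[
0=\Bigl(\frac{\partial h}{\partial q^i}+\sum_{\alpha,j}\frac{\partial h}{\partial p_j^\alpha}\frac{\partial^2\gamma^\alpha}{\partial q^i\partial q^j}+\sum_\alpha\frac{\partial h}{\partial z^\alpha}\frac{\partial\gamma^\alpha}{\partial q^i}\Bigr)\circ\gamma .
\]
Along $\psi$ one has $p_i^\alpha=\partial_i\gamma^\alpha\circ\phi$. Differentiating in $t^\alpha$ and summing gives
\[
\sum_\alpha\partial_{t^\alpha}p_i^\alpha=\sum_{\alpha,j}\frac{\partial^2\gamma^\alpha}{\partial q^j\partial q^i}\,\frac{\partial h}{\partial p^\alpha_j}\circ\gamma .
\]
By the symmetry of second derivatives, this equals $-\bigl(\partial h/\partial q^i+\sum_\alpha p^\alpha_i\,\partial h/\partial z^\alpha\bigr)\circ\psi$, which is exactly the balance equation. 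Hence $\psi$ solves the HdDW equations, giving (2)$\Rightarrow$(1).

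\textbf{Main obstacle.} The delicate step is interpretive rather than computational. One must make precise that "lifts to a solution" means the lift $\gamma\circ\phi$; the lift is forced, since any lift lying in $\operatorname{Im}\gamma$ over $\phi$ is unique. One must also check that the lift is insensitive to the gauge freedom in $\mathbf X_h$, which again follows from Proposition~\ref{thm:difference_Hamiltonian_kernel_flat}. Finally, the necessity direction needs integrability of $\mathbf X_h^\gamma$, so that the vanishing of $h\circ\gamma$ along each integral section covers all of $Q$.
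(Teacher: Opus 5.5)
Your proposal is correct and follows essentially the same route as the paper. The first HdDW line is automatic. Holonomy reduces the $z$-line to $h\circ\gamma\circ\phi=0$, and integrability of $\mathbf{X}_h^\gamma$ spreads this over all of $Q$ for the necessity direction. The balance line follows from differentiating $h\circ\gamma=0$ and using the symmetry $\partial\gamma^\alpha_i/\partial q^j=\partial\gamma^\alpha_j/\partial q^i$ together with $\partial\gamma^\alpha/\partial q^i=\gamma^\alpha_i$, exactly as in the paper's argument.
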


\begin{proof}
Let us first prove that \(1.\) implies \(2.\) Suppose that \((\gamma \circ \sigma)(t)=(\sigma^i(t),\gamma_i^\alpha(\sigma(t)),\gamma^\alpha(\sigma(t)))\) is a solution to the $z$-independent $\bm\eta_{Q,k}$-Hamiltonian HdDW equations. Then,
\[
\begin{dcases}
\restr{\dfrac{\partial \sigma^i}{\partial t^\beta}}{t}=\restr{\dfrac{\partial h}{\partial p_i^\beta}}{\gamma\circ \sigma(t)}\,,\qquad \beta=1,\dotsc,k  \\[2mm]
\sum_{\alpha=1}^k \restr{\dfrac{\partial(\gamma_i^\alpha\circ \sigma)}{\partial t^\alpha}}{t}
=
-\left(\restr{\dfrac{\partial h}{\partial q^i}}{\gamma\circ \sigma(t)}
+\sum_{\alpha=1}^k (\gamma_i^\alpha\circ \sigma)(t)\restr{\dfrac{\partial h}{\partial z^\alpha}}{\gamma\circ \sigma(t)}\right)\,,\qquad i=1,\dotsc,n\,,\\[2mm]
\sum_{\alpha=1}^k \restr{\dfrac{\partial(\gamma^\alpha\circ \sigma)}{\partial t^\alpha}}{t}
=
\sum_{\alpha=1}^k\sum_{j=1}^n (\gamma_j^\alpha\circ \sigma)(t)\restr{\dfrac{\partial h}{\partial p_j^\alpha}}{\gamma\circ \sigma(t)}
-h\circ \gamma\circ \sigma(t)\,.
\end{dcases}
\]
Since $\gamma(Q)$ is Legendrian because $\gamma$ is holonomic, one has \(\dfrac{\partial \gamma^\alpha}{\partial q^i}=\gamma_i^\alpha\) for \(i=1,\dotsc,n\) and \(\alpha=1,\dotsc,k\). Using the first equation above, it follows that
\[
\sum_{\alpha=1}^k \frac{\partial (\gamma^\alpha \circ \sigma)}{\partial t^\alpha}
=
\sum_{\alpha=1}^k \sum_{i=1}^n \frac{\partial \gamma^\alpha}{\partial q^i}(\sigma(t)) \frac{\partial \sigma^i}{\partial t^\alpha}
=
\sum_{\alpha=1}^k \sum_{i=1}^n (\gamma_i^\alpha\circ \sigma)(t)\,\frac{\partial h}{\partial p^\alpha_i}\circ \gamma \circ \sigma(t)\,.
\]
Comparing this expression with the third HdDW equation, we obtain \(h\circ \gamma\circ \sigma=0\). Since \(\mathbf{X}_h^\gamma\) is integrable, through every point \(q\in Q\) there exists a local integral section \(\sigma\) of \(\mathbf{X}_h^\gamma\) with \(\sigma(t_0)=q\) for some \(t_0\). Hence, \(h\circ \gamma(q)=0\) for every \(q\in Q\), and therefore \(h\circ \gamma=0\).

Let us prove the converse. Assume that \(h\circ \gamma =0\) and let \(\sigma\) be an integral section of \(\mathbf{X}_h^\gamma\). Let us show that \(\gamma \circ \sigma\) is a solution to the $z$-independent $\bm\eta_{Q,k}$-Hamiltonian HdDW equations for \(h\). In local coordinates on \(Q\), and since \(\gamma\) is holonomic, it follows that \eqref{eq:closedgamma2} holds. Since \(h\circ \gamma=0\), we also have \(\dd(h\circ \gamma)=0\), that is
\begin{equation}
\label{eq:d(hgamma)=02}
0 = \sum_{i=1}^n\left(\frac{\partial h}{\partial q^i}\circ \gamma + \sum_{\alpha=1}^k\sum_{j=1}^n\left(\frac{\partial h}{\partial p^\alpha_j}\circ \gamma\right)\frac{\partial \gamma^\alpha_j}{\partial q^i} + \sum_{\alpha=1}^k\left(\frac{\partial h}{\partial z^\alpha}\circ \gamma\right)\frac{\partial \gamma^\alpha}{\partial q^i}\right)\dd q^i.
\end{equation}
Since \(\mathbf{X}_h\) is a $k$-contact Hamiltonian $k$-vector field related to \(h\), equations \eqref{eq:k-contact-HdDW-fields-Darboux-coordinates1} and \eqref{eq:ProjZgamma} yield
\[
(\mathbf X_h^\gamma)_\alpha = \sum_{i=1}^n\left(\frac{\partial h}{\partial p_i^\alpha}\circ \gamma\right)\frac{\partial}{\partial q^i}\,,\qquad \alpha=1,\dotsc,k\,.
\]
On the other hand, since \(\sigma\) is an integral section of \(\mathbf X_h^\gamma\), we have
\begin{equation}
\label{eq:sigmaHDW2}
\restr{\frac{\partial \sigma^i}{\partial t^\alpha}}{t} = \frac{\partial h}{\partial p_i^\alpha}\circ \gamma \circ \sigma(t)\,,\qquad \alpha=1,\dotsc,k\,,\qquad i=1,\dotsc,n\,.
\end{equation}
Using \eqref{eq:closedgamma2}, \eqref{eq:d(hgamma)=02}, \eqref{eq:sigmaHDW2}, and the fact that \(\dfrac{\partial \gamma^\alpha}{\partial q^i}= \gamma_i^\alpha\) for all \(i\) and \(\alpha\), we obtain
\begin{align}
\sum_{\alpha=1}^k \restr{\frac{\partial(\gamma_i^\alpha \circ \sigma)}{\partial t^\alpha}}{t}
& =
\sum_{\alpha=1}^k\sum_{j=1}^n \restr{\frac{\partial \gamma_i^\alpha}{\partial q^j}}{\sigma(t)} \restr{\frac{\partial \sigma^j}{\partial t^\alpha}}{t}
=
\sum_{\alpha=1}^k\sum_{j=1}^n \restr{\frac{\partial \gamma_j^\alpha}{\partial q^i}}{\sigma(t)} \restr{\frac{\partial h}{\partial p_j^\alpha}}{\gamma(\sigma(t))} \\
&=
- \restr{\frac{\partial h}{\partial q^i}}{\gamma(\sigma(t))}
- \sum_{\alpha=1}^k \restr{\frac{\partial \gamma^\alpha}{\partial q^i}}{\sigma(t)} \restr{\frac{\partial h}{\partial z^\alpha}}{\gamma(\sigma(t))} \\
&=
- \restr{\frac{\partial h}{\partial q^i}}{\gamma(\sigma(t))}
- \sum_{\alpha=1}^k (\gamma_i^\alpha\circ \sigma)(t)\restr{\frac{\partial h}{\partial z^\alpha}}{\gamma(\sigma(t))}\,,
\end{align}
for $i=1,\dotsc,n$.  This proves the second family of HdDW equations. Finally,
\[
\sum_{\alpha=1}^k \restr{\frac{\partial (\gamma^\alpha \circ \sigma)}{\partial t^\alpha}}{t}
=
\sum_{\alpha=1}^k\sum_{i=1}^n \restr{\frac{\partial \gamma^\alpha}{\partial q^i}}{\sigma(t)} \restr{\frac{\partial \sigma^i}{\partial t^\alpha}}{t}
=
\sum_{\alpha=1}^k\sum_{i=1}^n (\gamma_i^\alpha \circ \sigma)(t)\restr{\frac{\partial h}{\partial p_i^\alpha}}{\gamma\circ\sigma(t)}.
\]
Since \(h\circ \gamma=0\), this is exactly the third HdDW equation for \(\gamma\circ \sigma\). The first family of HdDW equations follows directly from \eqref{eq:sigmaHDW2}. Hence, \(\gamma\circ \sigma\) is a solution to the $z$-independent $\bm\eta_{Q,k}$-Hamiltonian HdDW equations. 
\end{proof}

Note that the previous theorem, and its corresponding Hamilton--Jacobi equation \(h\circ\gamma=0\), ensure that \(\mathbf X_h^\gamma\) gives rise, via \(\gamma\), to a \(k\)-vector field tangent to \({\rm Im}\,\gamma\). Indeed, if \(\sigma\) is an integral section of \(\mathbf X_h^\gamma\), then \(\gamma\circ\sigma\) is a solution of the $z$-independent $\bm\eta_{Q,k}$-Hamiltonian HdDW equations for \(h\), and therefore its first prolongation is tangent to \({\rm Im}\,\gamma\).

If \(k=1\), then \(\ker\chi=0\), so the almost \(\gamma\)-relation reduces to the usual \(\gamma\)-relation. Hence, the lift of \(\mathbf X_h^\gamma\) coincides with \(\mathbf X_h\) along \({\rm Im}\,\gamma\). Nevertheless, for \(k>1\), the lift does not need to agree with \(\mathbf X_h\) on \({\rm Im}\,\gamma\), since the difference may be a \(k\)-vector field taking values in \(\ker\chi\). For instance, if \(\mathbf X_h\) is an \({\bm \eta}_{Q,k}\)-Hamiltonian \(k\)-vector field, then so is \(\mathbf X'_h\coloneqq\mathbf X_h+\mathbf Z\), where \(\mathbf Z=(Z_1,\dots,Z_k)\) is a \(k\)-vector field whose components are given in Darboux coordinates by
\[
Z_\alpha=\sum_{\beta\neq \alpha}Z_{\alpha}^\beta\frac{\partial}{\partial z^\beta}\,,\qquad \alpha=1,\dots,k\,,
\]
for arbitrary functions \(Z_\alpha^\beta\in \Cinfty(\mathcal J_{Q,k})\). Since \(\mathbf Z\) takes values in \(\ker\chi\), the \(k\)-vector fields \(\mathbf X_h\) and \(\mathbf X'_h\) are associated with the same Hamiltonian function \(h\) and induce the same projected \(k\)-vector field on \(Q\), while they do not coincide on \({\rm Im}\,\gamma\) in general.

As a final remark, note that our Hamilton--Jacobi theorem uses only the integrability of ${\bm X}_h^\gamma$. This is important in applications, where one is interested in finding solutions and some mild integrability condition is used. In this sense, the integrability of ${\bm X}_h^\gamma$ is much easier to ensure than the integrability of ${\bm X}_h$. 

\subsection{The evolution approach}
\label{sec:zindevolution}

Let us now give a version of the previous section for ${\bm \eta}_{Q,k}$-evolution $k$-vector fields in the $z$-independent case.

\begin{theorem}\label{theo:evolutionHJzindependent}
Let $\mathbf{E}_h$ be an ${\bm \eta}_{Q,k}$-evolution $k$-vector field associated with $h\in \Cinfty(\mathcal{J}_{Q,k})$, let $\gamma$ be a holonomic section, and assume $\mathbf{E}_h^\gamma$ to be integrable. Then, the following statements are equivalent:
\begin{enumerate}
\item Every integral section $\sigma$ of $\mathbf{E}_h^\gamma$ lifts to a solution $\gamma\circ\sigma$ of the $z$-independent ${\bm \eta}_{Q,k}$-evolution HdDW equations,
\item $\dd(h\circ\gamma)=0$.
\end{enumerate}
\end{theorem}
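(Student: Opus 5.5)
The plan is to follow the proof of Theorem~\ref{theo:ClassicalHJzindependent} step by step. The only change is in the third (dissipative) equation of \eqref{eq:k-contact-HdDW-Darboux-coordinates2}: it no longer carries the term $-h$. As a result, this equation becomes an identity on the image of a holonomic section, and the real content moves to the balance equations. First, by \eqref{eq:k-contact-HdDW-fields-Darboux-coordinates}, the projected $k$-vector field is
\[
(\mathbf E_h^\gamma)_\alpha=\sum_{i=1}^n\Bigl(\frac{\partial h}{\partial p_i^\alpha}\circ\gamma\Bigr)\frac{\partial}{\partial q^i}.
\]
Proposition~\ref{thm:difference_Hamiltonian_kernel_flat} shows this does not depend on the chosen representative $\mathbf E_h$. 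So an integral section $\sigma$ of $\mathbf E_h^\gamma$ satisfies $\partial\sigma^i/\partial t^\alpha=(\partial h/\partial p_i^\alpha)\circ\gamma\circ\sigma$, and the first family of equations in \eqref{eq:k-contact-HdDW-Darboux-coordinates2} holds automatically for $\gamma\circ\sigma$.

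Next I handle the third equation. Since $\gamma$ is holonomic, Theorem~\ref{thm:k-coisotropic-condition} gives $\gamma^\alpha_i=\partial\gamma^\alpha/\partial q^i$. The chain rule then yields
\[
\sum_\alpha\partial(\gamma^\alpha\circ\sigma)/\partial t^\alpha=\sum_{\alpha,i}(\gamma^\alpha_i\circ\sigma)\,(\partial h/\partial p_i^\alpha)\circ\gamma\circ\sigma,
\]
which is exactly the right-hand side of the third evolution equation. So this equation holds unconditionally, in either direction of the proof.

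The core step is the balance equations. As in the Hamiltonian case, I use the symmetry $\partial\gamma^\alpha_i/\partial q^j=\partial\gamma^\alpha_j/\partial q^i$ from \eqref{eq:closedgamma2} and the chain rule:
\[
\sum_\alpha\frac{\partial(\gamma_i^\alpha\circ\sigma)}{\partial t^\alpha}=\sum_{\alpha,j}\frac{\partial\gamma^\alpha_j}{\partial q^i}\Bigl(\frac{\partial h}{\partial p^\alpha_j}\circ\gamma\Bigr)\circ\sigma .
\]
Expanding $\partial(h\circ\gamma)/\partial q^i$ as in \eqref{eq:d(hgamma)=02} and using $\partial\gamma^\alpha/\partial q^i=\gamma^\alpha_i$, the balance equation for index $i$ along $\gamma\circ\sigma$ becomes
\[
\frac{\partial(h\circ\gamma)}{\partial q^i}\circ\sigma=0 .
\]
For the implication from 2 to 1, $\dd(h\circ\gamma)=0$ gives this immediately, so $\gamma\circ\sigma$ solves all of \eqref{eq:k-contact-HdDW-Darboux-coordinates2}.

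For the implication from 1 to 2, the same identity shows that $\dd(h\circ\gamma)$ vanishes at $\sigma(t)$ for every integral section $\sigma$. The integrability of $\mathbf E_h^\gamma$ ensures every $q\in Q$ lies on the image of some integral section. Hence $\dd(h\circ\gamma)=0$ on all of $Q$.

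I expect the only delicate point to be this balance-equation bookkeeping. Here $\partial h/\partial z^\alpha$ enters twice: once through $\dd(h\circ\gamma)$ via $\partial\gamma^\alpha/\partial q^i$, and once through the balance equation via $\gamma^\alpha_i$. These two contributions must cancel exactly, and holonomy is precisely what guarantees it. This also explains why the Hamilton--Jacobi condition weakens from $h\circ\gamma=0$ to $\dd(h\circ\gamma)=0$: the absence of the $-h$ term means constancy of $h\circ\gamma$ suffices.
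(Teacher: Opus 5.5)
Your proposal is correct and follows essentially the same argument as the paper. The first family of equations is the integral-section condition, and the dissipative equation holds identically by holonomy. The balance equations reduce, via $\partial\gamma^\alpha_i/\partial q^j=\partial\gamma^\alpha_j/\partial q^i$ and $\gamma^\alpha_i=\partial\gamma^\alpha/\partial q^i$, to $\bigl(\partial(h\circ\gamma)/\partial q^i\bigr)\circ\sigma=0$, and integrability of $\mathbf{E}_h^\gamma$ extends this to all of $Q$ in the direction $1\Rightarrow 2$. Writing this as a single identity used in both directions is slightly tidier than the paper's two separate computations, but it is the same proof.
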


\begin{proof}
Suppose that $(\gamma \circ \sigma)(t) = (\sigma^i(t), \gamma_i^\alpha(\sigma(t)), \gamma^\alpha(\sigma(t)))$ is a solution to the $z$-independent HdDW equations for ${\bm \eta}_{Q,k}$-evolution  $k$-vector fields. Then,
\[
\begin{dcases}
\restr{\dfrac{\partial \sigma^i}{\partial t^\alpha}}{t} = \restr{\dfrac{\partial h}{\partial p_i^\alpha}}{\gamma\circ \sigma(t)}\,,\\[2mm]
\sum_{\alpha=1}^k \restr{\dfrac{\partial(\gamma_i^\alpha\circ \sigma)}{\partial t^\alpha}}{t}
=
-\left( \restr{\dfrac{\partial h}{\partial q^i}}{\gamma\circ \sigma(t)} + \sum_{\alpha=1}^k\left(\gamma_i^\alpha \circ \sigma(t) \right) \restr{\dfrac{\partial h}{\partial z^\alpha}}{\gamma\circ \sigma(t)}\right)\,,  \\[2mm]
\sum_{\alpha=1}^k \restr{\dfrac{\partial(\gamma^\alpha\circ \sigma)}{\partial t^\alpha}}{t}
=
\sum_{\alpha=1}^k\sum_{j=1}^n (\gamma^\alpha_j\circ \sigma)(t) \restr{\dfrac{\partial h}{\partial p_j^\alpha}}{\gamma\circ \sigma(t)},
\end{dcases}
\]
for $\alpha=1,\dotsc,k,\quad i=1,\dotsc,n\,.$ On the other hand,
\[
\dd(h \circ \gamma)
=
\sum_{i=1}^n\left( \frac{\partial h}{\partial q^i} \circ \gamma + \sum_{\alpha=1}^k\sum_{j=1}^n\left( \frac{\partial h}{\partial p_j^\alpha}\circ \gamma \right) \frac{\partial \gamma_j^\alpha}{\partial q^i} + \sum_{\alpha=1}^k\left( \frac{\partial h}{\partial z^\alpha}\circ \gamma \right) \frac{\partial \gamma^\alpha}{\partial q^i} \right) \dd q^i.
\]
Since $\gamma$ is holonomic, it follows that \eqref{eq:closedgamma2} holds.
Using \eqref{eq:closedgamma2} and the $z$-independent $\bm\eta_{Q,k}$-evolution HdDW equations, we have
\begin{align*}
\restr{\dd(h\circ \gamma)}{\sigma(t)}
&=
\sum_{i=1}^n
\Bigg(
  \restr{\frac{\partial h}{\partial q^i}}{\gamma(\sigma(t))}
  + \sum_{\alpha=1}^k \sum_{j=1}^n
    \restr{\frac{\partial h}{\partial p^\alpha_j}}{\gamma(\sigma(t))}
    \restr{\frac{\partial \gamma^\alpha_j}{\partial q^i}}{\sigma(t)}\\[1ex]&
  + \sum_{\alpha=1}^k
    \restr{\frac{\partial h}{\partial z^\alpha}}{\gamma(\sigma(t))}
    \restr{\frac{\partial \gamma^\alpha}{\partial q^i}}{\sigma(t)}
\Bigg)
\, \dd q^i(\sigma(t))
\\[1ex]
&=
\sum_{i=1}^n
\Bigg(
  - \sum_{\alpha=1}^k
    \left(\gamma_i^\alpha\circ \sigma\right)(t)
    \restr{\frac{\partial h}{\partial z^\alpha}}{\gamma(\sigma(t))}
  - \sum_{\alpha=1}^k
    \restr{\frac{\partial(\gamma_i^\alpha\circ\sigma)}{\partial t^\alpha}}{t}
\\
&\qquad\qquad
  + \sum_{\alpha=1}^k \sum_{j=1}^n
    \restr{\frac{\partial \sigma^j}{\partial t^\alpha}}{t}
    \restr{\frac{\partial \gamma^\alpha_i}{\partial q^j}}{\sigma(t)}
  + \sum_{\alpha=1}^k
    \restr{\frac{\partial h}{\partial z^\alpha}}{\gamma(\sigma(t))}
    \restr{\frac{\partial \gamma^\alpha}{\partial q^i}}{\sigma(t)}
\Bigg)
\, \dd q^i(\sigma(t))
\\[1ex]
&=
\sum_{i=1}^n
\sum_{\alpha=1}^k
\left(
  - \gamma_i^\alpha(\sigma(t))
  + \restr{\frac{\partial \gamma^\alpha}{\partial q^i}}{\sigma(t)}
\right)
\restr{\frac{\partial h}{\partial z^\alpha}}{\gamma(\sigma(t))}
\, \restr{\dd q^i}{\sigma(t)} = 0\,,
\end{align*}
where we used in the last line the fact that $\dfrac{\partial \gamma^\alpha}{\partial q^i} = \gamma^\alpha_i$ for all possible indices. Hence, $\dd(h\circ \gamma)(q) = 0$ for every $q\in \Ima \sigma$. Since $\mathbf{E}_h^\gamma$ is integrable by assumption, every $q\in Q$ induces an integral section $\sigma$ such that $\sigma(0) = q$. Therefore, $\dd(h\circ\gamma)(q) = 0$ for every $q\in Q$.

Let us prove the converse. Assume that $\dd (h \circ \gamma) = 0$, and let $\sigma$ be an integral section of $\mathbf E_h^\gamma$. Let us prove that $\gamma \circ \sigma$ is a solution to the $z$-independent ${\bm \eta}_{Q,k}$-evolution HdDW equations. From $\dd(h\circ \gamma)=0$, we have
\begin{equation}
\label{eq:d(hgamma)=0-evol}
0
=
\sum_{i=1}^n
\Bigg(
  \frac{\partial h}{\partial q^i} \circ \gamma
  + \sum_{\alpha=1}^k \sum_{j=1}^n
    \left(\frac{\partial h}{\partial p^\alpha_j} \circ \gamma \right)
    \frac{\partial \gamma^\alpha_j}{\partial q^i}
  + \sum_{\alpha=1}^k
    \left(\frac{\partial h}{\partial z^\alpha} \circ \gamma \right)
    \frac{\partial \gamma^\alpha}{\partial q^i}
\Bigg)
\, \dd q^i.
\end{equation}
Since $\mathbf E_h$ is an evolution ${\bm \eta}_{Q,k}$-Hamiltonian $k$-vector field, equations \eqref{eq:ProjZgamma} and \eqref{eq:k-contact-HdDW-fields-Darboux-coordinates} yield
\[
(\mathbf E_h^\gamma)_\alpha
=
\sum_{i=1}^n
\left(
  \frac{\partial h}{\partial p_i^\alpha}
  \circ \gamma
\right)
\frac{\partial}{\partial q^i}\,,
\qquad \alpha=1,\dotsc,k\,.
\]
Since $\sigma$ is an integral section of ${\bf E}_h^\gamma$, it follows that
\begin{equation}
\label{eq:sigmaHDW-evol}
\restr{\frac{\partial \sigma^i}{\partial t^\alpha}}{t} = \frac{\partial h}{\partial p_i^\alpha}\circ \gamma \circ \sigma(t)\,,\qquad i=1,\dotsc,n\,,\qquad \alpha=1,\dotsc,k\,.
\end{equation}
From \eqref{eq:closedgamma2}, \eqref{eq:d(hgamma)=0-evol}, and \eqref{eq:sigmaHDW-evol}, one has
\begin{align}
\sum_{\alpha=1}^k
\restr{\frac{\partial(\gamma_i^\alpha \circ \sigma)}{\partial t^\alpha}}{t}
&=
\sum_{\alpha=1}^k \sum_{j=1}^n
\restr{\frac{\partial \gamma_i^\alpha}{\partial q^j}}{\sigma(t)}
\restr{\frac{\partial \sigma^j}{\partial t^\alpha}}{t}
\\
&=
\sum_{\alpha=1}^k \sum_{j=1}^n
\restr{\frac{\partial \gamma_i^\alpha}{\partial q^j}}{\sigma(t)}
\restr{\frac{\partial h}{\partial p_j^\alpha}}{\gamma(\sigma(t))}
=
\sum_{\alpha=1}^k \sum_{j=1}^n
\restr{\frac{\partial \gamma_j^\alpha}{\partial q^i}}{\sigma(t)}
\restr{\frac{\partial h}{\partial p_j^\alpha}}{\gamma(\sigma(t))}
\\
&=
- \restr{\frac{\partial h}{\partial q^i}}{\gamma(\sigma(t))}
-
\sum_{\alpha=1}^k
\restr{\frac{\partial \gamma^\alpha}{\partial q^i}}{\sigma(t)}
\restr{\frac{\partial h}{\partial z^\alpha}}{\gamma(\sigma(t))}
\\
&=
- \restr{\frac{\partial h}{\partial q^i}}{\gamma(\sigma(t))}
-
\sum_{\alpha=1}^k
\gamma_i^\alpha (\sigma(t))\,
\restr{\frac{\partial h}{\partial z^\alpha}}{\gamma(\sigma(t))}.
\end{align}
Moreover,
\[
\sum_{\alpha=1}^k
\restr{\frac{\partial (\gamma^\alpha \circ \sigma)}{\partial t^\alpha}}{t}
=
\sum_{\alpha=1}^k \sum_{i=1}^n
\restr{\frac{\partial \gamma^\alpha}{\partial q^i}}{\sigma(t)}
\restr{\frac{\partial \sigma^i}{\partial t^\alpha}}{t}
=
\sum_{\alpha=1}^k \sum_{i=1}^n
\gamma^\alpha_i(\sigma(t))\,
\restr{\frac{\partial h}{\partial p_i^\alpha}}{\gamma\circ\sigma(t)}.
\]
This concludes the proof.
\end{proof}

\begin{remark}
It is worth noting that, if \(k=1\), then \(\ker\chi=0\), so the almost \(\gamma\)-relation reduces to the usual \(\gamma\)-relation. Hence, along \(\Ima\gamma\), the lift of \(\mathbf E_h^\gamma\) coincides with \(\mathbf E_h\).

It is also worth observing that the evolution Hamilton--Jacobi equation \(\dd(h\circ\gamma)=0\) is different from the one in the standard \(z\)-independent formalism, but this has a natural geometric explanation. Indeed, a similar equation appears in the classical \(k\)-symplectic Hamilton--Jacobi theory. Let us explain this.

Recall that, given an exact \(k\)-symplectic manifold \((P,\bm\omega=\sum_{\alpha=1}^k\omega^\alpha\otimes e_\alpha)\), with \(\bm\omega=-\dd\bm\theta\), its \(k\)-contactification is \(P\times \mathbb{R}^k\), endowed with the \(k\)-contact form \(\bm\eta_{\bm \theta}=\sum_{\alpha=1}^k(\d z^\alpha-\rho^*\theta^\alpha)\otimes e_\alpha\), where \(\rho\colon P\times\mathbb{R}^k\to P\) is the canonical projection. If \(H\in \Cinfty(P)\) and \(\mathbf X_H\) is a \(k\)-symplectic Hamiltonian \(k\)-vector field associated with \(H\), then \(h\coloneqq H\circ\rho\) admits a canonically associated \(\bm\eta_{\bm\theta}\)-evolution \(k\)-vector field on \(P\times\mathbb{R}^k\) projecting onto \(\mathbf X_H\). Moreover, if \(\gamma_0:Q\to \bigoplus^k\T^*Q\) is a section given by closed one-forms, then, locally, \(\gamma_0^\alpha=\dd S^\alpha\) and one obtains a holonomic section \(\widehat\gamma\colon Q\to \mathcal J_{Q,k}=\bigoplus^k\T^*Q\times\mathbb{R}^k\) by
\[
\widehat\gamma(q)=\bigl(q,\gamma_{0\,i}^\alpha(q),S^\alpha(q)\bigr)\,.
\]
Since \(h=H\circ\rho\), one has \(h\circ \widehat\gamma = H\circ \gamma_0\). Therefore,
\[
\dd(h\circ \widehat\gamma)=0 \qquad\Longleftrightarrow\qquad \dd(H\circ \gamma_0)=0\,.
\]
Hence, the \(z\)-independent Hamilton--Jacobi equation in the evolution \(k\)-contact formalism recovers exactly the classical \(k\)-symplectic Hamilton--Jacobi equation under \(k\)-contactification. Moreover, the projection to $Q$ is exactly the same in both formalisms, so the integrability condition is also the same. This explains why the Hamilton--Jacobi equation in Theorem \ref{theo:evolutionHJzindependent} is different from the one in Theorem \ref{theo:ClassicalHJzindependent}, and why it is the natural one for evolution \(k\)-vector fields.
\end{remark}

\section{On \texorpdfstring{$z$}{}-dependent \texorpdfstring{$k$}{}-contact Hamilton--Jacobi equations}\label{sec:HJzdependent}
Let \((\mathcal{J}_{Q,k}=\bigoplus^k\cT Q\times\mathbb{R}^k,{\bm \eta}_{Q,k},h)\) be a $k$-contact Hamiltonian system. Consider a section, written in Darboux coordinates, of the form
\[
\gamma\colon Q\times\mathbb{R}^k\longrightarrow \mathcal{J}_{Q,k}\,,\qquad
(q^i,z^\alpha)\longmapsto (q^i,\gamma_i^\alpha(q,z),z^\alpha)\,.
\]

Let \(\mathbf Z=(Z_1,\dots,Z_k)\) be a $k$-vector field on \(\mathcal{J}_{Q,k}\). Writing
\[
Z_\alpha
=
\sum_{i=1}^n (Z_\alpha)^i\frac{\partial}{\partial q^i}
+\sum_{\beta=1}^k\sum_{i=1}^n (Z_\alpha)_i^\beta\frac{\partial}{\partial p_i^\beta}
+\sum_{\beta=1}^k (Z_\alpha)^\beta\frac{\partial}{\partial z^\beta}\,,\qquad \alpha=1,\dotsc,k\,,
\]
the projected $k$-vector field on \(Q\times\mathbb{R}^k\), defined by
\(
\textstyle\mathbf Z^\gamma:=\left(\bigoplus\nolimits^k \T\pi_{Q\times\mathbb{R}^k}\right)\circ (\mathbf Z\circ \gamma)\,,
\) 
reads
\[
Z^\gamma_\alpha
=
\sum_{i=1}^n (Z_\alpha)^i\circ\gamma\, \frac{\partial}{\partial q^i}
+\sum_{\beta=1}^k (Z_\alpha)^\beta\circ\gamma\, \frac{\partial}{\partial z^\beta}\,,\qquad \alpha=1,\dotsc,k\,,
\]
where \(\pi_{Q\times\mathbb{R}^k}\colon \mathcal{J}_{Q,k}\to Q\times\mathbb{R}^k\) is the canonical projection. Recall that \(\mathbf Z\) and \(\mathbf Z^\gamma\) are \emph{\(\gamma\)-related} if
\(\textstyle
\left(\bigoplus\nolimits^k \T\gamma\right)\circ \mathbf Z^\gamma=\mathbf Z\circ\gamma\,.
\) In the \(z\)-dependent setting, however, the natural projected object is only defined modulo the image of \(\ker\chi\) under \(\bigoplus^k\T\pi_{Q\times\mathbb{R}^k}\). Thus, by abuse of notation, we write
\[\textstyle
[\cdot]\colon \bigoplus\nolimits^k \T\mathcal{J}_{Q,k}\longrightarrow \bigoplus\nolimits^k \T\mathcal{J}_{Q,k}/\ker\chi
\]
and
\[
\textstyle[\cdot]\colon \bigoplus\nolimits^k \T(Q\times\mathbb{R}^k)\longrightarrow \bigoplus\nolimits^k \T(Q\times\mathbb{R}^k)\big/\big(\bigoplus\nolimits^k\T\pi_{Q\times\mathbb{R}^k}\big)(\ker\chi)
\]
for the corresponding canonical projections. Accordingly, the projected class of \(\mathbf Z\) along \(\gamma\) is
\[\textstyle
[\mathbf Z^\gamma]:=\left[\big(\bigoplus\nolimits^k \T\pi_{Q\times\mathbb{R}^k}\big)\circ (\mathbf Z\circ \gamma)\right].
\]

Although \(\ker(\bigoplus^k\T\pi_{Q\times\mathbb{R}^k})\) does not contain \(\ker\chi\) in general,  the map \(\bigoplus^k\T\pi_{Q\times\mathbb{R}^k}\) induces a well-defined quotient morphism
\[
\widetilde{\bigoplus^k\T\pi_{Q\times\mathbb{R}^k}}\colon
\bigoplus^k \T\mathcal{J}_{Q,k}/\ker\chi
\longrightarrow
\bigoplus^k \T(Q\times\mathbb{R}^k)\big/\big(\bigoplus^k\T\pi_{Q\times\mathbb{R}^k}\big)(\ker\chi)
\]
by
\[
\widetilde{\bigoplus^k\T\pi_{Q\times\mathbb{R}^k}}([\mathbf Z])
=
\left[
\left(\bigoplus^k\T\pi_{Q\times\mathbb{R}^k}\right)(\mathbf Z)
\right].
\]
Indeed, changing \(\mathbf Z\) by a section of \(\ker\chi\) changes its projection by a section of \((\bigoplus^k\T\pi_{Q\times\mathbb{R}^k})(\ker\chi)\), which is precisely the subbundle divided out in the target.
Accordingly, the projected class of \(\mathbf Z\) along \(\gamma\) is defined by \([\mathbf Z^\gamma]:=\widetilde{\bigoplus^k\T\pi_{Q\times\mathbb{R}^k}}\circ [\mathbf Z]\circ\gamma\).

We say that \([\mathbf Z]\) and \([\mathbf Z^\gamma]\) are \emph{almost \(\gamma\)-related in the \(z\)-dependent sense} if there exists $\bar{\bm Z}\in [\mathbf Z]$ and $\bar{\bm Z}^\gamma\in [\mathbf Z^\gamma]$ such that \([\mathbf{\bar{Z}}\circ\gamma]=[(\bigoplus^k \T\gamma)\circ \mathbf{ \bar{Z}}^\gamma]\).  Then, \([\mathbf Z]\) and \([\mathbf Z^\gamma]\) are almost \(\gamma\)-related if and only if the following diagram commutes:

\[
\begin{tikzcd}[row sep=4.8em, column sep=4.6em]
\mathcal{J}_{Q,k} \arrow[r, "{\mathbf{\overline{Z}}}"] 
& \bigoplus^k\T\mathcal{J}_{Q,k} \arrow[r, "{[\cdot]}"] \arrow[d, "{\bigoplus^k\T\pi_{Q\times\mathbb{R}^k}}"] 
& \bigoplus^k \T\mathcal{J}_{Q,k}/\ker\chi \arrow[d, "{\widetilde{\bigoplus^k\T\pi_{Q\times\mathbb{R}^k}}}"] \\
Q\times\mathbb{R}^k \arrow[r, "{\mathbf{\overline{Z}}^\gamma}"'] \arrow[u, "\gamma"] 
& \bigoplus^k \T(Q\times\mathbb{R}^k) \arrow[r, "{[\cdot]}"'] \arrow[ur, "{\big[\bigoplus^k \T\gamma\big]}"'] 
& \bigoplus^k \T(Q\times\mathbb{R}^k)\big/\big(\bigoplus^k\T\pi_{Q\times\mathbb{R}^k}\big)(\ker\chi)
\end{tikzcd}
\]

One of the keys to extend the theory of Hamilton--Jacobi equations to different geometric settings is to give adequate conditions on the section \(\gamma\) so as to obtain a Hamilton--Jacobi theorem with useful applications. In the \(k\)-contact realm, the following conditions play this role, as shown later in the paper.

\begin{lemma}
Consider \((\mathcal{J}_{Q,k}=\bigoplus^{k}\T^{*}Q\times\mathbb{R}^{k}, \bm\eta_{{Q,k}})\) and a section \(\gamma\colon Q\times\mathbb{R}^{k}\to \mathcal{J}_{Q,k}\), \(\gamma(q,z)=(q,\gamma_{i}^{\alpha}(q,z),z)\), of \({\rm pr}\colon\mathcal{J}_{Q,k}\rightarrow Q\times \mathbb{R}^k\). Then, \(\gamma\) is maximally coisotropic if and only if
\begin{equation}\label{eq:Con1}
\frac{\partial \gamma_{j}^{\alpha}}{\partial q^{i}}+\sum_{\beta=1}^k \gamma_{i}^{\beta}\frac{\partial \gamma_{j}^{\alpha}}{\partial z^{\beta}}=\frac{\partial \gamma_{i}^{\alpha}}{\partial q^{j}}+\sum_{\beta=1}^k \gamma_{j}^{\beta}\frac{\partial \gamma_{i}^{\alpha}}{\partial z^{\beta}}\,,\qquad \alpha=1,\dots,k\,,\qquad i,j=1,\dotsc,n\,.
\end{equation}
For each \(z\in\mathbb{R}^{k}\), the induced map \(\gamma_{z}\colon q\in Q\mapsto (q,\gamma^\alpha_i(q,z))\in \bigoplus^{k}\T^{*}Q\) is isotropic with respect to the canonical \(k\)-symplectic structure \(\bm\omega_{Q,k}\) on \(\bigoplus^{k}\T^{*}Q\) if and only if
\begin{equation}\label{eq:Con2}
\frac{\partial \gamma_{j}^{\alpha}}{\partial q^{i}}=\frac{\partial \gamma_{i}^{\alpha}}{\partial q^{j}}\,,\qquad \alpha=1,\dots,k\,,\qquad i,j=1,\dots,n\,.
\end{equation}
\end{lemma}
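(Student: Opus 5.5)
The plan is a direct coordinate computation in Darboux coordinates $(q^i,p_i^\alpha,z^\alpha)$, in the spirit of Theorem~\ref{thm:k-coisotropic-condition}. Set $W_x=\T_x\Ima\gamma$ at $x=\gamma(q,z)$. Differentiating $\gamma$ gives
\[
W=\Big\langle \parder{}{q^i}+\sum_{\alpha,j}\parder{\gamma^\alpha_j}{q^i}\parder{}{p^\alpha_j},\ \parder{}{z^\beta}+\sum_{\alpha,j}\parder{\gamma^\alpha_j}{z^\beta}\parder{}{p^\alpha_j}\Big\rangle .
\]
A combination $a^iX_i+b^\beta Y_\beta$ of these generators lies in $\ker\bm\eta_{Q,k}$ if and only if $b^\alpha=\sum_i\gamma^\alpha_i a^i$. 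Hence $W\cap\ker\bm\eta_{Q,k}$ is the $n$-dimensional space spanned by
\[
D_i=\parder{}{q^i}+\sum_\beta\gamma^\beta_i\parder{}{z^\beta}+\sum_{\alpha,j}M^\alpha_{ij}\parder{}{p^\alpha_j},\qquad
M^\alpha_{ij}:=\parder{\gamma^\alpha_j}{q^i}+\sum_\beta\gamma^\beta_i\parder{\gamma^\alpha_j}{z^\beta}.
\]
Thus the $D_i$ are the horizontal vectors $H_i=\partial_{q^i}+\sum_\beta p^\beta_i\partial_{z^\beta}$ of Example~\ref{ex:canonical-k-contact-structure} corrected by a $\mathcal V_{Q,k}$-part. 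Condition \eqref{eq:Con1} says exactly that $M^\alpha_{ij}=M^\alpha_{ji}$.

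Next, I will compute the $k$-contact orthogonal $(W\cap\ker\bm\eta_{Q,k})^{\perp_{\bm\eta}}$ directly from Definition~\ref{def:IsotropicDistribution}. A general element of $\ker\bm\eta_{Q,k}$ is $v=\sum_i a^iH_i+\sum_{\alpha,j}c^\alpha_j\partial_{p^\alpha_j}$. Using $\dd\eta^\alpha=\sum_i\dd q^i\wedge\dd p^\alpha_i$ (up to the global sign convention, which is irrelevant here), one gets
\[
\dd\eta^\alpha(v,D_j)=\sum_i a^iM^\alpha_{ji}-c^\alpha_j .
\]
Requiring this to vanish for all $\alpha$ and $j$ determines every $c^\alpha_j$ from the $a^i$. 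So the orthogonal is the $n$-dimensional space spanned by $H_i+\sum_{\alpha,j}M^\alpha_{ji}\partial_{p^\alpha_j}$. Comparing with the $D_i$, and noting that $v\mapsto(a^i)$ is injective on both spaces, the orthogonal equals $W\cap\ker\bm\eta_{Q,k}$ if and only if $M^\alpha_{ij}=M^\alpha_{ji}$ for all indices, which is \eqref{eq:Con1}. As a by-product, coisotropy and maximal coisotropy coincide here, since both spaces have dimension $n$ and the inclusion in either direction forces the same symmetry of $M$.

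For the second claim, fix $z$ and pull back $\bm\omega_{Q,k}=\sum_\alpha\big(\sum_i\dd q^i\wedge\dd p^\alpha_i\big)\otimes e_\alpha$ by $\gamma_z$. This gives $\sum_{i<j}\big(\partial_{q^i}\gamma^\alpha_j-\partial_{q^j}\gamma^\alpha_i\big)\,\dd q^i\wedge\dd q^j$ in each component. Isotropy of $\gamma_z(Q)$ means exactly that this pull-back vanishes, which is \eqref{eq:Con2}.

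The only delicate step is the orthogonal computation. For $k>1$ the form $\dd\bm\eta$ restricted to $\ker\bm\eta$ is not symplectic, so no dimension formula $\dim E^{\perp}=2n-\dim E$ is available as in Lemma~\ref{lem:coisotropic-max-coisotropic-contact-case}. I will handle this by the explicit parametrisation above: the $\mathcal V_{Q,k}$-components of $v$ are solved for in terms of the $H_i$-components, and the $nk$ pairing conditions coming from the $D_j$ give exactly the $nk$ equations needed to fix the $c^\alpha_j$.
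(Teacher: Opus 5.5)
Your proposal is correct and follows essentially the same route as the paper: compute $\mathrm{T}\,\mathrm{Im}\,\gamma\cap\ker\bm\eta_{Q,k}$ in Darboux coordinates, solve the pairing conditions for the $\mathcal{V}_{Q,k}$-components of a general vector in $\ker\bm\eta_{Q,k}$, and read off the symmetry of $M^\alpha_{ij}$; the second claim is the same pull-back computation. The only difference is cosmetic: the paper decomposes kernel vectors along its $X_i$ (your $D_i$) rather than the $H_i$, so the obstruction appears as the antisymmetric part $M^\alpha_{im}-M^\alpha_{mi}$ instead of as a comparison of two graphs over the $a^i$.
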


\begin{proof}
Consider adapted Darboux coordinates on \(\mathcal{J}_{Q,k}\) and set \(N={\rm Im}\,\gamma\), which is a submanifold of \(\mathcal{J}_{Q,k}\). A direct computation shows that \(\T N\cap\ker\bm \eta_{Q,k}\) is locally generated by the vector fields
\[
X_i=\frac{\partial}{\partial q^{i}}+\sum_{\alpha=1}^k \gamma_{i}^{\alpha}\frac{\partial}{\partial z^{\alpha}}+\sum_{\alpha=1}^k\sum_{j=1}^n\left(\frac{\partial \gamma_{j}^{\alpha}}{\partial q^{i}}+\sum_{\beta=1}^k \gamma_{i}^{\beta}\frac{\partial \gamma_{j}^{\alpha}}{\partial z^{\beta}}\right)\frac{\partial}{\partial p_{j}^{\alpha}}\,,\qquad i=1,\dots,n\,.
\]
Fix \(x\in N\) and let \(v\in\ker (\bm\eta_{Q,k})_x\). Since \(X_1,\dotsc,X_n\) are linearly independent, \(v\) admits a unique decomposition \(v=\sum_{j=1}^n a^j X_j+\sum_{\alpha=1}^k\sum_{j=1}^n u_j^\alpha\,\frac{\partial}{\partial p_j^\alpha}\), for certain constants \(a^j,u^\alpha_j\), with \(j=1,\dotsc,n\) and \(\alpha=1,\dotsc,k\).

Let us assume now that \(v\) satisfies \(\dd\eta^\alpha_{Q,k}(v,X_i)=0\) for \(\alpha=1,\dots,k\) and \(i=1,\dots,n\). Using \(\dd\eta^\alpha_{Q,k}=\sum_{r=1}^n\dd q^r\wedge \dd p_r^\alpha\) for \(\alpha=1,\dotsc,k\), one obtains
\[
\dd\eta^\alpha_{Q,k}(X_m,X_i)=A_{im}^\alpha-A_{mi}^\alpha\,,\qquad \dd\eta^\alpha_{Q,k}\!\left(\frac{\partial}{\partial p_j^\beta},X_i\right)=-\delta^\alpha_\beta\,\delta_{ij}\,,\qquad A_{ij}^\alpha:=\frac{\partial \gamma_{j}^{\alpha}}{\partial q^{i}}+\sum_{\beta=1}^k \gamma_{i}^{\beta}\frac{\partial \gamma_{j}^{\alpha}}{\partial z^{\beta}}\,.
\]
Hence, the equations \(\dd\eta^\alpha_{Q,k}(v,X_i)=0\) for \(\alpha=1,\dotsc,k\) and \(i=1,\dotsc,n\) are equivalent to
\[
u_i^\alpha=\sum_{m=1}^n a^m(A_{im}^\alpha-A_{mi}^\alpha)\,,\qquad \alpha=1,\dotsc,k,\qquad i=1,\dotsc,n\,.
\]
By definition, \((\T N\cap \ker \bm\eta_{Q,k})_x^{\perp_{\bm\eta_{Q,k}}}\subset\ker \bm\eta_{Q,k,x}\) consists of those \(v\) satisfying the above relations. The inclusion \((\T N\cap \ker \bm\eta_{Q,k})_x^{\perp_{\bm\eta_{Q,k}}}\subset (\T N\cap \ker \bm\eta_{Q,k})_x\) holds if and only if \(u_i^\alpha=0\) for all \(\alpha=1,\dotsc,k\) and \(i=1,\dotsc,n\). The equality is obtained if and only if
\[
A_{ij}^\alpha=A_{ji}^\alpha\,,\qquad \alpha=1,\dots,k\,,\qquad i,j=1,\dots,n\,.
\]
This proves the first claim, namely \eqref{eq:Con1}.

For the second statement, fix \(z\in\mathbb R^k\) and consider the map \(\gamma_z\colon Q\to\bigoplus^k \T^*Q\) given by \(\gamma_z(q)=(q,\gamma_i^\alpha(q,z))\). A direct pull-back computation yields
\[
\gamma_z^*\omega_Q^\alpha=\sum_{1\leq i<j\leq n}\left(\frac{\partial \gamma_i^\alpha}{\partial q^j}(q,z)-\frac{\partial \gamma_j^\alpha}{\partial q^i}(q,z)\right)\dd q^i\wedge \dd q^j\,,\qquad \alpha=1,\dotsc,k\,.
\]
Therefore, \(\gamma_z\) is isotropic with respect to \(\bm\omega_{Q,k}\) if and only if \eqref{eq:Con2} holds, which concludes the proof.
\end{proof}
In the case $k=1$, the coisotropy assumption appearing in the $z$-dependent contact Hamilton--Jacobi theory is equivalent to the maximal coisotropy condition used in the present $k$-contact setting. This is a consequence of the following lemma.

\begin{lemma}\label{lem:k1-maximal-coisotropic}
Let \(
\gamma\colon Q\times \mathbb R \to \mathcal J_{Q,1}=\T^*Q\times \mathbb R,
\,\,
\gamma(q,z)=(q,\gamma_i(q,z),z),
\) be a section of the canonical projection \(
{\rm pr}\colon \T^*Q\times \mathbb R\to Q\times \mathbb R
\), and let $N={\rm Im}\,\gamma$. Then, $ \restr{\eta_{Q,1}}{\T N}$ is nowhere vanishing. In particular, $N$ is coisotropic if and only if it is maximally coisotropic.
\end{lemma}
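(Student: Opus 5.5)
The plan is to verify directly, in Darboux coordinates, that $\eta_{Q,1}$ restricted to $\T N$ never vanishes, and then to invoke Lemma~\ref{lem:coisotropic-max-coisotropic-contact-case}. Here $\mathcal J_{Q,1}=\T^*Q\times\mathbb R$ has dimension $2n+1$ with contact form $\eta_{Q,1}=\dd z-\sum_{i=1}^n p_i\,\dd q^i$. Since $\gamma$ is a section of ${\rm pr}\colon\T^*Q\times\mathbb R\to Q\times\mathbb R$, the submanifold $N={\rm Im}\,\gamma$ is embedded, diffeomorphic to $Q\times\mathbb R$, and has dimension $n+1$. So the dimensional hypothesis of Lemma~\ref{lem:coisotropic-max-coisotropic-contact-case} holds automatically.

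First, I will compute $\T N$ by pushing forward the coordinate fields of $Q\times\mathbb R$ through $\gamma$. This gives the frame
\[
\T_{\gamma(q,z)}N=\left\langle \frac{\partial}{\partial q^i}+\sum_{j=1}^n\frac{\partial\gamma_j}{\partial q^i}\frac{\partial}{\partial p_j}\,,\ \frac{\partial}{\partial z}+\sum_{j=1}^n\frac{\partial\gamma_j}{\partial z}\frac{\partial}{\partial p_j}\right\rangle_{i=1,\dots,n}.
\]

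Next, I will evaluate $\eta_{Q,1}$ on the last generator $\T\gamma(\partial/\partial z)$. Since $\eta_{Q,1}$ annihilates every $\partial/\partial p_j$ and has no $\dd q^i$ component along this vector, its value is $\dd z(\partial/\partial z)=1$. Equivalently, $\gamma^*\eta_{Q,1}=\dd z-\sum_i\gamma_i\,\dd q^i$, which is nowhere zero because its $\dd z$-coefficient is $1$. Hence $\restr{\eta_{Q,1}}{\T_xN}\neq0$ at every $x\in N$.

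Finally, Lemma~\ref{lem:coisotropic-max-coisotropic-contact-case} applies to the contact manifold $(\mathcal J_{Q,1},\eta_{Q,1})$ and the $(n+1)$-dimensional submanifold $N$ on which $\restr{\eta_{Q,1}}{\T N}$ is nowhere vanishing. It yields that $N$ is coisotropic if and only if it is maximally coisotropic.

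There is no genuine obstacle. The only point requiring care is checking that the notions of coisotropy in Definition~\ref{def:IsotropicDistribution}, specialised to $k=1$, are exactly those used in Lemma~\ref{lem:coisotropic-max-coisotropic-contact-case}, namely that both are phrased via $W_x=\T_xN\cap\ker\eta_x$ and its orthogonal inside $\ker\eta_x$. This is immediate from the definitions.
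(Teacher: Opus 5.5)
Your proof is correct and follows the paper's argument essentially verbatim: it evaluates $\eta_{Q,1}$ on $\T\gamma(\partial/\partial z)=\partial/\partial z+\sum_i(\partial\gamma_i/\partial z)\,\partial/\partial p_i$ to obtain $1$, and then applies Lemma~\ref{lem:coisotropic-max-coisotropic-contact-case} to the $(n+1)$-dimensional submanifold $N$. Your explicit check that $\dim N=n+1$ is a small but welcome addition, since the paper leaves that hypothesis of the cited lemma implicit.
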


\begin{proof}
Let $(q^i,p_i,z)$ be adapted Darboux coordinates on $\T^*Q\times \mathbb R$, so that $\eta_{Q,1}=\d z-\theta_Q=\d z-\sum_{i=1}^np_i\,\d q^i$. Since $\gamma$ is a section of ${\rm pr}$, one has
\(
\gamma(q,z)=(q,\gamma_i(q,z),z),
\) and therefore
\[
\T\gamma\!\left(\frac{\partial}{\partial z}\right)
=
\frac{\partial}{\partial z}
+
\sum_{i=1}^n
\frac{\partial \gamma_i}{\partial z}\frac{\partial}{\partial p_i}\,,\qquad \eta_{Q,1}\!\left(\T\gamma\!\left(\frac{\partial}{\partial z}\right)\right)=1\,.
\]
Hence, $\restr{\eta_{Q,1}}{\T N}$ is nowhere vanishing and the result follows from Lemma \ref{lem:coisotropic-max-coisotropic-contact-case}.
\end{proof}

It is worth noting that conditions \eqref{eq:Con2} allow for the local existence of functions $W^\alpha(q,z)$ such that
\[
\gamma_i^\alpha=\frac{\partial W^\alpha}{\partial q^i}\,,\qquad \alpha=1,\dotsc,k\,,\qquad i=1,\dotsc,n\,.
\]

The following definition will be useful to describe our $k$-contact Hamilton--Jacobi theorems in the following subsections. Moreover, it is a natural extension of the definition given in \cite{LLM_21a} for the case of contact manifolds. It is indeed an operator that allows us to describe a condition for $h$ in Hamilton--Jacobi theory in the $k$-contact framework.  
\begin{definition}
Let  \(
\alpha \colon Q \times \mathbb{R}^k \longrightarrow 
\bigwedge\nolimits^\ell \cT Q \times \mathbb{R}^k
\) be a section, consider $\ell$ to be a non-negative integer, and fix $z\in\mathbb{R}^k$. Define
\begin{align*}\textstyle
\alpha_z \colon Q &\longrightarrow \bigwedge\nolimits^\ell \cT Q \\
x & \longmapsto \mathrm{pr}_{\Lambda^\ell \cT Q}(\alpha(x,z))\,,
\end{align*}
where $\mathrm{pr}_{\Lambda^\ell \cT Q}$ denotes the canonical projection. The \emph{exterior derivative of $\alpha$ at fixed $z$} is the section \(
\dd_Q \alpha \colon Q \times \mathbb{R}^k \longrightarrow
\bigwedge\nolimits^{\ell+1}\cT Q \times \mathbb{R}^k
\) defined by
\[
\dd_Q \alpha(x,z) = (\dd \alpha_z(x),\, z)\,.
\]
\end{definition}

\subsection{The classical approach}
\label{sec:zdepclassical}

Let us provide the Hamilton--Jacobi theorem for a \({\bm \eta}_{Q,k}\)-Hamiltonian system in the \(z\)-dependent framework. We now formulate the \(z\)-dependent Hamilton--Jacobi theorem intrinsically with respect to the gauge freedom given by \(\ker\chi\). Since \({\bm \eta}_{Q,k}\)-Hamiltonian \(k\)-vector fields associated with the same Hamiltonian \(h\) are only defined up to elements of \(\ker\chi\), their projections to \(Q\times \mathbb R^k\) are naturally defined only up to the projections of such gauge terms. Accordingly, the Hamilton--Jacobi equation must be formulated at the level of these projected classes.

    \begin{theorem}\label{thm:zdependentHJ-gauge} Consider a \(k\)-contact Hamiltonian system \((\mathcal{J}_{Q,k}=\bigoplus^{k}\T^{*}Q\times\mathbb{R}^{k},\bm \eta_{{Q,k}},h)\). Let \([\mathbf X_h]\) denote the class of \(\bm\eta_{Q,k}\)-Hamiltonian \(k\)-vector fields associated with \(h\), modulo \(\ker\chi\), and let \([\mathbf X_h^\gamma]\) be the projected class induced by \([\mathbf X_h]\) along \(\gamma\). Let \(\gamma\colon Q\times\mathbb{R}^{k}\to \mathcal{J}_{Q,k}\) be a section of \({\rm pr}\colon\mathcal{J}_{Q,k}\to Q\times \mathbb{R}^k\), and write \(N={\rm Im}\,\gamma\). Assume that \(N\) is maximally coisotropic. Then, the following statements are equivalent:
\begin{enumerate}[(a)]
    \item there exist a representative \(\mathbf X\in[\mathbf X_h]\) and a representative \(\overline{\mathbf X}^{\gamma}\in[\mathbf X_h^\gamma]\) such that \((\bigoplus^k\T\gamma)(\overline{\mathbf X}^{\gamma})\) and \(\mathbf X\circ\gamma\) determine the same \(z\)-dependent HdDW equations for \(h\);
    \item there exists a matrix of smooth functions \(C=(C_\alpha^\beta)\in \Cinfty(Q\times\mathbb R^k,\operatorname{Mat}_{k\times k})\) satisfying \(\sum_{\alpha=1}^k C_\alpha^\alpha=-(h\circ\gamma)\) and such that
    \begin{equation}\label{eq:kContactHJGauge}
    \dd_Q(h\circ\gamma)+\sum_{\beta=1}^k \Gamma_\beta\,\gamma^*\theta^\beta+\sum_{\alpha,\beta=1}^k C_\alpha^\beta\,i_{\partial/\partial z^\beta}\,\dd(\gamma^*\theta^\alpha)=0\,,
    \end{equation} 
    where $\Gamma_\beta:=\dd h|_\gamma\!\left(\T\gamma\!\left(\frac{\partial}{\partial z^\beta}\right)\right)$, for $\beta=1,\dots,k$.
\end{enumerate}
\end{theorem}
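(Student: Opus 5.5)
The plan is to work entirely in Darboux coordinates $(q^i,p_i^\alpha,z^\alpha)$ and to turn condition (a) into explicit equations on the components of a representative $\overline{\mathbf X}^{\gamma}$ on $Q\times\mathbb R^k$. The first step is to describe the class $[\mathbf X_h^\gamma]$. By \eqref{eq:k-contact-HdDW-fields-Darboux-coordinates1}, the $q$-components of every representative are fixed:
\[
(\overline X^\gamma_\alpha)^i=\frac{\partial h}{\partial p_i^\alpha}\circ\gamma=:a^i_\alpha .
\]
By Proposition~\ref{prop:kernel-flat-k-contact-Darboux}, the $z$-components $(\overline X^\gamma_\alpha)^\beta$ are free except for their trace, which equals $\sum_{\alpha,j}\gamma^\alpha_j a^j_\alpha-h\circ\gamma$. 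I therefore parametrise the representatives by
\[
(\overline X^\gamma_\alpha)^\beta=\sum_{j=1}^n\gamma^\beta_j\,a^j_\alpha+C_\alpha^\beta ,
\]
where $C=(C^\beta_\alpha)$ is an arbitrary matrix of smooth functions on $Q\times\mathbb R^k$. In this parametrisation, the trace constraint is exactly $\sum_\alpha C_\alpha^\alpha=-h\circ\gamma$. Since $\ker\chi$ also lets us modify every $p$-component of $\mathbf X$ apart from the traces $\sum_\alpha(X_\alpha)^\alpha_i$, statement (a) says the following: for some such $C$, the lift $(\bigoplus^k\T\gamma)(\overline{\mathbf X}^\gamma)$ satisfies the three families in \eqref{eq:k-contact-HdDW-fields-Darboux-coordinates1} along $N$. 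I will check that this reformulation is faithful in both directions, since a suitable $\mathbf X$ can always be chosen in $[\mathbf X_h]$ to match the lift.

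Next I compute the lift. Its $\alpha$-th component is
\[
\sum_i a^i_\alpha\frac{\partial}{\partial q^i}+\sum_\beta(\overline X^\gamma_\alpha)^\beta\frac{\partial}{\partial z^\beta}+\sum_{\beta,i}\Big(\sum_j a^j_\alpha\frac{\partial\gamma^\beta_i}{\partial q^j}+\sum_\delta(\overline X^\gamma_\alpha)^\delta\frac{\partial\gamma^\beta_i}{\partial z^\delta}\Big)\frac{\partial}{\partial p_i^\beta}.
\]
The first family of HdDW equations holds automatically. The third family is exactly the trace condition on $C$. Hence everything reduces to the balance equations
\[
\sum_\alpha\Big(\sum_j a^j_\alpha\frac{\partial\gamma^\alpha_i}{\partial q^j}+\sum_{\delta}\Big(\sum_j\gamma^\delta_j a^j_\alpha+C^\delta_\alpha\Big)\frac{\partial\gamma^\alpha_i}{\partial z^\delta}\Big)=-\Big(\frac{\partial h}{\partial q^i}+\sum_\alpha\gamma^\alpha_i\frac{\partial h}{\partial z^\alpha}\Big)\circ\gamma .
\]

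The key step uses maximal coisotropy, in the form \eqref{eq:Con1}. Grouping the terms carrying $a^j_\alpha$ into $\sum_{\alpha,j}a^j_\alpha A^\alpha_{ji}$ and using $A^\alpha_{ji}=A^\alpha_{ij}$, the left-hand side becomes
\[
\sum_{\alpha,j}\frac{\partial h}{\partial p^\alpha_j}\Big(\frac{\partial\gamma^\alpha_j}{\partial q^i}+\sum_\delta\gamma^\delta_i\frac{\partial\gamma^\alpha_j}{\partial z^\delta}\Big)+\sum_{\alpha,\delta}C^\delta_\alpha\frac{\partial\gamma^\alpha_i}{\partial z^\delta}.
\]
Moving everything to one side, the pieces recombine as follows:
\[
\frac{\partial h}{\partial q^i}+\sum_{\alpha,j}\frac{\partial h}{\partial p^\alpha_j}\frac{\partial\gamma^\alpha_j}{\partial q^i}=\big(\dd_Q(h\circ\gamma)\big)_i,
\qquad
\frac{\partial h}{\partial z^\delta}+\sum_{\alpha,j}\frac{\partial h}{\partial p^\alpha_j}\frac{\partial\gamma^\alpha_j}{\partial z^\delta}=\Gamma_\delta ,
\]
so the coefficient of $\gamma^\delta_i$ is $\Gamma_\delta$. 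The term $\sum C^\delta_\alpha\,\partial\gamma^\alpha_i/\partial z^\delta$ is the $\dd q^i$-coefficient of $i_{\partial/\partial z^\delta}\dd(\gamma^*\theta^\alpha)$. The balance equations are therefore exactly the $\dd q^i$-components of \eqref{eq:kContactHJGauge}, where $\gamma^*\theta^\beta$ and the contractions are read through their $Q$-parts at fixed $z$, consistently with $\dd_Q$. Each implication in this chain is reversible, which gives (a)$\Leftrightarrow$(b).

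I expect the main obstacle to be the first step: showing precisely that the freedom in choosing representatives in $[\mathbf X_h]$ and $[\mathbf X_h^\gamma]$ is captured by a single trace-constrained matrix $C$, no more and no less. This requires care with the fact that $\ker\chi$ projects onto trace-free $z$-components, and that only the traces of the $p$-components enter the HdDW equations. A secondary point is to fix the convention that \eqref{eq:kContactHJGauge} is an identity of $Q$-forms at fixed $z$, so that the $\dd z$-parts of $\gamma^*\theta^\beta$ and of the contractions are discarded.
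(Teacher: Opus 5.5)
Your proposal is correct and follows essentially the same route as the paper. The paper uses the same parametrisation of the projected class by a trace-constrained matrix $C$; it simply asserts this from the gauge freedom of $\ker\chi$, whereas you justify it via Proposition~\ref{prop:kernel-flat-k-contact-Darboux}. It then computes the $p$-components of the lift in the same way and uses the maximal coisotropy condition \eqref{eq:Con1} to turn the balance equations into \eqref{eq:kContactHJGauge}. Your ``secondary point'' about discarding $\dd z$-parts is vacuous: $\gamma^*\theta^\alpha=\sum_i\gamma^\alpha_i\,\dd q^i$ and $i_{\partial/\partial z^\beta}\dd(\gamma^*\theta^\alpha)=\sum_i(\partial\gamma^\alpha_i/\partial z^\beta)\,\dd q^i$ have no $\dd z$-components to begin with.
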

\begin{proof}
    Take adapted Darboux coordinates \((q^i,p_i^\alpha,z^\alpha)\) on \(\mathcal{J}_{Q,k}\) and write \(\gamma(q,z)=(q,\gamma_i^\alpha(q,z),z)\). Set \(U^i_\alpha:=\big(\frac{\partial h}{\partial p_i^\alpha}\big)\!\circ\gamma\) and \(\Gamma_\beta=\left(\frac{\partial h}{\partial z^\beta}\right)\!\circ\gamma+\sum_{\alpha=1}^k\sum_{i=1}^n U^i_\alpha\,\frac{\partial \gamma_i^\alpha}{\partial z^\beta}\) for $\beta=1,\dots,k$. By the projected gauge freedom induced by \(\ker\chi\), every representative \(\overline{\mathbf X}^{\gamma}\in[\mathbf X_h^\gamma]\) can be written locally as
    \[
        \overline X_\alpha^\gamma=\sum_{i=1}^n U^i_\alpha\,\frac{\partial}{\partial q^i}+\sum_{\beta=1}^k\left(\sum_{j=1}^n \gamma_j^\beta U^j_\alpha+C_\alpha^\beta\right)\frac{\partial}{\partial z^\beta}\,,\qquad \alpha=1,\dots,k\,,
    \]
    for some matrix \(C=(C_\alpha^\beta)\) satisfying \(\sum_{\alpha=1}^k C_\alpha^\alpha=-(h\circ\gamma)\). The trace condition is precisely the projected form of the identity \(\sum_{\alpha=1}^k\eta^\alpha({X}_\alpha)=-h\). Assume first (a). Now
    \[
        \T\gamma(\overline X_\alpha^\gamma)=\sum_{i=1}^n U^i_\alpha\,\T\gamma\!\left(\frac{\partial}{\partial q^i}\right)+\sum_{\beta=1}^k\left(\sum_{j=1}^n \gamma_j^\beta U^j_\alpha+C_\alpha^\beta\right)\T\gamma\!\left(\frac{\partial}{\partial z^\beta}\right),
    \]
    so the \(p_j^\alpha\)-components satisfy
    \begin{equation}\label{eq:First}
        \sum_{\alpha=1}^k\bigl(\T\gamma(\overline X_\alpha^\gamma)\bigr)_{p_j^\alpha}=\sum_{\alpha=1}^k\sum_{i=1}^n U^i_\alpha\,\frac{\partial \gamma_j^\alpha}{\partial q^i}+\sum_{\alpha,\beta=1}^k\left(\sum_{i=1}^n\gamma_i^\beta U^i_\alpha+C_\alpha^\beta\right)\frac{\partial \gamma_j^\alpha}{\partial z^\beta}\,. 
    \end{equation}
    Since \(\mathbf X\circ\gamma\) satisfies the same $z$-dependent HdDW equations as \((\bigoplus^k\T\gamma)(\overline{\mathbf X}^{\gamma})\), one has
    \begin{equation}\label{eq:Second}
        \sum_{\alpha=1}^k\bigl(\T\gamma(\overline X_\alpha^\gamma)\bigr)_{p_j^\alpha}=-\frac{\partial h}{\partial q^j}\!\circ\gamma-\sum_{\alpha=1}^k \gamma_j^\alpha\,\frac{\partial h}{\partial z^\alpha}\!\circ\gamma\,,\qquad j=1,\ldots,n\,.
    \end{equation}
    Using the coisotropy condition \eqref{eq:Con1}, namely
    \[
        \frac{\partial \gamma_{j}^{\alpha}}{\partial q^{i}}+\sum_{\beta=1}^k\gamma_{i}^{\beta}\frac{\partial \gamma_{j}^{\alpha}}{\partial z^{\beta}}=\frac{\partial \gamma_{i}^{\alpha}}{\partial q^{j}}+\sum_{\beta=1}^k\gamma_{j}^{\beta}\frac{\partial \gamma_{i}^{\alpha}}{\partial z^{\beta}}\,,\qquad i,j=1,\ldots,n\,,\qquad \alpha=1,\ldots,k\,,
    \]
    we can rewrite the first two sums in \eqref{eq:First} as
    \[
        \sum_{\alpha=1}^k\sum_{i=1}^n U^i_\alpha\,\frac{\partial \gamma_i^\alpha}{\partial q^j}+\sum_{\beta=1}^k\gamma_j^\beta\sum_{\alpha=1}^k\sum_{i=1}^n U^i_\alpha\,\frac{\partial \gamma_i^\alpha}{\partial z^\beta}\,,\qquad j=1,\ldots,n\,.
    \]
    Therefore,
    \[
        \frac{\partial(h\circ\gamma)}{\partial q^j}+\sum_{\beta=1}^k\Gamma_\beta\,\gamma_j^\beta+\sum_{\alpha,\beta=1}^k C_\alpha^\beta\,\frac{\partial \gamma_j^\alpha}{\partial z^\beta}=0\,,\qquad j=1,\dots,n\,.
    \]
    On the other hand, \(\gamma^*\theta^\alpha=\sum_{i=1}^n\gamma_i^\alpha\,\dd q^i\), so \(i_{\partial/\partial z^\beta}\,\dd(\gamma^*\theta^\alpha)=\sum_{i=1}^n\frac{\partial \gamma_i^\alpha}{\partial z^\beta}\,\dd q^i\). Thus, the previous system is exactly \eqref{eq:kContactHJGauge}, and (b) follows. 
    
    Conversely, assume (b) and choose a matrix \(C=(C_\alpha^\beta)\) satisfying the trace condition and \eqref{eq:kContactHJGauge}. Define a representative of the projected class by
    \[
        \overline X_\alpha^\gamma=\sum_{i=1}^n U^i_\alpha\,\frac{\partial}{\partial q^i}+\sum_{\beta=1}^k\left(\sum_{j=1}^n \gamma_j^\beta U^j_\alpha+C_\alpha^\beta\right)\frac{\partial}{\partial z^\beta}\,,\qquad \alpha=1,\dots,k\,.
    \]
    Its \(q\)-components are, by construction, the canonical ones, namely \(U^i_\alpha=(\partial h/\partial p_i^\alpha)\circ\gamma\), which solve its corresponding equations. Moreover,
    \[
        \sum_{\alpha=1}^k\eta^\alpha\!\left(\T\gamma(\overline X_\alpha^\gamma)\right)=\sum_{\alpha=1}^k C_\alpha^\alpha=-(h\circ\gamma)\,,
    \]
    so the \(z\)-equation is also satisfied. Finally, reversing the computation above and using \eqref{eq:kContactHJGauge} together with \eqref{eq:Con1}, we obtain
    \[
        \sum_{\alpha=1}^k\bigl(\T\gamma(\overline X_\alpha^\gamma)\bigr)_{p_j^\alpha}=-\frac{\partial h}{\partial q^j}\!\circ\gamma-\sum_{\alpha=1}^k \gamma_j^\alpha\,\frac{\partial h}{\partial z^\alpha}\!\circ\gamma\,,\qquad j=1,\dots,n\,.
    \]
    Hence, \((\bigoplus^k\T\gamma)(\overline{\mathbf X}^{\gamma})\) and any representative \(\mathbf X\in[\mathbf X_h]\) determine the same \(z\)-dependent HdDW equations for \(h\). This proves (a).
\end{proof}

The previous condition admits a geometric reformulation that makes the role of the projected gauge freedom more transparent. It is worth stressing that, in the \(z\)-dependent Hamilton--Jacobi theorem, the isotropy assumption on the slices \(\gamma_z\colon q\in Q\mapsto \gamma(q,z)\in \bigoplus^k\T^*Q\times \mathbb{R}^k\) relative to $\dd\bm\eta_{Q,k}$ is not needed for the proof of the Hamilton--Jacobi equation itself. The argument only uses the maximal coisotropy of \(N=\Ima\gamma\) in order to rewrite the momentum equations. The isotropy condition for $\gamma$ relative to $\dd \bm\eta_{Q,k}$ is only relevant if one wishes to introduce local generating functions, since it yields the symmetry conditions \(\partial\gamma_i^\alpha/\partial q^j=\partial\gamma_j^\alpha/\partial q^i\), and hence the local existence of functions \(W^\alpha(q,z)\) such that \(\gamma_i^\alpha=\partial W^\alpha/\partial q^i\). It is also worth noting that no integrability assumption on the associated \(k\)-vector fields is used in the proof. Integrability only becomes relevant when the theorem is interpreted in terms of integral sections or actual solutions of the corresponding Hamilton--De Donder--Weyl equations. In this case, one wants to obtain solutions of the lifted system and it is enough to ensure the integrability of $\bm {\overline{X}}^\gamma$. 

\begin{remark}\label{rem:geometric-HJ-condition}
    Let \(\{e_1,\ldots,e_k\}\) be the canonical basis of \(\mathbb R^k\) and let \(\{e^1,\ldots,e^k\}\) be its dual basis. Define the \(\mathbb R^k\)-valued \(1\)-form and function
    \[
        \gamma^*\bm\theta\coloneqq\sum_{\alpha=1}^k \gamma^*\theta^\alpha\otimes \,e_\alpha\in \Omega^1(Q\times\mathbb R^k,\mathbb R^k)\,,\qquad \Gamma\coloneqq\sum_{\alpha=1}^k \Gamma_\alpha\,e^\alpha\in \Cinfty(Q\times\mathbb R^k,(\mathbb R^k)^*)\,,
    \]
    while the \({\rm End}(\mathbb{R}^k)\)-valued \(1\)-form \(\mathcal K_\gamma\in \Omega^1(Q\times\mathbb R^k,{\rm End}(\mathbb R^k))\) is given by
    \[
        \bigl(\mathcal K_\gamma(Y)\bigr)(v):=\bigl(i_{v^V}\dd(\gamma^*\bm\theta)\bigr)(Y)\,,\qquad Y\in\mathfrak X(Q\times\mathbb R^k)\,,\qquad v\in\mathbb R^k\,,
    \]
    where \(v^V\) denotes the constant vertical vector field on \(Q\times\mathbb R^k\) associated with \(v\in\mathbb R^k\). In the canonical basis, one has \((\mathcal K_\gamma)^\alpha{}_\beta=i_{\partial/\partial z^\beta}\,\dd(\gamma^*\theta^\alpha)\). Set
    \[
        \Xi_\gamma\coloneqq\dd_Q(h\circ\gamma)+\langle \Gamma,\gamma^*\bm\theta\rangle\in \Omega^1(Q\times\mathbb R^k)\,.
    \]
    Then, condition (b) in Theorem~\ref{thm:zdependentHJ-gauge} is equivalent to saying that \(\Xi_\gamma\) belongs to the affine family
    \[
        \mathcal T_\gamma(h):=\big\{-\operatorname{tr}(\mathcal A\circ \mathcal K_\gamma)\ \mid\ \mathcal A\in \Cinfty(Q\times\mathbb R^k,{\rm End}(\mathbb R^k))\,,\ \operatorname{tr}(\mathcal A)=-(h\circ\gamma)\big\}\,.
    \]
    Thus, the defect of the naive equation \(\dd_Q(h\circ\gamma)+\langle \Gamma,\gamma^*\bm\theta\rangle=0\) is measured precisely by the affine family generated by \(\mathcal K_\gamma\) with prescribed trace. This is exactly the geometric manifestation of the projected gauge freedom induced by \(\ker\chi\). For \(k=1\), one has \({\rm End}(\mathbb R)\cong \mathbb R\), so the trace condition determines \(\mathcal A\) uniquely, namely \(\mathcal A=-(h\circ\gamma)\). Hence, \(\mathcal T_\gamma(h)\) consists of a single \(1\)-form and the above condition reduces to the \(z\)-dependent contact Hamilton--Jacobi equation presented in \cite{LLM_21a}. Therefore, \eqref{eq:kContactHJGauge} reduces to
    \[
        \dd_Q(h\circ\gamma)+\Gamma\,\gamma^*\theta-(h\circ\gamma)\,i_{\partial/\partial z}\,\dd(\gamma^*\theta)=0\,,
    \]
    which is exactly the \(z\)-dependent contact Hamilton--Jacobi equation in \cite{LLM_21a}. Moreover, by Lemma~\ref{lem:k1-maximal-coisotropic}, in the case \(k=1\) the condition ``maximally coisotropic'' is equivalent to the usual coisotropy condition for the image of a section \(Q\times\mathbb R\to \T^*Q\times\mathbb R\). As said before, the isotropy condition for \(\gamma_z\) is not relevant for the proof of the Hamilton--Jacobi equation itself as used in \cite{LLM_21a}, but it is relevant if one wishes to introduce local generating functions, as performed in the latter cited work.    
\end{remark}

\subsection{The evolution case}
\label{sec:HJzdepEvolution}

The next result is the $z$-dependent Hamilton--Jacobi theorem for evolution $\bm\eta_{Q,k}$-Hamiltonian systems. In this case, the projected gauge term has vanishing trace, so the Hamilton--Jacobi equation is the natural evolution counterpart of the general $z$-dependent one.
\begin{theorem}\label{thm:zdependentHJ-evolution}
Consider the $k$-contact Hamiltonian system 
$(\mathcal J_{Q,k}=\bigoplus^k \T^*Q\times\mathbb R^k,\bm\eta_{Q,k},h)$.
Let $[\mathbf E_h]_{\mathrm{ev}}$ denote the class of  $\bm \eta_{Q,k}$-evolution
  $k$-vector fields associated with $h$, modulo gauge terms
taking values in $\ker\chi$ and preserving the evolution condition. Let
$\gamma\colon Q\times\mathbb R^k\to \mathcal J_{Q,k}$ be a section of
$\operatorname{pr}\colon \mathcal J_{Q,k}\to Q\times\mathbb R^k$, and put
$N=\operatorname{Im}\gamma$. Assume that $N$ is maximally coisotropic. For each
$\mathbf E\in [\mathbf E_h]_{\mathrm{ev}}$, define
$\mathbf E^\gamma\coloneqq(\bigoplus^k \T\operatorname{pr})(\mathbf E\circ\gamma)$, and denote
by $[(\mathbf E_h)^\gamma]_{\mathrm{ev}}$ the corresponding projected class. Then, the
following statements are equivalent:

\begin{enumerate}[(a)]
\item there exist representatives $\mathbf E\in [\mathbf E_h]_{\mathrm{ev}}$ and
$\mathbf E^\gamma\in [(\mathbf E_h)^\gamma]_{\mathrm{ev}}$ such that
$(\bigoplus^k \T\gamma)(\mathbf E^\gamma)$ and $\mathbf E\circ\gamma$ determine the same
$z$-dependent $\bm\eta_{Q,k}$-evolution Hamilton--De Donder--Weyl equations for $h$;

\item there exists a matrix of  functions
$C=(C^\beta_{\alpha})\in \Cinfty(Q\times\mathbb R^k,\operatorname{Mat}_{k\times k})$
satisfying $\sum_{\alpha=1}^k C^\alpha_{\alpha}=0$ and such that
\begin{equation}\label{eq:zdependentHJ-evolution}
{ \dd_Q(h\circ\gamma)}+\sum_{\beta=1}^k \Gamma_\beta\,\gamma^*\theta^\beta
+\sum_{\alpha,\beta=1}^k C^\beta_{\alpha}\,
i_{\partial/\partial z^\beta}d(\gamma^*\theta^\alpha)=0\,,
\end{equation}
where 
$\Gamma_\beta\coloneqq \restr{\d h}{\gamma}\bigl(\T\gamma(\partial/\partial z^\beta)\bigr)$ for
$\beta=1,\dots,k$.
\end{enumerate}
For $k=1$, the trace condition gives $C^1_1=0$, and
\eqref{eq:zdependentHJ-evolution} reduces to
$\dd_Q(h\circ\gamma)+\Gamma\,\gamma^*\theta=0$.
\end{theorem}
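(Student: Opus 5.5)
The plan is to adapt the proof of Theorem~\ref{thm:zdependentHJ-gauge} verbatim, changing only the $z$-equation. Take adapted Darboux coordinates $(q^i,p_i^\alpha,z^\alpha)$ and write $\gamma(q,z)=(q,\gamma_i^\alpha(q,z),z)$. Set $U^i_\alpha=(\partial h/\partial p_i^\alpha)\circ\gamma$ and $\Gamma_\beta=(\partial h/\partial z^\beta)\circ\gamma+\sum_{\alpha,i}U^i_\alpha\,\partial\gamma_i^\alpha/\partial z^\beta$.

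The first step is to describe the projected class $[(\mathbf E_h)^\gamma]_{\rm ev}$. By \eqref{eq:k-contact-HdDW-fields-Darboux-coordinates}, every evolution representative has $q$-components $U^i_\alpha$. By Proposition~\ref{thm:difference_Hamiltonian_kernel_flat} and Proposition~\ref{prop:kernel-flat-k-contact-Darboux}, the remaining freedom in the $z$-components is an arbitrary matrix, subject only to one trace condition. That condition now comes from $\sum_\alpha\eta^\alpha(E_\alpha)=0$ rather than $-h$. Hence every representative can be written as
\[
\overline E^\gamma_\alpha=\sum_{i=1}^n U^i_\alpha\frac{\partial}{\partial q^i}+\sum_{\beta=1}^k\Bigl(\sum_{j=1}^n\gamma_j^\beta U^j_\alpha+C_\alpha^\beta\Bigr)\frac{\partial}{\partial z^\beta},\qquad \sum_{\alpha=1}^k C_\alpha^\alpha=0.
\]
Here $\sum_\alpha\eta^\alpha(\T\gamma(\overline E^\gamma_\alpha))=\sum_\alpha C^\alpha_\alpha$. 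I would check that this parametrisation is exhaustive, i.e.\ that the gauge terms preserving the evolution condition project to exactly the traceless shifts.

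The second step proves (a)$\Rightarrow$(b). Compute the $p_j^\alpha$-components of $\T\gamma(\overline E^\gamma_\alpha)$ as in \eqref{eq:First}. Equate their trace over $\alpha$ to the balance equation $-\partial h/\partial q^j\circ\gamma-\sum_\alpha\gamma_j^\alpha\,\partial h/\partial z^\alpha\circ\gamma$. This balance equation is identical for evolution and Hamiltonian $k$-vector fields, by \eqref{eq:k-contact-HdDW-fields-Darboux-coordinates}.

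Next, use the maximal coisotropy condition \eqref{eq:Con1} to swap the indices in the first two sums. These sums then combine with $\partial h/\partial q^j\circ\gamma$ and $\sum_{\alpha,i}(\partial h/\partial p_i^\alpha\circ\gamma)\,\partial\gamma_i^\alpha/\partial q^j$ into $\partial(h\circ\gamma)/\partial q^j$ plus $\sum_\beta\Gamma_\beta\gamma_j^\beta$. What remains is the term $\sum_{\alpha,\beta}C_\alpha^\beta\,\partial\gamma_j^\alpha/\partial z^\beta$. Using $\gamma^*\theta^\alpha=\sum_i\gamma_i^\alpha\dd q^i$ and $i_{\partial/\partial z^\beta}\dd(\gamma^*\theta^\alpha)=\sum_i(\partial\gamma_i^\alpha/\partial z^\beta)\dd q^i$, this gives \eqref{eq:zdependentHJ-evolution} with a traceless $C$.

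The third step proves (b)$\Rightarrow$(a). Given a traceless $C$, define $\overline E^\gamma$ by the formula above. Its $q$-components solve the first evolution equations by construction. Its $z$-equation holds because $\sum_\alpha\eta^\alpha(\T\gamma(\overline E^\gamma_\alpha))=\operatorname{tr}C=0$, which is the correct right-hand side for the evolution equations. Reversing the computation of the second step via \eqref{eq:Con1} and \eqref{eq:zdependentHJ-evolution} yields the balance equations.

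The remaining point is to exhibit an evolution representative $\mathbf E$ with matching equations. By Proposition~\ref{prop:k-contact-evolution-HdDW-have-solutions}, any element of $[\mathbf E_h]_{\rm ev}$ serves, since all of them satisfy the same defining equations. For $k=1$ the trace condition forces $C^1_1=0$, and the equation reduces to $\dd_Q(h\circ\gamma)+\Gamma\,\gamma^*\theta=0$.

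The main obstacle is the first step. One must justify that the admissible projected gauge in the $z$-components is exactly an arbitrary traceless matrix, with no extra constraint. Since $\ker\chi$ only imposes $\sum_\alpha Z^\alpha_\alpha=0$ on the $z$-components, and these $z$-components project injectively under $\T\mathrm{pr}$, I expect this to follow directly from Proposition~\ref{prop:kernel-flat-k-contact-Darboux}.
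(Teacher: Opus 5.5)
Your proposal is correct and follows the paper's proof essentially step by step. The paper likewise parametrises the projected evolution class by a traceless matrix $C$, derives \eqref{eq:zdependentHJ-evolution} from the traced momentum equations via the maximal coisotropy condition \eqref{eq:Con1}, and reverses the computation for the converse. The exhaustiveness point you flag is exactly what the paper uses implicitly when it asserts that the chosen $\overline{\mathbf E}^\gamma$ is the projection of some $\mathbf E\in[\mathbf E_h]_{\mathrm{ev}}$: every traceless $C$ is realised by adding a $z$-directional gauge term from $\ker\chi$ to a fixed evolution representative.
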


\begin{proof}
Take adapted Darboux coordinates $(q^i,p^\alpha_i,z^\alpha)$ on $\mathcal J_{Q,k}$. Hence, one can write $\gamma(q,z)=(q^i,\gamma^\alpha_i(q,z),z^\alpha)$. Set
$U_\alpha^i:=(\partial h/\partial p^\alpha_i)\circ\gamma$ and
$\Gamma_\beta:=\dd h|_\gamma\bigl(\T\gamma(\partial/\partial z^\beta)\bigr)
=(\partial h/\partial z^\beta)\circ\gamma
+\sum_{\alpha=1}^k\sum_{i=1}^n U_\alpha^i\,\partial\gamma^\alpha_i/\partial z^\beta$. By the projected gauge freedom induced by $\ker\chi$, every representative
$\mathbf E^\gamma=(E^\gamma_1,\dots,E^\gamma_k)$ of $[(\mathbf E_h)^\gamma]_{\mathrm{ev}}$
can be written locally as
\[
E^\gamma_\alpha
= \sum_{i=1}^n U_\alpha^i \frac{\partial}{\partial q^i}
+ \sum_{\beta=1}^k\Bigl(\sum_{j=1}^n \gamma^\beta_j U_\alpha^j + C^\beta_{\alpha}\Bigr)
\frac{\partial}{\partial z^\beta}\,,
\qquad \alpha=1,\dots,k\,,
\]
for some matrix $C=(C^\beta_{\alpha})$. Since we are in the evolution case, the condition
$\sum_{\alpha=1}^k \eta^\alpha_{Q,k}(E_\alpha)=0$ implies that the projected gauge term is
traceless, i.e. $\sum_{\alpha=1}^k C^\alpha_{\alpha}=0$.

Assume now (a). Since $(\bigoplus^k \T\gamma)(\mathbf E^\gamma)$ and $\mathbf E\circ\gamma$
determine the same $z$-dependent $\bm\eta_{Q,k}$-evolution HdDW equations for the $k$-vector fields of $h$, their diagonal momentum
equations coincide. Arguing exactly as in the proof of Theorem~\ref{thm:zdependentHJ-gauge}, but using only the maximal
coisotropy condition
\[
\frac{\partial\gamma^\alpha_j}{\partial q^i}
+\sum_{\beta=1}^k \gamma^\beta_i\frac{\partial\gamma^\alpha_j}{\partial z^\beta}
=
\frac{\partial\gamma^\alpha_i}{\partial q^j}
+\sum_{\beta=1}^k \gamma^\beta_j\frac{\partial\gamma^\alpha_i}{\partial z^\beta}\,,
\qquad \alpha=1,\dots,k\,,\qquad i,j=1,\dots,n\,,
\]
one obtains
\[
\frac{\partial(h\circ\gamma)}{\partial q^j}
+\sum_{\beta=1}^k \Gamma_\beta \gamma^\beta_j
+\sum_{\alpha,\beta=1}^k C^\beta_{\alpha}\,
\frac{\partial\gamma^\alpha_j}{\partial z^\beta}=0\,,
\qquad j=1,\dots,n\,.
\]

Using
$\gamma^*\theta^\beta=\sum_{j=1}^n \gamma^\beta_j\,\d q^j$, the previous system is exactly
\eqref{eq:zdependentHJ-evolution}. This proves (b).

Conversely, assume (b). Choose a representative $\mathbf E^\gamma$ of the projected class
with the local form above. By definition of the projected class, there exists
$\mathbf E\in [\mathbf E_h]_{\mathrm{ev}}$ such that
$\mathbf E^\gamma=(\bigoplus^k \T\operatorname{pr})(\mathbf E\circ\gamma)$. By construction,
the $q$-components of $(\bigoplus^k \T\gamma)(\mathbf E^\gamma)$ are the canonical ones
$U_\alpha^i=(\partial h/\partial p^\alpha_i)\circ\gamma$, so the $q$-equations hold.
Moreover,
\[
    \sum_{\alpha=1}^k \eta^\alpha_{Q,k}\bigl(\T\gamma(E^\gamma_\alpha)\bigr)
=\sum_{\alpha=1}^k C^\alpha_{\alpha}=0\,,
\]
so the evolution $z$-equation is also satisfied.

Finally, reversing the previous computation and using again the maximal coisotropy condition,
equation \eqref{eq:zdependentHJ-evolution} yields
\[
\sum_{\alpha=1}^k \bigl(\T\gamma(E^\gamma_\alpha)\bigr)_{p^\alpha_j}
=
-\frac{\partial h}{\partial q^j}\circ\gamma
-\sum_{\alpha=1}^k \gamma^\alpha_j\,\frac{\partial h}{\partial z^\alpha}\circ\gamma\,,
\qquad j=1,\dots,n\,.
\]
Hence, $(\bigoplus^k \T\gamma)(\mathbf E^\gamma)$ and $\mathbf E\circ\gamma$ determine the same
$z$-dependent $\bm\eta_{Q,k}$-evolution HdDW equations for $h$. This proves (a).
\end{proof}



\section{Integrable contact Hamiltonian systems and perspectives for the \texorpdfstring{$k$}{}-contact case}\label{sec:Integrable}

In the symplectic setting, the image of a solution of the Hamilton--Jacobi equation is a Lagrangian submanifold of \(\T^*Q\). In the completely integrable case, complete solutions give rise to Lagrangian foliations invariant under the Hamiltonian flow. In the \(k\)-contact framework, the situation depends on the version of the theory under consideration. In the \(z\)-independent setting, the image of a solution is Legendrian, whereas the \(z\)-dependent theory suggests that the natural geometric object is maximally coisotropic. This motivates the following notions.

\begin{definition}\label{Def:CompleteZindepen}
    A \emph{complete solution of the \(z\)-independent \(k\)-contact classical (resp.\ evolution) Hamilton--Jacobi  problem} for \((\bigoplus^k \T^*Q\times \mathbb R^k,\bm\eta_{Q,k},h)\) is a local bundle diffeomorphism \(\Phi\colon Q\times \mathbb R^{k(n+1)}\to \bigoplus^k \T^*Q\times \mathbb R^k\) such that, for every \(\lambda\in \mathbb R^{k(n+1)}\), the map $\Phi_\lambda\colon Q\to \bigoplus^k \T^*Q\times \mathbb R^k$ given by $q\mapsto \Phi(q,\lambda)$, is a solution of the $z$-independent \(k\)-contact classical (resp. evolution) Hamilton--Jacobi equation.
\end{definition}

In this case, the images of the family \(\{\Phi_\lambda\}\) define locally a foliation by Legendrian submanifolds. The same viewpoint applies in the \(z\)-dependent setting.

\begin{definition}\label{Def:comZdepk}
    A \emph{complete solution of the \(z\)-dependent \(k\)-contact Hamilton--Jacobi (resp.\ evolution) problem} for \((\bigoplus^k \T^*Q\times \mathbb R^k,\bm\eta_{Q,k},h)\) is a local bundle diffeomorphism \(\Phi\colon Q\times \mathbb R^{kn}\times \mathbb R^k\to \bigoplus^k \T^*Q\times \mathbb R^k\) such that, for every \(\lambda\in \mathbb R^{kn}\), the map \(\Phi_\lambda\colon Q\times \mathbb R^k\to \bigoplus^k \T^*Q\times \mathbb R^k\) given by \((q,z)\mapsto \Phi(q,\lambda,z)\), is a solution of the \(z\)-dependent \(k\)-contact Hamilton--Jacobi problem (resp.\ of the \(z\)-dependent \(k\)-contact evolution Hamilton--Jacobi problem). 
\end{definition}

Given such a complete solution, the family \(\mathcal F=\{\mathcal F_\lambda:=\Ima\Phi_\lambda\mid \lambda\in \mathbb R^{kn}\}\subseteq \bigoplus^k \T^*Q\times \mathbb R^k\) defines a foliation by \((n+k)\)-dimensional maximally coisotropic submanifolds. Thus, in the 
\(z\)-dependent case, complete solutions are naturally related to maximally coisotropic foliations. For comparison, let \((\T^*Q\times \mathbb R,\eta,h)\) be a contact Hamiltonian system. Recall that the action-dependent Hamilton--Jacobi equation is \[\dd_Q(h\circ\gamma)+\frac{\partial(h\circ\gamma)}{\partial z}\,\gamma=(h\circ\gamma)\,\Lie_{\partial/\partial z}\gamma.\] A \emph{complete solution of the action-dependent Hamilton--Jacobi problem} for \((\T^*Q\times \mathbb R,\eta,h)\) is a local diffeomorphism \(\Phi\colon Q\times \mathbb R^n\times \mathbb R\to \T^*Q\times \mathbb R\) such that, for each \(\lambda\in \mathbb R^n\), the map \(\Phi_\lambda\colon Q\times \mathbb R\to \T^*Q\times \mathbb R\) given by \((q,z)\mapsto \Phi(q,\lambda,z)\), is a solution of the action-dependent Hamilton--Jacobi equation. Given such a complete solution, the family \(\mathcal F=\{\mathcal F_\lambda:=\Ima\Phi_\lambda\mid \lambda\in \mathbb R^n\}\subseteq \T^*Q\times \mathbb R\) defines a foliation by \((n+1)\)-dimensional coisotropic submanifolds. Therefore, in the contact case, complete solutions are naturally related to coisotropic foliations rather than to Legendrian ones. Moreover, this is a particular case of Definition \ref{Def:comZdepk} for $k=1$. 
This also suggests that the appropriate analogue of complete integrability should be formulated in terms of invariant coisotropic foliations. The standard notion for the contact case is given next.

\begin{definition}
    Let \((M,\eta,h)\) be a contact Hamiltonian system. We say that \((M,\eta,h)\) is an \emph{integrable contact Hamiltonian system in $z$-dependent sense} if there exists a foliation \(\mathcal F\) of \(M\) by \((n+1)\)-dimensional coisotropic submanifolds that is invariant under the flow of the Hamiltonian vector field \(X_h\). 
\end{definition}

This motivates the following definition in the 
\(z\)-dependent 
\(k\)-contact setting.
\begin{definition}
Let \((\mathcal J_{Q,k},\bm\eta_{Q,k},h)\) be a \(k\)-contact Hamiltonian
system. We say that it is an \emph{integrable 
\(
k\)-contact Hamiltonian system in the 
\(z\)-dependent sense} if there exists a foliation \(\mathcal F\) of \(\mathcal J_{Q,k}\) by
\((n+k)\)-dimensional maximally coisotropic submanifolds and an integrable
\(\bm\eta_{Q,k}\)-Hamiltonian \(k\)-vector field \(\mathbf X_h\) associated
with \(h\) such that \(D^{\mathbf X_h}\subset T\mathcal F\).
\end{definition}

\section{Applications} \label{sec:Applications}

This section illustrates the previous HdDW formalisms with several representative examples. In general, we consider a scalar field \(u=u(t,x)\), with \((t,x)\in\mathbb{R}^2\), and configuration space \(Q=\mathbb{R}\) with coordinate \(u\). The two-contact phase space is
\(
\mathcal{J}_{\mathbb{R},2}=\bigoplus\nolimits^2 \T^*Q\times\mathbb{R}^2\,,
\)
with canonical coordinates \((u,p^t,p^x,z^t,z^x)\). The canonical two-contact form is
\[
{\bm \eta}_{\mathbb{R},2}
=
(\dd z^t-p^t\,\dd u)\otimes e_1+(\dd z^x-p^x\,\dd u)\otimes e_2\,,
\]
with Reeb vector fields \(R_t=\partial_{z^t}\) and \(R_x=\partial_{z^x}\).

Let \({\bm Z}\) be a two-vector field on \(\mathcal{J}_{\mathbb{R},2}\) and define the projected two-vector field on \(Q\) by
\(
{\bm Z}^\gamma := (\bigoplus^2 \T\pi_Q)\circ ({\bm Z}\circ\gamma).
\)
In coordinates, the projected dynamics on \(Q\) has the form
\[
(Z^\gamma)_\alpha(u)=\left(\frac{\partial h}{\partial p^\alpha}\circ\gamma\right)(u)\,\frac{\partial}{\partial u}\,,
\qquad \alpha=t,x\,,\qquad \forall u\in Q\,.
\]

Meanwhile, the projection of ${\bm Z}$ relative to \(\pi_{Q\times\mathbb{R}^2}\), namely
\(
{\bm Z}^\gamma_{Q\times \mathbb{R}^2} :=  (\bigoplus^2 \T\pi_{Q\times \mathbb{R}^2})\circ ({\bm Z}\circ\gamma),
\)
has the form
\[
{\bm Z}^\gamma_{Q\times\mathbb{R}^2}
=
\sum_{\beta=1}^k\left[\left(\frac{\partial h}{\partial p^\beta}\circ\gamma\right)(u)\,\frac{\partial}{\partial u}
+\sum_{\alpha=t,x}(Z^\gamma)^\alpha_\beta\circ\gamma\,\frac{\partial}{\partial z^\alpha}\right]\otimes e_\beta\,.
\]

In particular, we analyse the following examples:
\begin{itemize}
    \item the damped telegrapher/Klein--Gordon equation,
    \item a dissipative Hunter--Saxton equation,
    \item a simple dissipative first-order field model,
    \item an EIT-type thermodynamic model.
\end{itemize}

It is worth noting that all sections $\gamma:\mathbb{R}\times \mathbb{R}^2\rightarrow \mathcal{J}_{\mathbb{R},2}$ will be maximally coisotropic, since  condition \eqref{eq:Con1} is automatically satisfied when $\dim Q=1$. 

\subsection{The damped telegrapher/Klein--Gordon equation}

We consider the damped telegrapher/Klein--Gordon equation
\begin{equation}\label{eq:telKG_PDE}
u_{tt}-\kappa\,u_{xx}+\lambda\,u_t+\epsilon\,u=0\,,
\qquad \kappa>0\,,\quad \lambda\geq 0\,,\quad \epsilon\in\mathbb R\,.
\end{equation}
This equation includes, as a particular case, certain damped Klein--Gordon equations  \cite{Malhi_Stanislavova_2018} and the classical telegrapher equation describing the electric potential along a transmission line \cite{RWD_94}. More precisely, if \(u(x,t)\) denotes the voltage at position \(x\) and time \(t\), then an infinitesimal segment of line of length \(\delta x\) is modelled by a series resistance \(R\,\delta x\), a series inductance \(L\,\delta x\), a shunt conductance \(G\,\delta x\), and a shunt capacitance \(C\,\delta x\), where \(R,L,G,C\) are the physical parameters per unit length. 

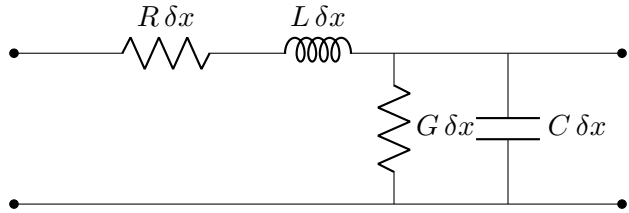
\begin{figure}[h!]
\centering
\begin{circuitikz}[scale=1]
    \draw (0,2) node[circ]{} to[short] (1,2);
    \draw (0,0) node[circ]{} to[short] (6.5,0);
    \draw (1,2) to[R,l=$R\,\delta x$] (3,2)
                 to[L,l=$L\,\delta x$] (5,2)
                 to[short] (6.5,2);
    \draw (5,2) to[R,l=$G\,\delta x$] (5,0);
    \draw (6.5,2) to[C,l=$C\,\delta x$] (6.5,0);
    \draw (6.5,2) to[short] (8,2) node[circ]{};
    \draw (6.5,0) to[short] (8,0) node[circ]{};
\end{circuitikz}
\caption{Infinitesimal element of a transmission line.}
\end{figure}

The corresponding second-order telegrapher equation is
\begin{equation}\label{eq:physical_telegrapher}
LC\,u_{tt}+(RC+LG)\,u_t+RG\,u-u_{xx}=0 \,.
\end{equation}
Let us assume that $L,C>0$. 
Dividing by \(LC\), one obtains
\begin{equation}\label{eq:physical_telegrapher_normalized}
u_{tt}-\frac{1}{LC}u_{xx}+\frac{RC+LG}{LC}u_t+\frac{RG}{LC}u=0 \,.
\end{equation}
Hence, our mathematical model \eqref{eq:telKG_PDE} matches the physical telegrapher equation through the identifications
\begin{equation}\label{eq:parameter_identification}
\kappa=\frac{1}{LC}\,,\qquad \lambda=\frac{RC+LG}{LC}=\frac{R}{L}+\frac{G}{C}\,,\qquad \epsilon=\frac{RG}{LC}\,.
\end{equation}
Therefore, \(\kappa\) is determined by the product \(LC\), while \(\lambda\) encodes the  contribution coming from resistance and leakage, and \(\epsilon\) corresponds to the 0th-order term induced by the simultaneous presence of resistance and conductance. In particular, in the lossless case \(R=G=0\), one has \(\lambda=\epsilon=0\), and \eqref{eq:telKG_PDE} reduces to the wave equation \(u_{tt}-\kappa u_{xx}=0\) with propagation speed \(\sqrt{\kappa}=1/\sqrt{LC}\).

Conversely, given \(\kappa,\lambda,\epsilon\), the relations with the physical parameters are
$LC = 1/\kappa$, $RC+LG = \lambda/\kappa$ and $RG = \epsilon/\kappa$.
Thus, the triple \((\kappa,\lambda,\epsilon)\) determines the combinations \(LC\), \(RC+LG\), and \(RG\), although not the four physical parameters \(R,L,G,C\) separately.

Let us study a two-contact approach to the damped telegrapher/Klein--Gordon equation via the Hamiltonian
\begin{equation}\label{eq:telKG_ham}
h_{\mathrm{tel}}(u,p^t,p^x,z^t,z^x)
=
\frac12\left((p^t)^2-\kappa^{-1}(p^x)^2\right)
+\frac{\epsilon}{2}u^2+\lambda z^t\,.
\end{equation}
It is worth noting that \(h_{\mathrm{tel}}\) is affine in the dissipative variables, and in fact its dependence on them reduces  to \(z^t\). This has relevant consequences for the partial differential equations satisfied by the dissipative variables and $u(t,x)$, as discussed in Section \ref{Sec:SpecialTypes}.

Let \(\psi(t,x)=(u(t,x),p^t(t,x),p^x(t,x),z^t(t,x),z^x(t,x))\) be a solution section of the $z$-independent classical HdDW equations associated with $h_{\rm tel}$. Then, 
\begin{equation}\label{eq:FirstEqu}
u_t=\frac{\partial h_{\mathrm{tel}}}{\partial p^t}=p^t\,,
\qquad
u_x=\frac{\partial h_{\mathrm{tel}}}{\partial p^x}=-\kappa^{-1}p^x\,,
\end{equation}
and the momentum balance
\[
\partial_t p^t+\partial_x p^x
=
-\left(\frac{\partial h_{\mathrm{tel}}}{\partial u}
+p^t\frac{\partial h_{\mathrm{tel}}}{\partial z^t}
+p^x\frac{\partial h_{\mathrm{tel}}}{\partial z^x}\right)
=-(\epsilon u+\lambda p^t)\,.
\]
Note that $h_{\rm tel}$ is a regular Hamiltonian and is affine in the dissipative variables. Then,
eliminating \(p^t=u_t\) and \(p^x=-\kappa u_x\), one recovers \eqref{eq:telKG_PDE}, which  is precisely the particular case of \eqref{eq:second_order_system_linear_z} corresponding to $h_{\rm tel}$. Note that \(z^t\) and \(z^x\) are auxiliary variables introduced by our two-contact formalism and play a secondary role in this example. Since the telegrapher equation is recovered solely from the equations for \(u\), \(p^t\), and \(p^x\) and $h_{\rm tel}$ is affine in the dissipative variables, the $z$-independent standard and evolution HdDW equations lead to the same second-order equation for \(u\) associated with the same Hamiltonian \(h_{\rm tel}\). The difference lies in the partial differential equations satisfied by the auxiliary functions \(z^t(t,x)\) and \(z^x(t,x)\).

This model properly reflects a limitation of the models with an affine dependence on the dissipative variables given in Section \ref{Sec:SpecialTypes}: the resulting second-order PDE does not depend explicitly on the independent variables through the dissipative sector. Nevertheless, a model
\begin{equation}\label{eq:telKG_ham2}
    h'_{\mathrm{tel}}(u,p^t,p^x,z^t,z^x)
    =
    \frac12\left((p^t)^2-\kappa^{-1}(p^x)^2\right)
    +\frac{\epsilon}{2}u^2+\frac{1}{2}\lambda (z^t)^2\,.
\end{equation}
which is quadratic on a dissipative variable, generates the PDE
\begin{equation}\label{eq:telKG}
    u_{tt}-\kappa\,u_{xx}+\lambda z^t(t,x)\,u_t+\epsilon\,u=0\,,
    \qquad \kappa>0\,,\quad \lambda\geq 0\,,\quad \epsilon\in\mathbb R\,,
\end{equation}
where $z^t$ satisfies a balance equation. This gives a geometric mechanism for producing damped Klein--Gordon-type equations with an effective damping coefficient \(\lambda z^t(t,x)\), once the balance equation for \(z^t\) has been solved or constrained appropriately \cite{Malhi_Stanislavova_2018}. The specific values of $z^t$ can be modelled by means of the balance equation on the dissipative variables, where it can be determined by means of $z^x$. This idea is quite general and leads to a simple approach to analysing systems of PDEs with explicit dependence on the independent variables. Nevertheless, we will hereafter stick to the analysis of $h_{\rm tel}$.

\subsubsection{\texorpdfstring{$z$}{z}-independent holonomic sections and projected integrable field}

A $z$-independent holonomic section $\gamma\colon \mathbb{R}\rightarrow \mathcal{J}_{\mathbb{R},2}$ is
\[
\gamma(u)=(u,\gamma_u^t(u),\gamma_u^x(u),\gamma^t(u),\gamma^x(u))\,,
\qquad
\gamma_u^t(u)=\frac{\d\gamma^t}{\d u}(u)\,,\qquad \gamma_u^x(u)=\frac{\d\gamma^x}{\d u}(u)\,.
\]
In both the   ${\bm \eta}_{\mathbb{R},2}$-Hamiltonian and ${\bm \eta}_{\mathbb{R},2}$-evolution descriptions of the damped telegrapher equation, one starts by describing the $z$-independent HdDW equations for a $k$-contact vector field ${\bm Z}$. Due to \eqref{eq:FirstEqu},
the projected two-vector field ${\bm Z}^\gamma$ on $\mathbb{R}$ has components 
\[
(Z^\gamma)_t=\gamma_u^t(u)\,\frac{\partial}{\partial u}\,,
\qquad
(Z^\gamma)_x=-\kappa^{-1}\gamma_u^x(u)\,\frac{\partial}{\partial u}\,.
\]
Let us focus on cases where ${\bm Z}^\gamma$ is integrable so as to derive solutions of  the damped telegrapher equation. Note that this is enough to generate solutions via our Hamilton--Jacobi Theorems \ref{theo:ClassicalHJzindependent} and \ref{theo:evolutionHJzindependent}: no integrability of ${\bm Z}$ is required, which significantly simplifies the applications of our models. As both components of ${\bm Z}^\gamma$ are multiples of $\partial_u$,
the integrability of ${\bm Z}^\gamma$ is equivalent to $[(Z^\gamma)_t,(Z^\gamma)_x]=0$, which yields
\[
\frac{\d}{\d u}\left(\frac{\gamma_u^t(u)}{\gamma_u^x(u)}\right)=0
\quad\Longrightarrow\quad
\frac{\gamma_u^t(u)}{\gamma_u^x(u)}=c\ \text{(constant)}\,,
\]
on any domain in $Q$ where $\gamma_u^x\neq 0$, which we assume hereafter for simplicity. Hence, $\gamma_u^t=c\,\gamma_u^x$. Let $\sigma\colon\mathbb R^2\to Q=\mathbb R$ be an integral section of ${\bm Z}^\gamma$, i.e.
\[
    u(t,x):=\sigma(t,x)\,,\qquad
    u_t=\gamma_u^t(u)\,,\qquad u_x=-\kappa^{-1}\gamma_u^x(u)\,.
\]
With $\gamma_u^t=c\,\gamma_u^x$, define (locally, where $\gamma_u^x\neq0$) that 
\(F(u):=\int^u\frac{\d s}{\gamma_u^x(s)}\).
Then,
\[
\partial_t(F(u(t,x)))=\frac{\gamma_u^t(u(t,x))}{\gamma_u^x(u(t,x))}=c\,,\qquad 
\partial_x(F(u(t,x)))=\frac{-\kappa^{-1}\gamma_u^x(u(t,x))}{\gamma_u^x(u(t,x))}=-\kappa^{-1}\,,
\]
and 
\begin{equation}\label{eq:projected_solution_general}
    F(u(t,x))=c\,t-\kappa^{-1}x+C\,,
    \qquad
    u(t,x)=F^{-1}(c\,t-\kappa^{-1}x+C)\,,
\end{equation}
on any domain where $F$ is invertible and for certain $C\in \mathbb{R}$.  We will not analyse the case $\gamma_u^x=0$. 

The induced lifted solutions (whenever the corresponding $z$-independent Hamilton--Jacobi condition \eqref{eq:HJ_standard_telKG_expanded} holds) are $\psi=\gamma\circ\sigma$:
\[
p^t(t,x)=\gamma_u^t(u(t,x))=c\,\gamma_u^x(u(t,x))\,,\qquad
p^x(t,x)=\gamma_u^x(u(t,x))\,,
\]
\[
    z^t(t,x)=\gamma^t(u(t,x))\,,\qquad z^x(t,x)=\gamma^x(u(t,x))\,,
    \qquad \frac{\d\gamma^t}{\d u}=\gamma_u^t\,,\qquad\frac{\d\gamma^x}{\d u}=\gamma_u^x\,,
\]
and from $\gamma_u^t=c\gamma_u^x$ one gets \begin{equation}\label{ConditionGammaTX}
\gamma^t(u)=c\,\gamma^x(u)+C_0.
\end{equation}

We now distinguish the standard and evolution \(z\)-independent Hamilton--Jacobi equations.

\subsubsection{Hamiltonian  \texorpdfstring{$z$}{}-independent HJ equation and a solvable family}

The standard $z$-independent Hamilton--Jacobi condition is
$
h_{\mathrm{tel}}\circ\gamma=0,
$
that is,
\begin{equation}\label{eq:HJ_standard_telKG_expanded}
\frac12\left((\gamma_u^t(u))^2-\kappa^{-1}(\gamma_u^x(u))^2\right)
+\frac{\epsilon}{2}u^2+\lambda \gamma^t(u)=0\,.
\end{equation}
As in the previous discussion, the projected two-vector field is integrable if and only if
\(
\gamma_u^t=c\,\gamma_u^x
\) 
on any domain where $\gamma_u^x\neq0$. Using also $(\gamma^t)'(u)=\gamma_u^t(u)$, differentiating \eqref{eq:HJ_standard_telKG_expanded} with respect to \(u\) yields the necessary compatibility equation
\begin{equation}\label{eq:compat_ODE}
(c^2-\kappa^{-1})\,\gamma_u^x(u)\,(\gamma_u^x)'(u)+\lambda c\,\gamma_u^x(u)+\epsilon\,u=0\,.
\end{equation}
Unlike the evolution case considered below, \eqref{eq:compat_ODE} is only a necessary condition for the $z$-independent standard Hamilton--Jacobi problem, since one must still impose the original equation \eqref{eq:HJ_standard_telKG_expanded}.

To obtain an explicit family of standard solutions, let us consider the Ansatz
\(
\gamma_u^x(u)=a\,u\,,\, \gamma_u^t(u)=c\,a\,u\,,
\)
with \(a\neq0\) being a constant. Then, \eqref{eq:compat_ODE} reduces to
\begin{equation}\label{eq:a_quadratic}
(c^2-\kappa^{-1})\,a^2+\lambda c\,a+\epsilon=0\,.
\end{equation}
When \(c^2\neq \kappa^{-1}\), the corresponding roots are
\[
a=\frac{-\lambda c\pm\sqrt{\lambda^2 c^2-4(c^2-\kappa^{-1})\epsilon}}{2(c^2-\kappa^{-1})}\,.
\]
Integrating the relations $(\gamma^x)'=\gamma_u^x$,$(\gamma^t)'=\gamma_u^t$, and using \eqref{ConditionGammaTX}, one gets
\[
\gamma^x(u)=\frac{a}{2}u^2+C_1\,,\qquad
\gamma^t(u)=\frac{ca}{2}u^2+cC_1+C_0\,,
\]
for two constants $C_0,C_1$.
Substituting these expressions into \eqref{eq:HJ_standard_telKG_expanded} and using \eqref{eq:a_quadratic}, the terms proportional to \(u^2\) cancel and one is left with
\(
\lambda (cC_1+C_0)=0
\).
Hence, if \(\lambda\neq0\), one must take \(C_0+cC_1=0\), whereas if \(\lambda=0\), the constant \(C_0+cC_1\) is arbitrary. Therefore, under this additional condition, the above Ansatz indeed provides a family of solutions of the standard \(z\)-independent classical ${\bm \eta}_{\mathbb{R},2}$-Hamiltonian Hamilton--Jacobi equation.

The projected integral sections are obtained exactly as before. Since
\[
F(u):=\int^u\frac{\d s}{\gamma_u^x(s)}=\frac1a\ln|u|\,,
\]
one has
\[
F(u(t,x))=c\,t-\kappa^{-1}x+C\,,
\qquad
u(t,x)=u_0\,\exp\!\left(a(c\,t-\kappa^{-1}x)\right)\,,
\]
for a constant \(u_0\neq0\). The induced lifted solutions \(\psi=\gamma\circ\sigma\) are then
\[
p^t(t,x)=c\,a\,u(t,x)\,,\qquad p^x(t,x)=a\,u(t,x)\,,
\]
\[
z^t(t,x)=\frac{ca}{2}u(t,x)^2+cC_1+C_0\,,\qquad
z^x(t,x)=\frac{a}{2}u(t,x)^2+C_1\,.
\]

\subsubsection{$z$-independent evolution for the damped telegrapher/Klein--Gordon equation}

Since the initial Hamiltonian $h_{\rm tel}$ is affine in $z^t,z^x$, we can consider the same Hamiltonian as before, and we obtain the same associated damped telegrapher/Klein--Gordon equation for \(u(x,t)\). Nevertheless, the \(z\)-independent $\bm\eta_{\mathbb{R},2}$-evolution HdDW equations \eqref{eq:k-contact-HdDW-Darboux-coordinates2} read
\begin{equation}\label{eq:evolHdDW_telKG}
\begin{dcases}
u_t=\frac{\partial h_{\mathrm{tel}}}{\partial p^t}\circ\psi=p^t\,,
\qquad
u_x=\frac{\partial h_{\mathrm{tel}}}{\partial p^x}\circ\psi=-\kappa^{-1}p^x\,,\\[2mm]
\partial_t p^t+\partial_x p^x
=
-\left(\frac{\partial h_{\mathrm{tel}}}{\partial u}
+p^t\frac{\partial h_{\mathrm{tel}}}{\partial z^t}
+p^x\frac{\partial h_{\mathrm{tel}}}{\partial z^x}\right)\circ\psi
=-(\epsilon u+\lambda p^t)\,,\\[2mm]
\partial_t z^t+\partial_x z^x
=
\left(p^t\frac{\partial h_{\mathrm{tel}}}{\partial p^t}
+p^x\frac{\partial h_{\mathrm{tel}}}{\partial p^x}\right)\circ\psi
=(p^t)^2-\kappa^{-1}(p^x)^2\,.
\end{dcases}
\end{equation}
The only difference is that now \(z^t,z^x\) satisfy a different partial differential equation. Since \(h_{\mathrm{tel}}\) is affine in the variables \(z^t,z^x\), the projected vector field is the same as in the previous approach, and so is the integrability condition. Hence,
\(
\gamma_u^t=c\,\gamma_u^x
\)  
provided \(\gamma_u^x\neq0\). Consequently, the projected solutions are again given by
\[
F(u(t,x))=c\,t-\kappa^{-1}x+C\,,
\qquad
u(t,x)=F^{-1}(c\,t-\kappa^{-1}x+C)\,,
\]
on any domain where \(F\) is invertible, and the lifted solutions are \(\psi=\gamma\circ\sigma\):
\[
p^t(t,x)=\gamma_u^t(u(t,x))=c\,\gamma_u^x(u(t,x))\,,\qquad
p^x(t,x)=\gamma_u^x(u(t,x))\,,
\]
\[
z^t(t,x)=\gamma^t(u(t,x))\,,\qquad z^x(t,x)=\gamma^x(u(t,x))\,,
\qquad \frac{d\gamma^t}{du}=\gamma_u^t\,,\qquad \frac{d\gamma^x}{du}=\gamma_u^x\,.
\]

The difference with the standard formulation lies in the Hamilton--Jacobi equation. In the evolution setting, the \(z\)-independent Hamilton--Jacobi condition is
\[
\dd(h_{\mathrm{tel}}\circ\gamma)=0\,,
\]
that is,
\[
\frac{\dd}{\dd u}\left[
\frac12\left((\gamma_u^t(u))^2-\kappa^{-1}(\gamma_u^x(u))^2\right)
+\frac{\epsilon}{2}u^2+\lambda \gamma^t(u)
\right]=0\,.
\]
Equivalently, in terms of $\gamma^t,\gamma^x$ and $u$,
\[
\frac{\dd}{\dd u}\left[
\frac12\left(\left(\frac{\d\gamma^t}{\d u}\right)^2-\kappa^{-1}\left(\frac{\d\gamma^x}{\d u}\right)^2\right)
+\frac{\epsilon}{2}u^2+\lambda \gamma^t(u)
\right]=0\,.
\]
After substituting \(\gamma_u^t=c\,\gamma_u^x\) and \((\gamma^t)'=\gamma_u^t\), this becomes precisely \eqref{eq:compat_ODE}. Therefore, the previous computation applies directly in the evolution case.

In particular, the Ansatz
\(
\gamma_u^x(u)=a\,u, \gamma_u^t(u)=c\,a\,u,
\) 
with \(a\) satisfying \eqref{eq:a_quadratic}, yields an explicit family of evolution \(z\)-independent Hamilton--Jacobi solutions. Integrating once,
\[
\gamma^x(u)=\frac{a}{2}u^2+C_1\,,\qquad
\gamma^t(u)=\frac{ca}{2}u^2+C_0+cC_1\,,
\]
and hence
\[
u(t,x)=u_0\,\exp\!\left(a(c\,t-\kappa^{-1}x)\right)\,,
\]
\[
p^t(t,x)=c\,a\,u(t,x)\,,\qquad p^x(t,x)=a\,u(t,x)\,,
\]
\[
z^t(t,x)=\frac{ca}{2}u(t,x)^2+C_0+cC_1\,,\qquad
z^x(t,x)=\frac{a}{2}u(t,x)^2+C_1\,.
\]
Moreover, using \eqref{eq:a_quadratic}, one finds
\(
h_{\mathrm{tel}}\circ\gamma=\lambda (C_0+cC_1),
\) 
which is constant, as required. Thus, in the evolution case no further restriction is needed beyond \eqref{eq:a_quadratic}.

The previous \(z\)-independent constructions provide explicit families of solutions of the standard and evolution Hamilton--Jacobi problems, together with their lifted solutions, but they do not yield complete solutions in the sense of Definition~\ref{Def:CompleteZindepen}. Indeed, for \(Q=\mathbb R\) and \(k=2\), a complete \(z\)-independent solution would require a local bundle diffeomorphism
\(
\Phi\colon \mathbb R\times \mathbb R^4\to \mathcal J_{\mathbb R,2},
\)
whereas the Ansatz \(\gamma_u^x(u)=a\,u\), \(\gamma_u^t(u)=c\,a\,u\) is constrained by \eqref{eq:a_quadratic}. In all the $z$-independent Hamilton--Jacobi equations,  $\gamma$ was obtained in terms of three parameters $c,C_0,C_1$, which may satisfy additional conditions. Hence, we cannot construct a complete solution \(
\Phi\colon \mathbb R\times \mathbb R^4\to \mathcal J_{\mathbb R,2},
\) which requires at least four parameters.

\subsubsection{The \texorpdfstring{$z$}{}-dependent Hamiltonian two-contact Hamilton--Jacobi equation}

Let
\(
\gamma\colon \mathbb R\times \mathbb R^2\to \mathcal J_{\mathbb R,2}
\)
be a section of
\(
\operatorname{pr}\colon \mathcal J_{\mathbb R,2}\to \mathbb R\times \mathbb R^2
\),
written as
\[
\gamma(u,z^t,z^x)=\bigl(u,\gamma_u^t(u,z),\gamma_u^x(u,z),z^t,z^x\bigr)\,.
\]
Recall that since $\dim Q=1$ in this case, the maximally coisotropy condition for $\gamma$ is tautologically satisfied. 
Since
\(
\gamma^*\theta^t=\gamma_u^t\,\dd u
\)
and
\(
\gamma^*\theta^x=\gamma_u^x\,\dd u,
\)
Theorem~\ref{thm:zdependentHJ-gauge} yields the following $z$-dependent Hamilton--Jacobi condition for
\(h_{\mathrm{tel}}\): there exists a matrix
\(
C=(C_\alpha^{\beta})_{\alpha,\beta\in\{t,x\}}
\)
of  functions satisfying
\(
C_t^t+C_x^x=-(h_{\mathrm{tel}}\circ\gamma)
\)
and
\begin{equation}\label{eq:HJ-tel-standard}
\partial_u(h_{\mathrm{tel}}\circ\gamma)+\Gamma_t\,\gamma_u^t+\Gamma_x\,\gamma_u^x
+\sum_{\alpha,\beta\in\{t,x\}} C_\alpha^{\beta}\,\partial_{z^\beta}\gamma_u^\alpha=0\,,
\end{equation}
where
\begin{equation}\label{eq:Gamma-tel-standard}
\Gamma_t=\lambda+\gamma_u^t\,\partial_{z^t}\gamma_u^t-\kappa^{-1}\gamma_u^x\,\partial_{z^t}\gamma_u^x\,,
\qquad
\Gamma_x=\gamma_u^t\,\partial_{z^x}\gamma_u^t-\kappa^{-1}\gamma_u^x\,\partial_{z^x}\gamma_u^x\,.
\end{equation}

Fix \(a\neq0\) and parameters \((\mu,\nu)\in\mathbb R^2\). Consider the family of sections
\begin{equation}\label{eq:gamma-zdep-family}
\gamma_{(\mu,\nu)}(u,z^t,z^x)
=
\bigl(u,a\,z^t-\lambda u+\mu,-a\,z^x+\nu,z^t,z^x\bigr).
\end{equation}
For this family,
\(
\partial_{z^t}\gamma_u^t=a,
\)
\(
\partial_{z^x}\gamma_u^x=-a,
\)
and the mixed derivatives vanish. Moreover,
\(
\Gamma_t=a\gamma_u^t+\lambda
\)
and
\(
\Gamma_x=a\kappa^{-1}\gamma_u^x.
\)
A direct computation shows that
\[
\partial_u(h_{\mathrm{tel}}\circ\gamma_{(\mu,\nu)})+\Gamma_t\,\gamma_u^t+\Gamma_x\,\gamma_u^x
=
\epsilon\,u+a\bigl((\gamma_u^t)^2+\kappa^{-1}(\gamma_u^x)^2\bigr).
\]

Hence, \eqref{eq:HJ-tel-standard} is satisfied by choosing, for instance,
\[
C^{\rm st}=
\begin{pmatrix}
A_{\rm st} & 0\\
0 & B_{\rm st}
\end{pmatrix},
\]
with
\[
A_{\rm st}
=
-\frac12\left(
h_{\mathrm{tel}}\circ\gamma_{(\mu,\nu)}
+\frac{\epsilon}{a}u
+(\gamma_u^t)^2+\kappa^{-1}(\gamma_u^x)^2
\right),
\]
\[
B_{\rm st}
=
-\frac12\left(
h_{\mathrm{tel}}\circ\gamma_{(\mu,\nu)}
-\frac{\epsilon}{a}u
-(\gamma_u^t)^2-\kappa^{-1}(\gamma_u^x)^2
\right).
\]
Indeed,
\(
A_{\rm st}+B_{\rm st}=-(h_{\mathrm{tel}}\circ\gamma_{(\mu,\nu)})
\)
and
\(
aA_{\rm st}-aB_{\rm st}
=
-\epsilon u-a\bigl((\gamma_u^t)^2+\kappa^{-1}(\gamma_u^x)^2\bigr).
\) 
Now, define
\[
\Phi\colon \mathbb R\times\mathbb R^2\times\mathbb R^2\to \mathcal J_{\mathbb R,2}\,,
\qquad
\Phi(u,\mu,\nu,z^t,z^x)
=
\bigl(u,a\,z^t-\lambda u+\mu,-a\,z^x+\nu,z^t,z^x\bigr)\,.
\]
Its inverse is given by
\[
\Phi^{-1}(u,p^t,p^x,z^t,z^x)
=
\bigl(u,p^t-a\,z^t+\lambda u,p^x+a\,z^x,z^t,z^x\bigr)\,,
\]
so \(\Phi\) is a global diffeomorphism. Therefore, \(\Phi\) is a complete solution of the
\(z\)-dependent standard two-contact Hamilton--Jacobi problem for \(h_{\mathrm{tel}}\).

\subsubsection{ \texorpdfstring{$z$}{}-dependent evolution two-contact Hamilton--Jacobi equation}

Let us now consider the $z$-dependent evolution Hamilton--Jacobi formulation. By Theorem~\ref{thm:zdependentHJ-evolution}, the corresponding
Hamilton--Jacobi condition is again \eqref{eq:HJ-tel-standard}, but now the matrix
\(
C=(C_\alpha^\beta)
\)
must satisfy the traceless condition
\(
C_t^t+C_x^x=0
\).
For the same family \eqref{eq:gamma-zdep-family}, the previous computation gives
\[
\partial_u(h_{\mathrm{tel}}\circ\gamma_{(\mu,\nu)})+\Gamma_t\,\gamma_u^t+\Gamma_x\,\gamma^x_u
=
\epsilon\,u+a\bigl((\gamma_u^t)^2+\kappa^{-1}(\gamma_u^x)^2\bigr)\,.
\]
Hence, the $z$-dependent $\bm\eta_{\mathbb{R},2}$-evolution  Hamilton--Jacobi equation is satisfied by taking
\[
C^{\rm ev}=
\begin{pmatrix}
A_{\rm ev} & 0\\
0 & -A_{\rm ev}
\end{pmatrix},
\qquad
A_{\rm ev}
=
-\frac12\left(
\frac{\epsilon}{a}u+(\gamma_u^t)^2+\kappa^{-1}(\gamma_u^x)^2
\right).
\]
Indeed, \(C^{\rm ev}\) is traceless and
\[
aA_{\rm ev}-a(-A_{\rm ev})
=
-\epsilon u-a\bigl((\gamma_u^t)^2+\kappa^{-1}(\gamma_u^x)^2\bigr)\,,
\]
and then \eqref{eq:HJ-tel-standard} holds in the $\bm\eta_{\mathbb{R},2}$-evolution sense.

Therefore, the same map \(\Phi\) above provides a complete solution of the \(z\)-dependent
evolution two-contact Hamilton--Jacobi problem for \(h_{\mathrm{tel}}\).

\subsection{A dissipative Hunter--Saxton equation as a \texorpdfstring{$k$}{}-contact Hamiltonian system}

We consider the dissipative Hunter--Saxton equation
\begin{equation}\label{eq:DissHS}
u_{tx}+u\,u_{xx}+\tfrac12 u_x^2+\mu\,u_x=0\,,\qquad \mu\in\mathbb{R}\,,
\end{equation}
which reduces to the classical Hunter--Saxton equation when $\mu=0$. This equation appears, for instance, in the description of orientation waves in nematic liquid crystals. Differentiating \eqref{eq:DissHS} with respect to $x$, one obtains
\begin{equation}
u_{txx}+2u_xu_{xx}+u\,u_{xxx}+\mu\,u_{xx}=0\,,
\end{equation}
which is also common in the literature \cite{HZ_94}. For further information on dissipative Hunter--Saxton-type equations, see \cite{LY_14,Wei_12,HS_91,HZ_94,Len_07,WY_11,LZ_11}.

Let us introduce the Hamiltonian
\begin{equation}\label{HS_diss_ham}
h_{\rm HS}(u,p^t,p^x,z^t,z^x)=-2\,p^t p^x+2u\,(p^t)^2+\mu\,p^t+2\mu  z^t\,,
\end{equation}
where $\mu\ge0$. Since $h_{\rm HS}$ is affine in the variables $z^t,z^x$ (with trivial dependence on $z^x$) and regular in $\mathcal{J}_{\mathbb{R},2}$, the general remarks from the previous sections concerning the $z$-independent standard and evolution HdDW equations apply.

The $z$-independent standard HdDW equations associated with \eqref{HS_diss_ham} are
\begin{equation}\label{eq:HS_standard_HdDW}
u_t=-2p^x+4u\,p^t+\mu\,,\qquad
u_x=-2p^t\,,
\end{equation}
together with
\begin{equation}\label{eq:HS_standard_momentum}
\partial_t p^t+\partial_x p^x=-2(p^t)^2-2\mu  p^t\,.
\end{equation} Since $h_{\rm HS}$ is regular, the 
elimination of the polymomenta from \eqref{eq:HS_standard_HdDW}--\eqref{eq:HS_standard_momentum} gives
\[
u_{tx}+u\,u_{xx}+\tfrac12 u_x^2+\mu u_x=0\,.
\]
Hence, \eqref{eq:DissHS} is recovered.

\subsubsection{ \texorpdfstring{$z$}{}-independent holonomic sections and projected integrable field}

A $z$-independent holonomic section
\(
\gamma\colon \mathbb R\to \mathcal J_{\mathbb R,2}
\)
is locally written as
\begin{equation}\label{eq:HS_gamma_zind}
\gamma(u)=\bigl(u,\gamma_u^t(u),\gamma_u^x(u),\gamma^t(u),\gamma^x(u)\bigr)\,,
\qquad
\gamma_u^t(u)=\frac{\dd\gamma^t}{\dd u}(u)\,,\qquad
\gamma_u^x(u)=\frac{\dd\gamma^x}{\dd u}(u)\,.
\end{equation}
From \eqref{eq:HS_standard_HdDW}, the projected two-vector field on $Q=\mathbb R$ is
\begin{equation}\label{eq:HS_projected_field}
(Z^\gamma)_t=\bigl(-2\gamma_u^x(u)+4u\,\gamma_u^t(u)+\mu\bigr)\frac{\partial}{\partial u}\,,
\qquad
(Z^\gamma)_x=-2\gamma_u^t(u)\frac{\partial}{\partial u}\,.
\end{equation}
Assume that ${\bm Z}^\gamma$ is integrable and that $\gamma_u^t\neq0$ on the domain under consideration. Since both components are multiples of $\partial/\partial u$, integrability is equivalent to
\(
[(Z^\gamma)_t,(Z^\gamma)_x]=0,
\) 
that is,
there exists a constant $c\in\mathbb R$ such that
\begin{equation}\label{eq:HS_integrability_relation}
\gamma_u^x(u)=(2u+c)\gamma_u^t(u)+\frac{\mu}{2}\,.
\end{equation}

Let $\sigma\colon\mathbb R^2\to Q=\mathbb R$ be an integral section of ${\bm Z}^\gamma$, and write $u(t,x):=\sigma(t,x)$. Then,
\begin{equation}\label{eq:HS_projected_system}
u_t=-2\gamma_u^x(u)+4u\,\gamma_u^t(u)+\mu\,,\qquad
u_x=-2\gamma_u^t(u)\,.
\end{equation}
Using \eqref{eq:HS_integrability_relation}, one gets
\(
u_t=-2c\,\gamma_u^t(u),u_x=-2\gamma_u^t(u)\), hence \( u_t=c\,u_x.
\) 
Define locally
\(
F(u):=\int^u\frac{\d s}{-2\gamma_u^t(s)}
\), which gives $\partial_t(F(u))=c$ and $\partial_x(F(u))=1$, and 
\begin{equation}\label{eq:HS_projected_solution}
F(u(t,x))=c\,t+x+C\,,\qquad
u(t,x)=F^{-1}(c\,t+x+C)\,,
\end{equation}
on any domain where $F$ is invertible. The induced lifted solutions are $\psi=\gamma\circ\sigma$, namely
\[
p^t(t,x)=\gamma_u^t(u(t,x))\,,\qquad p^x(t,x)=\gamma_u^x(u(t,x))\,,
\]
\[
z^t(t,x)=\gamma^t(u(t,x))\,,\qquad z^x(t,x)=\gamma^x(u(t,x))\,.
\]
Moreover, \eqref{eq:HS_integrability_relation} implies
\[
\frac{\dd\gamma^x}{\dd u}(u)=(2u+c)\frac{\dd\gamma^t}{\dd u}(u)+\frac{\mu}{2}\,,
\]
hence
\begin{equation}\label{eq:HS_S_relation}
\gamma^x(u)=(2u+c)\gamma^t(u)-2\int^u \gamma^t(s)\,\d s+\frac{\mu}{2}u+C_0\,,
\end{equation}
for some constant $C_0$. Recall that the Hamilton--Jacobi equation must still be imposed in order to obtain actual solutions of the dissipative Hunter--Saxton equation.
\subsubsection{Standard $z$-independent two-contact Hamilton--Jacobi equation}

The standard $z$-independent Hamilton--Jacobi condition is $h_{\rm HS}\circ\gamma=0$, that is,
\begin{equation}\label{eq:HS_standard_HJ}
-2\,\gamma_u^t(u)\gamma_u^x(u)+2u\,(\gamma_u^t(u))^2+\mu\,\gamma_u^t(u)+2\mu\gamma^t(u)=0\,.
\end{equation}
Using  the integrability condition for the projection \eqref{eq:HS_integrability_relation}, this becomes
\begin{equation}\label{eq:HS_standard_HJ_reduced}
-(u+c)\,(\gamma_u^t(u))^2+\mu \gamma^t(u)=0\,.
\end{equation}
This is already the standard  $z$-independent Hamilton--Jacobi equation written in terms of $\gamma_u^t$ and $\gamma^t$. Differentiating \eqref{eq:HS_standard_HJ_reduced} and using $(\gamma^t)'(u)=\gamma_u^t(u)$, one gets the necessary compatibility equation
\begin{equation}\label{eq:HS_standard_gammaODE}
-(\gamma_u^t(u))^2-2(u+c)\gamma_u^t(u)(\gamma_u^t)'(u)+\mu\gamma_u^t(u)=0\,.
\end{equation}
On any domain where $\gamma_u^t\neq0$, this is equivalent to
\begin{equation}\label{eq:HS_standard_gammaODE_linear}
2(u+c)(\gamma_u^t)'(u)+\gamma_u^t(u)-\mu=0\,.
\end{equation}
Hence, on a connected domain where $u+c$ does not vanish, one has
\begin{equation}\label{eq:HS_general_gamma_t}
\gamma_u^t(u)=\mu+\frac{A}{\sqrt{|u+c|}}\,,\qquad A\in\mathbb R\,.
\end{equation}
Using \eqref{eq:HS_integrability_relation}, this yields
\begin{equation}\label{eq:HS_general_gamma_x}
\gamma_u^x(u)=(2u+c)\left(\mu+\frac{A}{\sqrt{|u+c|}}\right)+\frac{\mu}{2}\,.
\end{equation}
Let $\delta:=\operatorname{sgn}(u+c)$ on the chosen connected domain. For $\mu>0$, one may choose
\begin{equation}\label{eq:HS_general_S_t}
\gamma^t(u)=\mu(u+c)+2A\,\delta\sqrt{|u+c|}+\frac{\delta A^2}{\mu}\,,
\end{equation}
which satisfies both $(\gamma^t)'(u)=\gamma_u^t(u)$ and \eqref{eq:HS_standard_HJ_reduced}.

A particularly simple explicit subfamily is obtained by taking $A=0$. Then
\[
\gamma_u^t(u)=\mu\,,\qquad
\gamma_u^x(u)=\mu(2u+c)+\frac{\mu}{2}\,,
\]
and one may choose
\begin{equation}\label{eq:HS_standard_explicit_S}
\gamma^t(u)=\mu(u+c)\,,\qquad
\gamma^x(u)=\mu u^2+\frac{(2c+1)\mu}{2}\,u+C_1\,.
\end{equation}
Since now $F(u)=-u/(2\mu)$, one has that \eqref{eq:HS_projected_solution} yields
\begin{equation}\label{eq:HS_standard_explicit_u}
u(t,x)=u_0-2\mu (x+c\,t)\,
\end{equation}
for some constant $u_0\in\mathbb R$. The corresponding lifted solution is
\[
p^t(t,x)=\mu\,,\qquad
p^x(t,x)=\mu\bigl(2u(t,x)+c\bigr)+\frac{\mu}{2}\,,
\]
\[
z^t(t,x)=\mu\bigl(u(t,x)+c\bigr)\,,\qquad
z^x(t,x)=\mu u(t,x)^2+\frac{2c+1}{2}\mu\,u(t,x)+C_1\,.
\]

\subsubsection{Evolution $z$-independent two-contact Hamilton--Jacobi equation}

In our $z$-independent evolution formulation, only the balance equation for dissipation variables is different from the classical case for a Hamiltonian that is affine in the dissipation variables, so the projected field is still given by \eqref{eq:HS_projected_field}, and the integrability condition \eqref{eq:HS_integrability_relation} and the projected solutions \eqref{eq:HS_projected_solution} remain unchanged. The $z$-independent evolution HdDW equations are
\begin{equation}\label{eq:HS_evol_HdDW}
\begin{dcases}
u_t=-2p^x+4u\,p^t+\mu\,,\qquad u_x=-2p^t\,,\\[2mm]
\partial_t p^t+\partial_x p^x=-2(p^t)^2-2\mu p^t\,,\\[2mm]
\partial_t z^t+\partial_x z^x=-4p^t p^x+4u\,(p^t)^2+\mu p^t\,.
\end{dcases}
\end{equation}
For a $z$-independent holonomic section $\gamma$, its $\bm\eta_{\mathbb{R},2}$-evolution Hamilton--Jacobi condition is
\(
\dd(h_{\rm HS}\circ\gamma)=0
\),
that is,
\begin{equation}\label{eq:HS_evol_zind_HJ}
\frac{\dd}{\dd u}\left[-2\,\gamma_u^t(u)\gamma_u^x(u)+2u\,(\gamma_u^t(u))^2+\mu\,\gamma_u^t(u)+2\mu \gamma^t(u)\right]=0\,.
\end{equation}
Equivalently,
\begin{equation}\label{eq:HS_evol_zind_HJ_constant}
-(u+c)\,(\gamma_u^t(u))^2+\mu \gamma^t(u)=K\,,
\qquad K\in\mathbb R\,.
\end{equation}
Differentiating \eqref{eq:HS_evol_zind_HJ_constant} in terms of $u$, one recovers exactly \eqref{eq:HS_standard_gammaODE}. Hence, the previous computation applies directly in the evolution setting. In particular, for $\mu>0$ and on a connected domain where $u+c$ does not vanish, one may take
\[
\gamma_u^t(u)=\mu+\frac{A}{\sqrt{|u+c|}}\,,\qquad
\gamma_u^x(u)=(2u+c)\left(\mu+\frac{A}{\sqrt{|u+c|}}\right)+\frac{\mu}{2}\,,
\]
together with
\[
\gamma^t(u)=\mu(u+c)+2A\,\delta\sqrt{|u+c|}+\frac{\delta A^2+K}{\mu}\,,
\]
where $\delta=\operatorname{sgn}(u+c)$ on the chosen connected domain. Note that $K$ is now a new free parameter with respect to the ${\bm \eta}_{\mathbb{R},2}$-Hamiltonian for the dissipative Hunter--Saxton equation. For the explicit subfamily $A=0$, one may choose
\begin{equation}\label{eq:HS_evol_explicit_S}
\gamma^t(u)=\mu(u+c)+\frac{K}{\mu}\,,\qquad
\gamma^x(u)=\mu u^2+\frac{(2c+1)\mu}{2}\,u+C_1\,.
\end{equation}
The projected solution is again
\(
u(t,x)=u_0-2\mu(x+c\,t)
\),
while the lifted one reads
\[
p^t(t,x)=\mu\,,\qquad
p^x(t,x)=\mu\bigl(2u(t,x)+c\bigr)+\frac{\mu}{2}\,,
\]
\[
z^t(t,x)=\mu\bigl(u(t,x)+c\bigr)+\frac{K}{\mu}\,,\qquad
z^x(t,x)=\mu u(t,x)^2+\frac{(2c +1)\mu}{2}\,u(t,x)+C_1\,.
\]
A slightly more general explicit subfamily is obtained by taking $A=1$ and, for simplicity, $K=0$. Then
\[
\gamma_u^t(u)=\mu+\frac{1}{\sqrt{|u+c|}}\,,\qquad
\gamma_u^x(u)=(2u+c)\left(\mu+\frac{1}{\sqrt{|u+c|}}\right)+\frac{\mu}{2}\,.
\]
On a connected domain where $u+c$ does not vanish, define again $\delta:={\rm sgn}(u+c)$. Then, one may choose
\[
\gamma^t(u)=\mu(u+c)+2\delta\sqrt{|u+c|}+\frac{\delta}{\mu}\,.
\]
Moreover, integrating $\gamma_u^x(u)$, one gets
\[
\gamma^x(u)=\mu u^2+\frac{(2c+1)\mu}{2}\,u+\frac{4}{3}|u+c|^{3/2}-2c\,\delta\sqrt{|u+c|}+C_1\,.
\]

Since now
\[
F(u)=\int^u\frac{\d s}{-2\gamma_u^t(s)}
=\int^u\frac{\d s}{-2\left(\mu+\frac{1}{\sqrt{|s+c|}}\right)}\,,
\]
a direct computation yields
\[
F(u)=-\frac{u+c}{2\mu}+\frac{\delta}{\mu^2}\sqrt{|u+c|}-\frac{\delta}{\mu^3}\ln\!\bigl(1+\mu\sqrt{|u+c|}\bigr)\,,
\]
up to an irrelevant additive constant. Hence,  
\(
F(u(t,x))=x+ct+C,
\)
and 
\[
-\frac{u(t,x)+c}{2\mu}+\frac{\delta}{\mu^2}\sqrt{|u(t,x)+c|}-\frac{\delta}{\mu^3}
\ln\!\bigl(1+\mu\sqrt{|u(t,x)+c|}\bigr)=x+ct+C\,.
\]

Therefore, if we define
\[
G_\delta(r):=-\frac{\delta}{2\mu}r^2+\frac{\delta}{\mu^2}r-\frac{\delta}{\mu^3}\ln(1+\mu r)\,,\qquad r>0\,,
\]
then $\sqrt{|u(t,x)+c|}=G_\delta^{-1}(x+ct+C)$ and
\[
u(t,x)=\delta\Bigl(G_\delta^{-1}(x+ct+C)\Bigr)^2-c\,.
\]

The corresponding lifted solution is
\[
p^t(t,x)=\mu+\frac{1}{\sqrt{|u(t,x)+c|}}\,,
\qquad  p^x(t,x)=(2u(t,x)+c)\left(\mu+\frac{1}{\sqrt{|u(t,x)+c|}}\right)+\frac{\mu}{2}\,,
\]
\[
z^t(t,x)=\mu\bigl(u(t,x)+c\bigr)+2\delta\sqrt{|u(t,x)+c|}+\frac{\delta}{\mu}\,,
\]
\[
z^x(t,x)=\mu u(t,x)^2+\frac{(2c+1)\mu}{2}\,u(t,x)+\frac{4}{3}|u(t,x)+c|^{3/2}
-2c\,\delta\sqrt{|u(t,x)+c|}+C_1\,.
\]

\subsubsection{\texorpdfstring{$z$}{}-dependent standard and evolution two-contact Hamilton--Jacobi equations}

Let
\(
\gamma\colon \mathbb R\times\mathbb R^2\to \mathcal J_{\mathbb R,2}
\)
be a section of
\(
\operatorname{pr}\colon \mathcal J_{\mathbb R,2}\to \mathbb R\times\mathbb R^2
\),
written as
\begin{equation}\label{eq:HS_gamma_zdep}
\gamma(u,z^t,z^x)=\bigl(u,\gamma_u^t(u,z),\gamma_u^x(u,z),z^t,z^x\bigr)\,,
\qquad z=(z^t,z^x)\,.
\end{equation}
Since $Q=\mathbb R$ is one-dimensional, the coisotropy requirement is automatic. Moreover,
\[
\gamma^*\theta^t=\gamma_u^t\,\d u\,,\qquad \gamma^*\theta^x=\gamma_u^x\,\d u\,,
\qquad
i_{\partial/\partial z^\beta}\d(\gamma^*\theta^\alpha)=\frac{\partial\gamma_u^\alpha}{\partial z^\beta}\,\d u\,.
\]
For \eqref{HS_diss_ham}, one has
\begin{equation}\label{eq:HS_h_gamma}
h_{\rm HS}\circ\gamma=-2\,\gamma_u^t\gamma_u^x+2u\,(\gamma_u^t)^2+\mu\,\gamma_u^t+2\mu z^t\,.
\end{equation}
Furthermore,
\begin{equation}\label{eq:HS_Gamma_t}
\Gamma_t=2\mu+\bigl(-2\gamma_u^x+4u\gamma_u^t+\mu\bigr)\frac{\partial\gamma_u^t}{\partial z^t}-2\gamma_u^t\frac{\partial\gamma_u^x}{\partial z^t}\,,
\end{equation}
and
\begin{equation}\label{eq:HS_Gamma_x}
\Gamma_x=\bigl(-2\gamma_u^x+4u\gamma_u^t+\mu\bigr)\frac{\partial\gamma_u^t}{\partial z^x}-2\gamma_u^t\frac{\partial\gamma_u^x}{\partial z^x}\,.
\end{equation}

Hence the correct $z$-dependent Hamilton--Jacobi equation, both in the standard and in the evolution formulations, is
\begin{equation}\label{eq:HS_zdep_HJ_correct}
\partial_u(h\circ\gamma)+\Gamma_t\,\gamma_u^t+\Gamma_x\,\gamma_u^x+\sum_{\alpha,\beta\in\{t,x\}}C_\alpha^{\beta}\,\frac{\partial\gamma_u^\alpha}{\partial z^\beta}=0\,,
\end{equation}
for a suitable matrix $C=(C_\alpha^{\beta})_{\alpha,\beta\in\{t,x\}}$. The difference between the two cases lies only in the trace condition:
\begin{equation}\label{eq:HS_zdep_trace_standard}
C_t^t+C_x^x=-(h\circ\gamma)
\qquad\text{in the standard case}\,,
\end{equation}
whereas
\begin{equation}\label{eq:HS_zdep_trace_evol}
C_t^t+C_x^x=0
\qquad\text{in the evolution case}\,.
\end{equation}

Let us now exhibit a genuinely $z$-dependent complete solution. Fix $a\neq0$ and parameters $(\rho,\sigma)\in\mathbb R^2$, and define
\begin{equation}\label{eq:HS_gamma_complete_family}
\gamma_{(\rho,\sigma)}(u,z^t,z^x)=\bigl(u,a\,z^t+\rho,-a\,z^x+\sigma,z^t,z^x\bigr)\,.
\end{equation}
Set
\[
T:=a\,z^t+\rho\,,\qquad X:=-a\,z^x+\sigma\,.
\]
Then $\gamma_u^t=T$, $\gamma_u^x=X$, $\partial_{z^t}\gamma_u^t=a$, $\partial_{z^x}\gamma_u^x=-a$, and the mixed $z$-derivatives vanish. Moreover,
\[
h_{\rm HS}\circ\gamma_{(\rho,\sigma)}=-2TX+2uT^2+\mu T+2\mu z^t\,,
\]
\[
\Gamma_t=2\mu+a(-2X+4uT+\mu),\qquad \Gamma_x=2aT\,.
\]
Therefore
\begin{equation}\label{eq:HS_Xi_gamma}
\Xi_\gamma:=\partial_u(h_{\rm HS}\circ\gamma_{(\rho,\sigma)})+\Gamma_t\,T+\Gamma_x\,X=(2+4au)T^2+((2+a)\mu)\,T\,.
\end{equation}

In the standard case, \eqref{eq:HS_zdep_HJ_correct} is satisfied by choosing
\begin{equation}\label{eq:HS_C_standard}
C^{\mathrm{st}}=
\begin{pmatrix}
A_{\mathrm{st}} & 0\\
0 & B_{\mathrm{st}}
\end{pmatrix},
\qquad
A_{\mathrm{st}}=-\frac12\left((h_{\rm HS}\circ\gamma_{(\rho,\sigma)})+\frac{\Xi_\gamma}{a}\right),
\qquad
B_{\mathrm{st}}=-\frac12\left((h_{\rm HS}\circ\gamma_{(\rho,\sigma)})-\frac{\Xi_\gamma}{a}\right).
\end{equation}
Indeed, $A_{\mathrm{st}}+B_{\mathrm{st}}=-(h_{\rm HS}\circ\gamma_{(\rho,\sigma)})$ and $aA_{\mathrm{st}}-aB_{\mathrm{st}}=-\Xi_\gamma$, so \eqref{eq:HS_zdep_HJ_correct} and \eqref{eq:HS_zdep_trace_standard} hold.

In the evolution case, one may instead take
\begin{equation}\label{eq:HS_C_evol}
C^{\mathrm{ev}}=
\begin{pmatrix}
A_{\mathrm{ev}} & 0\\
0 & -A_{\mathrm{ev}}
\end{pmatrix},
\qquad
A_{\mathrm{ev}}=-\frac{\Xi_\gamma}{2a}\,.
\end{equation}
Then, $C^{\mathrm{ev}}$ is traceless and $aA_{\mathrm{ev}}-a(-A_{\mathrm{ev}})=-\Xi_\gamma$, so \eqref{eq:HS_zdep_HJ_correct} and \eqref{eq:HS_zdep_trace_evol} hold.

Finally, define
\begin{equation}\label{eq:HS_complete_map}
\Phi\colon \mathbb R\times\mathbb R^2\times\mathbb R^2\to \mathcal J_{\mathbb R,2}\,,
\qquad
\Phi(u,\rho,\sigma,z^t,z^x)=\bigl(u,a\,z^t+\rho,-a\,z^x+\sigma,z^t,z^x\bigr)\,.
\end{equation}
Its inverse is
\begin{equation}\label{eq:HS_complete_inverse}
\Phi^{-1}(u,p^t,p^x,z^t,z^x)=\bigl(u,p^t-a\,z^t,p^x+a\,z^x,z^t,z^x\bigr)\,.
\end{equation}
Hence, $\Phi$ is a global diffeomorphism. Therefore, $\Phi$ is a complete solution of the $z$-dependent standard and evolution two-contact Hamilton--Jacobi problem for \eqref{HS_diss_ham}. 

\begin{remark}
The main difference between the $z$-dependent evolution and classical approaches is   encoded in the evolution law for the action variables $z^t,z^x$ and in the trace condition imposed on the matrix $C$ in the corresponding $z$-dependent Hamilton--Jacobi equation.
\end{remark}

\subsubsection{A quadratic solution from the \texorpdfstring{$z$}{}-dependent evolution Hamilton--Jacobi equation}

The $z$-dependent Hamilton--Jacobi equation is substantially more flexible than the $z$-independent one. In particular, it allows one to construct sections solving the Hamilton--Jacobi condition whose projected dynamics yields explicit nonlinear solutions of the dissipative Hunter--Saxton equation.

Consider the section $\widetilde{\gamma}:\mathbb{R}\times\mathbb{R}^2\to \mathcal{J}_{\mathbb{R},2}$ given by
\[
\widetilde{\gamma}(u,z^t,z^x)=\left(u,0,\frac{\mu}{2}-z^t,z^t,z^x\right).
\]
Then, $\widetilde{\gamma}_u^t=0$, $\widetilde{\gamma}_u^x=\mu/2-z^t$, $\partial_{z^t}\widetilde{\gamma}_u^x=-1$, and all other $z$-derivatives of $\widetilde{\gamma}_u^t,\widetilde{\gamma}_u^x$ vanish. Moreover,
\[
h_{\mathrm{HS}}\circ\widetilde{\gamma}=2\mu z^t\,,\qquad \Gamma_t=2\mu\,,\qquad \Gamma_x=0\,.
\]
Hence, the left-hand side of the $z$-dependent evolution Hamilton--Jacobi equation is
\[
\partial_u(h_{\mathrm{HS}}\circ\widetilde{\gamma})+\Gamma_t\widetilde{\gamma}_u^t+\Gamma_x\widetilde{\gamma}_u^x+\sum_{\alpha,\beta\in\{t,x\}}C^\beta_\alpha\,\partial_{z^\beta}\widetilde{\gamma}_u^\alpha=-C^t_x\,.
\]
Therefore, the equation is satisfied by any traceless matrix $C=(C^\beta_\alpha)$ with $C^t_x=0$. For instance, one may take
\[
C_{\mathrm{ev}}=
\begin{pmatrix}
1 & 0\\
-2z^t(\mu/2-z^t) & -1
\end{pmatrix}.
\]

Let $\overline{\bm E}^{\widetilde{\gamma}}$ be the projected representative determined by $\widetilde{\gamma}$. Its $u$-components are
\[
(\overline{E}^{\widetilde{\gamma}})_t(u)=\left(\frac{\partial h_{\mathrm{HS}}}{\partial p^t}\circ\widetilde{\gamma}\right)=2z^t\,,\qquad
(\overline{E}^{\widetilde{\gamma}})_x(u)=\left(\frac{\partial h_{\mathrm{HS}}}{\partial p^x}\circ\widetilde{\gamma}\right)=0\,.
\]
Choosing the representative with $(\overline{E}^{\widetilde{\gamma}})_t(z^t)=1$, $(\overline{E}^{\widetilde{\gamma}})_x(z^t)=0$, $(\overline{E}^{\widetilde{\gamma}})_t(z^x)=0$, and $(\overline{E}^{\widetilde{\gamma}})_x(z^x)=-1$, one obtains the commuting vector fields
\[
(\overline{E}^{\widetilde{\gamma}})_t=2z^t\,\frac{\partial}{\partial u}+\frac{\partial}{\partial z^t}\,,
\qquad
(\overline{E}^{\widetilde{\gamma}})_x=-\frac{\partial}{\partial z^x}\,.
\]
Hence, $\overline{{\bm E}}^{\widetilde{\gamma}}$ is integrable. If $\sigma(t,x)=(u(t,x),z^t(t,x),z^x(t,x))$ is an integral section, then
\[
u_t=2z^t\,,\qquad u_x=0\,,\qquad (z^t)_t=1\,,\qquad (z^t)_x=0\,,\qquad (z^x)_t=0\,,\qquad (z^x)_x=-1\,.
\]
Therefore,
\[
z^t(t,x)=t+c_1\,,\qquad z^x(t,x)=-x+c_2\,,
\]
and
\[
u_t(t,x)=2(t+c_1)\,,\qquad u_x(t,x)=0\,.
\]
Thus
\(
u(t,x)=(t+c_1)^2+c_0
\),
 where $c_0,c_1\in\mathbb{R}$. The lifted solution $\psi=\widetilde{\gamma}\circ\sigma$ is
\[
u(t,x)=(t+c_1)^2+c_0\,,\qquad p^t(t,x)=0\,,\qquad p^x(t,x)=\frac{\mu}{2}-t-c_1\,,
\]
\[
z^t(t,x)=t+c_1\,,\qquad z^x(t,x)=-x+c_2\,.
\]
This satisfies the $z$-dependent evolution HdDW equations. In particular, the projected field yields the quadratic solution
\(
u(t,x)=(t+c_1)^2+c_0
\)
of the dissipative Hunter--Saxton equation
\[
u_{tx}+u\,u_{xx}+\frac12 u_x^2+\mu u_x=0\,.
\]

This example shows that the $z$-dependent Hamilton--Jacobi equation may produce admissible sections and projected dynamics.

\subsection{A remark on the \texorpdfstring{$z$}{}-dependent Hamilton--Jacobi equation}

It is worth stressing that, unlike the $z$-independent case, the $z$-dependent Hamilton--Jacobi equation is formulated on sections $\gamma\colon Q\times\mathbb{R}^2\to J_{Q,2}$ and describes projected classes on $Q\times\mathbb{R}^2$. Hence, the conditions that must be satisfied may be weaker than the $z$-independent condition and allows, potentially, for many more admissible sections. In particular, solving the $z$-dependent Hamilton--Jacobi equation does not by itself reconstruct a solution of the initial dissipative Hunter--Saxton equation unless one also chooses a representative of the projected class and an integral section of the corresponding projected two-vector field.  Nevertheless, one can still exhibit explicit $z$-dependent Hamilton--Jacobi solutions. Let us give an additional example. 
Let $\gamma\colon\mathbb{R}\times\mathbb{R}^2\to \mathcal{J}_{\mathbb{R},2}$ be a section of $\mathrm{pr}\colon \mathcal{J}_{\mathbb{R},2}\to \mathbb{R}\times\mathbb{R}^2$, written as
\[
\gamma(u,z^t,z^x)=\bigl(u,\gamma_u^t(u,z),\gamma_u^x(u,z),z^t,z^x\bigr)\,.
\]
For the Hamiltonian
\(
h_{\mathrm{HS}}
\), 
Theorem~\ref{thm:zdependentHJ-gauge} yields the following $z$-dependent standard Hamilton--Jacobi condition: there exists a matrix $C=(C^\beta_\alpha)_{\alpha,\beta\in\{t,x\}}$ satisfying $C_t^t+C_x^x=-(h_{\mathrm{HS}}\circ\gamma)$ and
\[
\partial_u(h_{\mathrm{HS}}\circ\gamma)+\Gamma_t\gamma_u^t+\Gamma_x\gamma_u^x+\sum_{\alpha,\beta\in\{t,x\}}C^\beta_\alpha\,\partial_{z^\beta}\gamma_u^\alpha=0\,,
\]
where
\[
\Gamma_t=\bigl(-2\gamma_u^x+4u\,\gamma_u^t+\mu\bigr)\partial_{z^t}\gamma_u^t-2\gamma_u^t\,\partial_{z^t}\gamma_u^x+2\mu\,,\qquad 
\Gamma_x=\bigl(-2\gamma_u^x+4u\,\gamma_u^t+\mu\bigr)\partial_{z^x}\gamma_u^t-2\gamma_u^t\,\partial_{z^x}\gamma_u^x\,.
\]

Now, fix $c\in\mathbb{R}$ and consider the section
\[
\widetilde{\gamma}(u,z^t,z^x)=\left(u,\mu,\mu(2u+c)+\frac{\mu}{2},z^t,z^x\right).
\]
Then, $\partial_{z^t}\widetilde{\gamma}_u^t=\partial_{z^x}\widetilde{\gamma}_u^t=\partial_{z^t}\widetilde{\gamma}_u^x=\partial_{z^x}\widetilde{\gamma}_u^x=0$, so $\Gamma_t=2\mu$ and $\Gamma_x=0$. Moreover,
\[
h_{\mathrm{HS}}\circ\widetilde{\gamma}=-2\mu^2(u+c)+2\mu z^t\,,
\qquad
\partial_u(h_{\mathrm{HS}}\circ\widetilde{\gamma})=-2\mu^2\,.
\]
Hence,
\[
\partial_u(h_{\mathrm{HS}}\circ\widetilde{\gamma})+\Gamma_t\widetilde{\gamma}_u^t+\Gamma_x\widetilde{\gamma}_u^x
=-2\mu^2+2\mu\cdot\mu=0\,.
\]
Therefore, the $z$-dependent Hamilton--Jacobi equation is satisfied for any matrix $C$ with trace
\[
C_t^t+C_x^x=2\mu^2(u+c)-2\mu z^t\,.
\]
For instance, one may take
\[
C_{\mathrm{st}}=
\begin{pmatrix}
0 & 0\\
0 & 2\mu^2(u+c)-2\mu z^t
\end{pmatrix}.
\]

The corresponding projected field on $\mathbb{R}\times\mathbb{R}^2$ has $u$-components
\[
u_t=\left(\frac{\partial h_{\mathrm{HS}}}{\partial p^t}\circ\widetilde{\gamma}\right)=-2c\mu\,,
\qquad
u_x=\left(\frac{\partial h_{\mathrm{HS}}}{\partial p^x}\circ\widetilde{\gamma}\right)=-2\mu\,.
\]
Hence
\[
u(t,x)=-2\mu(x+ct)+u_0\,,
\qquad u_0\in\mathbb{R}\,,
\]
which indeed solves the dissipative Hunter--Saxton equation
\[
u_{tx}+u\,u_{xx}+\frac12 u_x^2+\mu u_x=0\,,
\]
since $u_{tx}=u_{xx}=0$ and $\frac12 u_x^2+\mu u_x=2\mu^2-2\mu^2=0$.

This example is intentionally simple: it shows that the $z$-dependent Hamilton--Jacobi formalism does yield explicit solutions, while at the same time illustrating that its class-based nature makes it substantially more flexible than the $z$-independent one in many cases.

\subsection{A simple dissipative first-order field model as a \texorpdfstring{$k$}{}-contact Hamiltonian system}

Let us consider the Hamiltonian function
\begin{equation}\label{camassaham}
h(u,p^t,p^x,z^t,z^x)=\frac12\left(u^2+(p^x)^2\right)+\lambda z^t,
\end{equation}
where $\lambda\ge0$ is a dissipation parameter. Since \(R_t(h)=\lambda\) and \(R_x(h)=0\,,\) 
the Hamiltonian is not invariant under the Reeb flow, which reflects the non-conservative character of the dynamics. Note also that this example is not regular, which will allow us to illustrate the features of non-regular Hamiltonians. We will focus on $z$-dependent Hamilton--Jacobi equations.

The standard HdDW equations associated with \eqref{camassaham} are
\begin{equation}\label{eq:wave_standard_HdDW}
u_t=\frac{\partial h}{\partial p^t}=0\,,\qquad
u_x=\frac{\partial h}{\partial p^x}=p^x\,,
\end{equation}
together with
\begin{equation}\label{eq:wave_standard_momentum}
\partial_t p^t+\partial_x p^x
=
-\left(\frac{\partial h}{\partial u}
+p^t\frac{\partial h}{\partial z^t}
+p^x\frac{\partial h}{\partial z^x}\right)
=-(u+\lambda p^t)\,,
\end{equation}
and
\begin{equation}\label{eq:wave_standard_z}
\partial_t z^t+\partial_x z^x
=
p^t\frac{\partial h}{\partial p^t}+p^x\frac{\partial h}{\partial p^x}-h
=
\frac12\bigl((p^x)^2-u^2\bigr)-\lambda z^t\,.
\end{equation}

The evolution HdDW equations associated with \eqref{camassaham} read
\begin{equation}\label{eq:wave_evol_HdDW}
\begin{dcases}
u_t=\frac{\partial h}{\partial p^t}=0\,,\qquad
u_x=\frac{\partial h}{\partial p^x}=p^x\,,\\[2mm]
\partial_t p^t+\partial_x p^x=-\frac{\partial h}{\partial u}-p^t\frac{\partial h}{\partial z^t}-p^x\frac{\partial h}{\partial z^x}=-u-\lambda p^t\,,\\[2mm]
\partial_t z^t+\partial_x z^x
=
p^t\frac{\partial h}{\partial p^t}+p^x\frac{\partial h}{\partial p^x}
=
(p^x)^2\,.
\end{dcases}
\end{equation}
Thus, the projected first-order relation for $u$ and the momentum balance are the same as in the standard case, while the evolution law for the dissipation variables differs. In particular, the dissipation parameter $\lambda$ disappears from the  $z$-dependent balance equation in the dissipation variables. Significantly, one notes that $p^t$ cannot be written in terms of $u(t,x)$ and its derivatives. It can be set freely, but there is no natural second-order system of PDEs for $u(t,x)$ in this case.

\subsubsection{\texorpdfstring{$z$}{}-dependent standard and evolution two-contact Hamilton--Jacobi equations}

Let
\(
\gamma\colon \mathbb R\times\mathbb R^2\to \mathcal J_{\mathbb R,2}
\)
be a section of
\(
\operatorname{pr}\colon \mathcal J_{\mathbb R,2}\to \mathbb R\times\mathbb R^2
\),
written as
\begin{equation}\label{eq:wave_gamma_zdep}
\gamma(u,z^t,z^x)=\bigl(u,\gamma_u^t(u,z),\gamma_u^x(u,z),z^t,z^x\bigr)\,,
\qquad z=(z^t,z^x)\,.
\end{equation}
Since $Q=\mathbb R$ is one-dimensional, the coisotropy requirement is automatic. Moreover,
\[
\gamma^*\theta^t=\gamma_u^t\,\dd u\,,\qquad
\gamma^*\theta^x=\gamma_u^x\,\dd u\,,\qquad
i_{\partial/\partial z^\beta}\d(\gamma^*\theta^\alpha)
=
\frac{\partial\gamma_u^\alpha}{\partial z^\beta}\,\dd u\,.
\]
For \eqref{camassaham}, one has
\begin{equation}\label{eq:wave_h_gamma}
h\circ\gamma=\frac12\left(u^2+(\gamma_u^x)^2\right)+\lambda z^t\,.
\end{equation}
Furthermore,
\begin{equation}\label{eq:wave_Gamma_t}
\Gamma_t=\lambda+\gamma_u^x\frac{\partial\gamma_u^x}{\partial z^t}\,,
\qquad
\Gamma_x=\gamma_u^x\frac{\partial\gamma_u^x}{\partial z^x}\,.
\end{equation}
Hence, the $z$-dependent Hamilton--Jacobi equation takes the form
\begin{equation}\label{eq:wave_zdep_HJ}
\partial_u(h\circ\gamma)+\Gamma_t\,\gamma_u^t+\Gamma_x\,\gamma_u^x
+\sum_{\alpha,\beta\in\{t,x\}}C_\alpha^{\beta}\,\frac{\partial\gamma_u^\alpha}{\partial z^\beta}=0\,,
\end{equation}
for a suitable matrix $C=(C_\alpha^{\beta})_{\alpha,\beta\in\{t,x\}}$. As in the previous examples, the difference between the standard and evolution formulations is encoded in the trace condition:
\begin{equation}\label{eq:wave_trace_standard}
C_t^t+C_x^x=-(h\circ\gamma)
\qquad\text{in the standard case}\,,
\end{equation}
whereas
\begin{equation}\label{eq:wave_trace_evol}
C_t^t+C_x^x=0
\qquad\text{in the evolution case}\,.
\end{equation}

Let us now exhibit a simple $z$-dependent complete solution. Fix $a\neq0$ and parameters $(\rho,\sigma)\in\mathbb R^2$, and define
\begin{equation}\label{eq:wave_gamma_family}
\gamma_{(\rho,\sigma)}(u,z^t,z^x)=\bigl(u,a\,z^t+\rho,-a\,z^x+\sigma,z^t,z^x\bigr)\,.
\end{equation}
Set
$T:=a\,z^t+\rho$ and $X:=-a\,z^x+\sigma$.
Then,
\[
\gamma_u^t=T,\qquad \gamma_u^x=X\,,\qquad
\partial_{z^t}\gamma_u^t=a\,,\qquad
\partial_{z^x}\gamma_u^x=-a\,,
\]
and the mixed $z$-derivatives vanish. Moreover,
\[
h\circ\gamma_{(\rho,\sigma)}=\frac12(u^2+X^2)+\lambda z^t\,,
\qquad
\Gamma_t=\lambda\,,\qquad
\Gamma_x=-aX\,.
\]
Therefore,
\begin{equation}\label{eq:wave_Xi_gamma}
\Xi_\gamma:=\partial_u(h\circ\gamma_{(\rho,\sigma)})+\Gamma_t\,T+\Gamma_x\,X
=
u+\lambda T-aX^2.
\end{equation}

In the standard case, \eqref{eq:wave_zdep_HJ} is satisfied by choosing
\begin{equation}\label{eq:wave_C_standard}
C^{\mathrm{st}}=
\begin{pmatrix}
A_{\mathrm{st}} & 0\\
0 & B_{\mathrm{st}}
\end{pmatrix},
\qquad
A_{\mathrm{st}}=-\frac12\left((h\circ\gamma_{(\rho,\sigma)})+\frac{\Xi_\gamma}{a}\right),
\qquad
B_{\mathrm{st}}=-\frac12\left((h\circ\gamma_{(\rho,\sigma)})-\frac{\Xi_\gamma}{a}\right).
\end{equation}
Indeed,
\[
A_{\mathrm{st}}+B_{\mathrm{st}}=-(h\circ\gamma_{(\rho,\sigma)})\,,
\qquad
aA_{\mathrm{st}}-aB_{\mathrm{st}}=-\Xi_\gamma\,,
\]
so \eqref{eq:wave_zdep_HJ} and \eqref{eq:wave_trace_standard} hold.

In the evolution case, one may instead take
\begin{equation}\label{eq:wave_C_evol}
C^{\mathrm{ev}}=
\begin{pmatrix}
A_{\mathrm{ev}} & 0\\
0 & -A_{\mathrm{ev}}
\end{pmatrix},
\qquad
A_{\mathrm{ev}}=-\frac{\Xi_\gamma}{2a}\,.
\end{equation}
Then $C^{\mathrm{ev}}$ is traceless and
\[
aA_{\mathrm{ev}}-a(-A_{\mathrm{ev}})=-\Xi_\gamma\,,
\]
so \eqref{eq:wave_zdep_HJ} and \eqref{eq:wave_trace_evol} hold.

Finally, define
\begin{equation}\label{eq:wave_complete_map}
\Phi\colon \mathbb R\times\mathbb R^2\times\mathbb R^2\to \mathcal J_{\mathbb R,2}\,,
\qquad
\Phi(u,\rho,\sigma,z^t,z^x)=\bigl(u,a\,z^t+\rho,-a\,z^x+\sigma,z^t,z^x\bigr)\,.
\end{equation}
Its inverse is given by
\begin{equation}\label{eq:wave_complete_inverse}
\Phi^{-1}(u,p^t,p^x,z^t,z^x)=\bigl(u,p^t-a\,z^t,p^x+a\,z^x,z^t,z^x\bigr)\,.
\end{equation}
Hence $\Phi$ is a global diffeomorphism. Therefore, $\Phi$ is a complete solution for both the standard and evolution $z$-dependent two-contact Hamilton--Jacobi problem associated with  \eqref{camassaham}.  

\begin{remark}
The main difference between the $z$-dependent standard and evolution formulations in this example is not the family of sections  itself, but the trace condition imposed on the matrix $C$ in the corresponding $z$-dependent Hamilton--Jacobi equation. In this sense, both approaches admit the same complete family of sections, while the role of dissipation is encoded differently in the associated projected gauge terms.
\end{remark}
\subsection{\texorpdfstring{$k$}{}-contact Hamilton--Jacobi equations for a covariant EIT-type thermodynamic model}\label{subsec:thermo_EIT_reducedDTT}

This example shows that the $k$-contact formalism naturally accommodates a thermodynamic field-theory interpretation. In the present case, the geometric structure is not used to describe time evolution in the mechanical sense, but rather a  relativistic system of constitutive relations, balance laws, and entropy production equations. In this sense, our application is partially based on the $k$-contact formalism for thermodynamic relativistic systems in \cite{HLM_26}. More precisely, our model may be understood as a covariant version of extended irreversible thermodynamics, where fluxes are regarded as independent non-equilibrium variables, and simultaneously as a reduced divergence-type sector, in the sense that the dynamics is encoded by balance laws governed by a generating Hamiltonian. This section is rather a mathematical formalism showing potential further uses in physics.  

Consider the manifold $M=\mathbb{R}^{k^2+4k+2}$ with coordinates $(\widetilde{S}^\mu,N^\mu,T^{\lambda\mu},P^\mu,\xi,\beta_\lambda,V)$, where $\mu,\lambda=1,\dots,k$. We define $\widetilde{S}^\mu=S^\mu+\xi N^\mu-\sum_{\lambda=1}^k\beta_\lambda  T^{\lambda\mu}$, where the $S^\mu$ are entropy fluxes, $N^\mu$ as particle-number fluxes, $T^{\lambda\mu}$ as generalised stress-energy variables which are considered to be general, namely not necessarily symmetric, $P^\mu$ as pressure-type or dissipative fluxes, and $(\xi,\beta_\lambda,V)$ as configuration thermodynamic variables. Let 
\[\bm \eta_{\rm Ther}=\sum_{\mu=1}^k \left(\d \widetilde{S}^\mu-N^\mu \d\xi+\sum_{\lambda=1}^k T^{\lambda\mu}\d\beta_\lambda-P^\mu \d V\right)\otimes e_\mu\,.\] 
This is of Darboux type after the identifications 
\[z^\mu=\widetilde{S}^\mu\,, \qquad q^i=(\xi,\beta_\lambda,V)\,,\qquad p_i^\mu=(N^\mu,-T^{\lambda\mu},P^\mu)\,\qquad i=1,\ldots,n,\quad \mu,\lambda=1,\ldots,k.
\]
Hence, $(M,{\bm \eta}_{\rm Ther})$ is a co-oriented $k$-contact manifold, with Reeb vector fields $R_\mu=\partial/\partial \widetilde{S}^\mu$, $\mu=1,\dots,k$, and with polarisation $\mathcal{V}=\langle \partial/\partial N^\mu,\partial/\partial T^{\lambda\mu},\partial/\partial P^\mu\rangle_{\lambda,\mu=1,\ldots,k}$ and $n=k+2$. Therefore, $(M,\bm \eta_{\rm Ther},\mathcal{V})$ is a polarised co-oriented $k$-contact manifold. This is not the standard description for thermodynamic systems in \cite{HLM_26}, but both are equivalent, as they are the same up to a change of variables from $\widetilde{S}^\mu$ to $S^\mu$ and leaving invariant remaining  coordinates. The reason for the change is that our present form is in canonical $k$-contact coordinates, which makes it easier to apply the HdDW formalism, which takes a particular form in these coordinates. Moreover, note that this formalism, compatible with the one in \cite{HLM_26}, produces field equations with an energy-momentum tensor, $T^{\mu\nu}$, which is the transposed of the standard one, $T_{\rm Phys}^{\nu\mu}=T^{\mu\nu}$, in physics. Of course, in models with a symmetric energy-momentum tensor, this difference has no relevance.  

Let $h=h(\xi,\beta_\lambda,V,N^\mu,T^{\lambda\mu},P^\mu)$ be a Hamiltonian such that $\partial h/\partial \widetilde{S}^\mu=0$ for every $\mu=1,\dots,k$. This assumption means that the Hamiltonian does not depend on the entropy-flux variables, while the constitutive content is encoded in its dependence on the intensive variables and on the fluxes.

\subsection{Standard and evolution thermodynamic HdDW equations}

Let ${\bm X}=(X_1,\dots,X_k)$ be an $\bm\eta_{\rm Ther}$-Hamiltonian $k$-vector field for $
(M,\bm\eta_{\rm Ther},h)$. The local form of the $k$-contact Hamilton--De Donder--Weyl equations gives
\[
\frac{\partial \xi}{\partial x^\mu}=\frac{\partial h}{\partial N^\mu}\,,\qquad
\frac{\partial \beta_\lambda}{\partial x^\mu}=-\frac{\partial h}{\partial T^{\lambda\mu}}\,,\qquad
\frac{\partial V}{\partial x^\mu}=\frac{\partial h}{\partial P^\mu}\,,\qquad \mu=1,\ldots,k,
\]
together with
\[
\sum_{\mu=1}^k\frac{\partial N^\mu}{\partial x^\mu}=-\frac{\partial h}{\partial \xi}\,,\qquad
\sum_{\mu=1}^k\frac{\partial T^{\lambda\mu}}{\partial x^\mu}=\frac{\partial h}{\partial \beta_\lambda}\,,\qquad
\sum_{\mu=1}^k\frac{\partial P^\mu}{\partial x^\mu}=-\frac{\partial h}{\partial V}\,,\qquad \lambda=1,\ldots,k,
\]
and
\[
\sum_{\mu=1}^k\frac{\partial \widetilde{S}^\mu}{\partial x^\mu}
=
\sum_{\mu=1}^k\left(
N^\mu\frac{\partial h}{\partial N^\mu}
+\sum_{\lambda=1}^k T^{\lambda\mu}\frac{\partial h}{\partial T^{\lambda\mu}}
+P^\mu\frac{\partial h}{\partial P^\mu}
\right)-h\,.
\]
Hence, the $k$-contact Hamiltonian formalism yields a first-order thermodynamic field theory of balance-law type, whose constitutive closure is determined by $h$.

A particularly useful class of Hamiltonians is $h=U(\xi,\beta_\lambda,V)+\Phi(N^\mu,T^{\lambda\mu},P^\mu)$. In this case,
\[
\frac{\partial \xi}{\partial x^\mu}=\frac{\partial\Phi}{\partial N^\mu}\,,\qquad
\frac{\partial \beta_\lambda}{\partial x^\mu}=-\frac{\partial\Phi}{\partial T^{\lambda\mu}}\,,\qquad
\frac{\partial V}{\partial x^\mu}=\frac{\partial\Phi}{\partial P^\mu}\,,\qquad \lambda,\mu=1,\ldots,k,
\]
while
\[
\sum_{\mu=1}^k\frac{\partial N^\mu}{\partial x^\mu}=-\frac{\partial U}{\partial \xi}\,,\qquad
\sum_{\mu=1}^k\frac{\partial T^{\lambda\mu}}{\partial x^\mu}=\frac{\partial U}{\partial \beta_\lambda}\,,\qquad
\sum_{\mu=1}^k\frac{\partial P^\mu}{\partial x^\mu}=-\frac{\partial U}{\partial V}\,\qquad \lambda=1,\ldots,k.
\]
When $U=0$, the purely balance-constitutive system reads $\sum_{\mu=1}^k\partial_\mu N^\mu=0$, $\sum_{\mu=1}^k\partial_\mu T^{\lambda\mu}=0$, and $\sum_{\mu=1}^k\partial_\mu P^\mu=0$ for $\lambda=1,\ldots,k$, which is the simplest covariant EIT-type instance within this framework. Note that the equations concerning $N^\mu$ and the $T^{\lambda\mu}$ are quite ubiquitous in the literature \cite{HLM_26,IsraelStewart1979}. 

If ${\bm E}=(E_1,\dots,E_k)$ is an ${\bm \eta}_{\rm Ther}$-evolution $k$-vector field for the same Hamiltonian $h$, then the equations for $\xi$, $\beta_\lambda$, $V$, $N^\mu$, $T^{\lambda\mu}$, and $P^\mu$ remain the same, while the entropy-flux equation can be obtained from 
\[
\sum_{\mu=1}^k\frac{\partial \widetilde{S}^\mu}{\partial x^\mu}
=
\sum_{\mu=1}^k\left(
N^\mu\frac{\partial h}{\partial N^\mu}
+\sum_{\lambda=1}^k T^{\lambda\mu}\frac{\partial h}{\partial T^{\lambda\mu}}
+P^\mu\frac{\partial h}{\partial P^\mu}
\right).
\]
Thus, as in the previous examples, the distinction between the standard and evolution formalisms affects only the entropy equation.

\subsubsection{The \texorpdfstring{$z$}{}-independent Hamilton--Jacobi problem}

Let $Q_{\mathrm{th}}=\mathbb{R}^{k+2}$ with global coordinates $(\xi,\beta_\lambda,V)$, and let $\gamma\colon Q_{\mathrm{th}}\to M$ be a section of the projection $(\xi,\beta_\lambda,V,\widetilde{S}^\mu,N^\mu,T^{\lambda\mu},P^\mu)\mapsto (\xi,\beta_\lambda,V)$. Writing
\[
\gamma(\xi,\beta,V)=\bigl(\widetilde{S}^\mu(\xi,\beta,V),N^\mu(\xi,\beta,V),T^{\lambda\mu}(\xi,\beta,V),P^\mu(\xi,\beta,V),\xi,\beta_\lambda,V\bigr)\,,
\]
the condition $\gamma^*\bm\eta=0$ is equivalent to $\sum_{\mu=1}^k(\d \widetilde{S}^\mu-N^\mu \d\xi+\sum_{\lambda=1}^k T^{\lambda\mu}\d\beta_\lambda-P^\mu \d V)\otimes e_\mu=0$, namely to
\[
N^\mu=\frac{\partial \widetilde{S}^\mu}{\partial \xi},\qquad
T^{\lambda\mu}=-\frac{\partial \widetilde{S}^\mu}{\partial \beta_\lambda},\qquad
P^\mu=\frac{\partial \widetilde{S}^\mu}{\partial V},\qquad \mu,\lambda=1,\ldots,k.
\]
Hence, the functions $\widetilde{S}^\mu=\widetilde{S}^\mu(\xi,\beta,V)$ form a generating family for the thermodynamic section.

The standard $z$-independent Hamilton--Jacobi equation is simply $h\circ\gamma=0$, i.e.
\[
h\!\left(
\xi,\beta_\lambda,V,
\frac{\partial \widetilde{S}^\mu}{\partial \xi},
-\frac{\partial \widetilde{S}^\mu}{\partial \beta_\lambda},
\frac{\partial \widetilde{S}^\mu}{\partial V}
\right)=0.
\]
It makes sense physically to assume solutions for hydrodynamic systems to be included in Legendrian submanifolds, which forces solutions to be included in submanifolds with Hamiltonian equal to zero  \cite{HLM_26}. In this sense, the $z$-independent Hamilton--Jacobi theory for Hamiltonian $k$-vector fields can always be applied to study thermodynamic systems.

In the evolution case, one gets $\d(h\circ\gamma)=0$, or equivalently
\[
h\!\left(
\xi,\beta_\lambda,V,
\frac{\partial \widetilde{S}^\mu}{\partial \xi},
-\frac{\partial \widetilde{S}^\mu}{\partial \beta_\lambda},
\frac{\partial \widetilde{S}^\mu}{\partial V}
\right)=\mathrm{const}.
\]
Again, solutions with Hamiltonian equal to zero, which are physically motivated, can always be studied in this approach. 

\subsubsection{The \texorpdfstring{$z$}{}-dependent Hamilton--Jacobi problem}

Let $\gamma\colon Q_{\mathrm{th}}\times\mathbb{R}^k\to M$ be a section of the projection
\[
(\widetilde{S}^\mu,N^\mu,T^{\lambda\mu},P^\mu,\xi,\beta_\lambda,V)\longmapsto (\xi,\beta_\lambda,V,\widetilde{S}^\mu),
\]
and denote by $(\widetilde{s}^\mu)$ the induced coordinates on the copy of $\mathbb{R}^k$ in the base. Set $\gamma^*\theta^\alpha=N^\alpha \d\xi-\sum_{\lambda=1}^k T^{\lambda\alpha}\d\beta_\lambda+P^\alpha \d V$, $\alpha=1,\dots,k$, and assume that ${\rm Im}\,\gamma$ is maximally coisotropic. Then the general $z$-dependent Hamilton--Jacobi equation reads
\[
\dd_Q(h\circ\gamma)+\sum_{\beta=1}^k\Gamma_\beta\,\gamma^*\theta^\beta+\sum_{\alpha,\beta=1}^k C^\beta_{\alpha}\,i_{\partial/\partial \tilde{s}^\beta}\dd(\gamma^*\theta^\alpha)=0\,,
\]
where $\Gamma_\beta=\d h|_\gamma(\T\gamma(\partial/\partial \widetilde{s}^\beta))$, for $\beta=1,\dots,k$. In the standard case, the matrix $C=(C^\beta_\alpha)$ satisfies $\sum_{\alpha=1}^k C^\alpha_\alpha=-(h\circ\gamma)$, whereas in the evolution case one has $\sum_{\alpha=1}^k C^\alpha_\alpha=0$.

Expanding this equation in the coordinates $(\xi,\beta_\lambda,V)$, one obtains
\[
\frac{\partial(h\circ\gamma)}{\partial \xi}
+\sum_{\beta=1}^k\Gamma_\beta N^\beta
+\sum_{\alpha,\beta=1}^k C^\beta_\alpha\frac{\partial N^\alpha}{\partial \widetilde{s}^\beta}=0\,,
\]
\[
\frac{\partial(h\circ\gamma)}{\partial \beta_\lambda}
-\sum_{\beta=1}^k\Gamma_\beta T^{\lambda\beta}
-\sum_{\alpha,\beta=1}^k C^\beta_\alpha\frac{\partial T^{\lambda\alpha}}{\partial \widetilde{s}^\beta}=0\,,
\]
\[
\frac{\partial(h\circ\gamma)}{\partial V}
+\sum_{\beta=1}^k\Gamma_\beta P^\beta
+\sum_{\alpha,\beta=1}^k C^\beta_\alpha\frac{\partial P^\alpha}{\partial \widetilde{s}^\beta}=0\,.
\]
These are the $z$-dependent Hamilton--Jacobi equations in thermodynamic variables.

The previous construction gives a $k$-contact realisation of a covariant EIT-type model, since the flux variables $(N^\mu,T^{\lambda\mu},P^\mu)$ are treated as independent non-equilibrium fields and the entropy fluxes $S^\mu$ are incorporated as related to genuine geometric variables. At the same time, since the field equations are balance laws closed by a Hamiltonian potential $h$, the model may be regarded as a reduced divergence-type sector. In particular, the choice $h=\Phi(N^\mu,T^{\lambda\mu},P^\mu)$ yields a purely balance-constitutive system, whereas $h=U(\xi,\beta_\lambda,V)+\Phi(N^\mu,T^{\lambda\mu},P^\mu)$ introduces source terms in the balance equations and allows for non-equilibrium thermodynamic couplings without modifying the underlying $k$-contact geometry.
\section{Conclusions and Outlook}\label{sec:ConclusionsAndOutlook}
In this work, we have developed a Hamilton--Jacobi theory for non-conservative classical field theories within the framework of $k$-contact geometry. A central contribution has been the introduction of evolution $k$-contact $k$-vector fields and their associated Hamilton--De Donder--Weyl equations, which provide a genuine counterpart to the standard $k$-contact Hamiltonian formalism. This has led to two genuinely different Hamilton--Jacobi theories: a $z$-independent approach, based on projections onto the configuration space, and a $z$-dependent approach, where the contact variables are incorporated into the base manifold. In the latter case, the theory is naturally formulated in terms of projected classes of $k$-vector fields, reflecting the gauge freedom coming from $\ker\chi$.

From the geometric viewpoint, the paper clarifies several structural features of $k$-contact Hamiltonian systems. First, unlike the contact case, the map $\chi$ is no longer an isomorphism for $k>1$, which explains the non-uniqueness of Hamiltonian $k$-vector fields associated with a given Hamiltonian function and the role of gauge directions taking values in $\ker\chi$. Second, for Hamiltonians that are affine in the variables $z^\alpha$, the projected equations for the configuration variables are the same for the standard and evolution formulations, so the difference between both theories is encoded only in the equations for the contact variables. Third, in the canonical model, holonomic sections are exactly the Legendrian ones, while in the $z$-dependent setting the natural objects are maximally coisotropic submanifolds. This also motivates the notions of complete solution introduced in both approaches and the $z$-dependent notion of integrable $k$-contact Hamiltonian system in terms of invariant foliations by maximally coisotropic submanifolds.

The general theory has been illustrated through several representative examples, including the damped telegrapher/Klein--Gordon equation, the dissipative Hunter--Saxton equation, a simple dissipative first-order field model, and a covariant EIT-type thermodynamic model. These applications show that the formalism is flexible enough to describe both scalar dissipative PDEs and field-theoretical systems governed by balance laws and constitutive relations. In particular, the damped telegrapher/Klein--Gordon and dissipative Hunter--Saxton examples exhibit explicit $z$-independent and $z$-dependent solutions and show how complete solutions arise naturally in the $z$-dependent setting. The simple dissipative first-order model shows that the same projected dynamics may coexist with different geometric dissipative mechanisms. The thermodynamic example shows that the theory also applies to balance laws and constitutive relations in a genuinely field-theoretical setting. In addition, the analysis of Hamiltonians with quadratic dependence on the dissipative variables shows that the scope of the formalism is broader than the affine case usually considered in the literature.

Some questions remain open. More precisely, it would be desirable to formulate a more complete notion of integrability for $k$-contact Hamiltonian systems, relating invariant foliations, complete solutions, and reconstruction procedures in a systematic way. It would also be interesting to characterise more explicitly when the standard and evolution theories induce the same projected dynamics, and to determine intrinsic criteria ensuring the existence of complete solutions in the $z$-dependent and $z$-independent settings.

Other natural directions concern the extension of the theory to singular or constrained Hamiltonians, reduction procedures, gauge symmetries, higher-order field theories, and systems with nontrivial boundary conditions. It also seems worthwhile to analyse in greater depth the relation between $k$-contact, $k$-symplectic, and multisymplectic formalisms, especially in situations where contactification or dissipative deformations of conservative models play a role. The development of Hamilton--Jacobi theories in these broader settings, together with further physically meaningful applications, will be the subject of future work.

\bibliographystyle{abbrv}
\bibliography{references.bib}

\end{document}